\newcommand{\longversion}{true}
\title{Strong Equivalence of Qualitative Optimization Problems}
\author{Wolfgang Faber\\
University of Calabria\\
{\rm faber@mat.unical.it}\\
\And
Miroslaw Truszczynski\\
University of Kentucky\\
{\rm mirek@cs.uky.edu}\\
\And
Stefan Woltran\\
Vienna University of Technology\\
{\rm woltran@dbai.tuwien.ac.at}}
\begin{document}
\maketitle

\def\prop#1#2#3{
\noindent
{\bf Proposition~\ref{#1}{#2}} {\it{#3}}}

\def\thrm#1#2#3{
\noindent
{\bf Theorem~\ref{#1}{#2}} {\it{#3}}}

\def\lemm#1#2#3{
\noindent
{\bf Lemma~\ref{#1}{#2}} {\it{#3}}}

\newcommand{\nop}[1]{}
\newcommand{\citeN}[1]{\citeauthor{#1} \shortcite{#1}}

\newcommand{\card}[1]{\ensuremath{|#1|}}

\newcommand{\powerset}[1]{\ensuremath{2^{#1}}}

\newcommand{\pair}[2]{(#1,#2)}

\newcommand{\Pol}{\ensuremath{\rm P}}
\newcommand{\NP}{\ensuremath{\rm NP}}
\newcommand{\CONP}{\mbox{\rm co-}\NP }
\newcommand{\SigmaP}[1]{\ensuremath{\Sigma_{#1}^P}}
\newcommand{\PiP}[1]{\ensuremath{{\Pi}_{#1}^{P}}}
\newcommand{\DeltaP}[1]{\ensuremath{\Delta_{#1}^P}}

\newcommand{\program}{\ensuremath{P}}
\newcommand{\qprogram}{\ensuremath{Q}}
\newcommand{\rprogram}{\ensuremath{R}}

\renewcommand{\P}{\program}
\newcommand{\Q}{\qprogram}
\newcommand{\R}{\rprogram}

\newcommand{\dff}{\ensuremath{\mathit{diff}}}
\newcommand{\hd}{\ensuremath{\mathit{hd}}}
\newcommand{\head}[1]{\ensuremath{\hd(#1)}}
\newcommand{\body}[1]{\ensuremath{\mathit{bd}(#1)}}
\newcommand{\pbody}[1]{\ensuremath{\mathit{body}^+(#1)}}
\newcommand{\nbody}[1]{\ensuremath{\mathit{body}^-(#1)}}
\newcommand{\GR}[2]{\ensuremath{R_{#1}(#2)}}
\newcommand{\LPif}{\leftarrow}
\newcommand{\la}{\LPif}

\newcommand{\nafo}[0]{\ensuremath{\mathit{not}}}
\newcommand{\naf}[0]{\nafo\;}
\newcommand{\reduct}[2]{\ensuremath{#1^{#2}}}
\newcommand{\CnO}[0]{\ensuremath{\mathit{Cn}}}
\newcommand{\Cn}[1]{\ensuremath{\CnO(#1)}}

\newcommand{\ASsym}[0]{\mathit{AS}}
\newcommand{\AS}[1]{\ASsym(#1)}

\newcommand{\Mod}[1]{\ensuremath{\mathit{Mod(#1)}}}
\newcommand{\Md}{\mathit{Mod}}

\newcommand{\SE}[1]{\ensuremath{\mathit{SE}(#1)}}
\newcommand{\UE}[1]{\ensuremath{\mathit{UE}(#1)}}
\newcommand{\UEp}[2]{\ensuremath{\mathit{UE}^{#1}(#2)}}
\newcommand{\HT}[1]{\ensuremath{\mathit{Mod_{HT}}(#1)}}

\newcommand{\atoms}[1]{\ensuremath{\mathit{atoms}(#1)}}
\newcommand{\progs}[1]{\ensuremath{\mathit{progs}(#1)}}
\newcommand{\facts}[1]{\ensuremath{\mathit{facts}(#1)}}

\newcommand{\pref}{\ensuremath{\pi}}
\newcommand{\vph}{\ensuremath{\varphi}}
\newcommand{\acc}{\ensuremath{\mu}}

\newcommand{\gensym}{\ensuremath{g}}
\newcommand{\prefsym}{\ensuremath{<}}

\newcommand{\pprogram}{\ensuremath{{\cal P}}}
\newcommand{\qpprogram}{\ensuremath{{\cal Q}}}
\newcommand{\rpprogram}{\ensuremath{{\cal R}}}

\newcommand{\PP}{\pprogram}
\newcommand{\QQ}{\qpprogram}
\newcommand{\RR}{\rpprogram}

\newcommand{\pprogramgen}[1]{\ensuremath{{#1}^{\gensym{}}}}
\newcommand{\pprogrampref}[1]{\ensuremath{{#1}^{\prefsym}}}
\newcommand{\pprogrampair}[2]{\ensuremath{\pair{#1}{#2}}}
\newcommand{\pprogrampairbase}[1]{\ensuremath{\pprogrampair{\pprogramgen{#1}}{\pprogrampref{#1}}}}

\newcommand{\Pgen}{\pprogramgen{\P}}
\newcommand{\Ppref}{\pprogrampref{\P}}
\newcommand{\Qgen}{\pprogramgen{\Q}}
\newcommand{\Qpref}{\pprogrampref{\Q}}
\newcommand{\Rgen}{\pprogramgen{\R}}
\newcommand{\Rpref}{\pprogrampref{\R}}

\newcommand{\LPifrank}[1]{\stackrel{#1}{\leftarrow}}

\newcommand{\headi}[2]{\ensuremath{\hd_{#1}(#2)}}

\newcommand{\PAS}[1]{\ensuremath{\mathit{PAS}(#1)}}
\newcommand{\PS}{\ensuremath{\mathit{PAS}}}

\newcommand{\ASO}{\emph{ASO}\xspace}
\newcommand{\RASO}{\emph{RASO}\xspace}

\newcommand{\betterrel}[1]{\ensuremath{>_{#1}}}
\newcommand{\bettereqrel}[1]{\ensuremath{\geq_{#1}}}

\newcommand{\bettereq}[3]{\ensuremath{#1 \geq_{#3} #2}}
\newcommand{\better}[3]{\ensuremath{#1 >_{#3} #2}}
\newcommand{\nbetter}[3]{\ensuremath{#1 \not >_{#3} #2}}
\newcommand{\nbettereq}[3]{\ensuremath{#1 \not \geq_{#3} #2}}

\newcommand{\restr}[2]{\ensuremath{#1\mid_{#2}}}

\newcommand{\prefschema}{\ensuremath{\sigma}}

\newcommand{\equivk}{\equiv^\prefschema}

\newcommand{\rank}[1]{\ensuremath{\mathit{rank}(#1)}}

\newcommand{\leftsup}[2]{{\vphantom{#2}}^{#1}{#2}}

\newcommand{\degree}[2]{\ensuremath{v_{#1}(#2)}}

\newcommand{\sequivmacro}[1]{\ensuremath{\equiv_s^{#1}}}

\newcommand{\sequiv}{\sequivmacro{}}
\newcommand{\sequivgen}{\sequivmacro{\gensym}}
\newcommand{\sequivpref}{\sequivmacro{\prefsym}}

\newcommand{\meta}{\ensuremath{\;\widetilde{\equiv}\;}}

\renewcommand{\L}{{\cal L}_{{\cal U}}}
\newcommand{\Lpref}{{\cal L}^\mathit{pref}_{{\cal U}}}
\newcommand{\Lrpref}{{\cal L}^\mathit{rpref}_{{\cal U}}}
\newcommand{\Lgen}{{\cal L}^\mathit{gen}_{{\cal U}}}
\newcommand{\Laso}{{\cal L}^\mathit{aso}_{{\cal U}}}
\newcommand{\Lraso}{{\cal L}^\mathit{raso}_{{\cal U}}}

\newtheorem{definition}{Definition} %
\newtheorem{theorem}{Theorem} %
\newtheorem{proposition}[theorem]{Proposition}
\newtheorem{corollary}[theorem]{Corollary}
\newtheorem{lemma}[theorem]{Lemma}
\newtheorem{example}{Example}
\newtheorem{conjecture}{Conjecture}
\newtheorem{observation}{Observation}

\newenvironment{proof}[0]{\mbox{}{\em Proof.}%
 \ }{\ $\Box$\\}

\newcommand{\TODO}[1]{{\bf\large TODO:} {\em #1 }}

\newcommand{\natnum}{\ensuremath{\mathbb{N}}}

\newcommand{\intnum}{\ensuremath{\mathbb{Z}}}

\newcommand{\st}{\;|\;}

\newcommand{\ifof}{if and only if\ }

\newcommand{\U}{{\cal U}}
\renewcommand{\S}{{\cal S}}

\newcommand{\LongShort}[2]{\ifthenelse{\not\isundefined{\longversion}}{#1}{#2}}
\newcommand{\LongOnly}[1]{\LongShort{#1}{}}

\begin{abstract}
We introduce the framework of \emph{qualitative optimization problems}
(or, simply, optimization problems) to represent preference theories.
The formalism uses separate modules to describe the space of outcomes
to be compared (the \emph{generator}) and the preferences on outcomes
(the \emph{selector}). We consider two types of optimization problems.
They differ in the way the generator, which we model by a propositional
theory, is interpreted: by the standard propositional logic semantics,
and by the equilibrium-model (answer-set) semantics. Under the latter
interpretation of generators, optimization problems directly generalize
answer-set optimization programs proposed previously. We study
\emph{strong equivalence} of optimization problems, which guarantees
their interchangeability within any larger context. We characterize
several versions of strong equivalence obtained by restricting the
class of optimization problems that can be used as extensions and
establish the complexity of associated reasoning tasks. Understanding
strong equivalence is essential for modular representation of
optimization problems and rewriting techniques to simplify them
without changing their inherent properties.
\end{abstract}

\section{Introduction}
\label{sec:introduction}

We introduce the framework of \emph{qualitative optimization problems} 
in which, following the design of answer-set optimization (ASO) programs
\cite{bnt03}, we use separate modules to describe the space of outcomes
to be compared (the \emph{generator}) and the preferences on the 
outcomes (the \emph{selector}). In all optimization problems we 
consider, the selector module follows the syntax and the semantics of 
preference modules in ASO programs, and the generator is given by a 
propositional theory. If this propositional theory is interpreted 
according to the standard propositional logic semantics, that is, 
outcomes to be compared are models of the generator, we speak about the 
\emph{classical optimization problems} (CO problems, for short). If the 
generator theory is interpreted by the semantics of \emph{equilibrium} 
models \cite{Pearce97}, also known as \emph{answer sets} \cite{fer05},  
that is, it is the answer sets of the generator that are being compared, 
we speak about \emph{answer-set optimization problems} (ASO problems, 
for short).

Representing and reasoning about preferences in qualitative settings
is an important research area for knowledge representation and
qualitative decision theory. The main objectives are to design expressive
yet intuitive languages to model preferences, and to develop automated
methods to reason about formal representations of preferences in these
languages. The literature on the subject of preferences is vast. We refer
the reader to the special issue of Artificial Intelligence Magazine 
\cite{aim08} for a collection of overview articles and references.

Understanding when optimization problems are equivalent, in particular,
when one can be interchanged with another within any larger context, is
fundamental to any preference formalism. Speaking informally, optimization
problems $P$ and $Q$ are \emph{interchangeable} or \emph{strongly 
equivalent} when for every optimization problem $R$ (\emph{context}), 
$P\cup R$ and $Q\cup R$ define the same optimal models. Understanding
when one optimization problem is equivalent to another in this sense
is essential for preference analysis, modular preference representation, 
and rewriting techniques to simplify optimization problems into forms
more amenable to processing, without changing any of their inherent
properties. Let us consider a multi-agent setting, in which agents
combine their preferences on some set of alternatives with the goal of
identifying optimal ones. Can one agent in the ensemble be replaced
with another so that the set of optimal alternatives is unaffected not
only now, but also under any extension of the ensemble in the future?
Strong equivalence of agents' optimization problems is precisely what
is needed to guarantee this full interchangeability property! 

The notion of strong equivalence is of general interest, by no means
restricted to preference formalisms. In some cases, most notably for
classical logic, it coincides with \emph{equivalence}, the property of 
having the same models. However, if the semantics is not monotone, that
is, extending the theory may introduce new models, not only eliminate 
some, strong equivalence becomes a strictly stronger concept, and the
one to adopt if theories being analyzed are to be placed within a 
larger context. The nonmonotonicity of the semantics is the salient 
feature of nonmonotonic logics \cite{mt93} and strong equivalence of 
theories in nonmonotonic logics, especially logic programming with the 
answer-set semantics \cite{gelf-lifs-91}, was extensively studied in 
that setting \cite{lpv01,tu03,efw04}. Preference formalisms also often 
behave nonmonotonically as adding a new preference may cause a 
non-optimal outcome (model) to become an optimal one. Thus, in 
preference formalisms, equivalence and strong equivalence are typically 
different notions. Accordingly, strong equivalence was studied for logic
programs with rule preferences~\cite{FaberK06}, programs with ordered 
disjunction~\cite{FaberTW08} and %
programs with weak 
constraints~\cite{EiterFFW07}.

We 
extend the study of strong equivalence to the formalism 
of \emph{qualitative optimization problems}.
The formalism is motivated by the design of answer-set optimization 
(ASO) programs of \citeauthor{bnt03}~\shortcite{bnt03}. It borrows two
key features from ASO programs that make it an attractive alternative
to the preference modeling approaches based on logic programming that we
mentioned above. First, following ASO programs, optimization problems 
provide a clear separation of hard constraints, which specify the space
of feasible outcomes, and preferences (soft constraints) that impose a 
preference ordering on feasible outcomes. Second, optimization problems 
adopt the syntax and the semantics of preference rules of ASO programs
that correspond closely to linguistic patterns of simple conditional 
preferences used by humans.

The separation of preference modules from hard-constraints
facilitates eliciting and representing preferences. 
It is also important for characterizing strong 
equivalence.  When a clear separation is not present, 
like in
logic programs with ordered disjunctions, strong equivalence 
characterizations are cumbersome as they have to account for complex 
and mostly implicit interactions between hard constraints and 
preferences. For optimization problems, which impose the separation, 
we have ``one-dimensional'' forms of strong equivalence, in which 
only hard constraints or only preferences are added. These 
``one-dimensional'' concepts are easier to study yet provide enough 
information to construct characterizations for the general case. 

\smallskip
\noindent
{\bf Main Contributions.}
(1) We propose a general framework of qualitative optimization problems,
extending 
in several ways the formalism of ASO programs. We focus on two important
instantiations of the framework, the classes of classical optimization 
(CO) problems and answer-set optimization (ASO) problems. The latter one 
directly generalizes ASO programs. 
(2) We characterize the concept of strong equivalence of optimization
problems relative to changing selector modules. The characterization
is independent of the semantics of generators and so, applies both to
CO and ASP problems. We also characterize strong equivalence relative
to changing generators (with preferences fixed). In this case, not
surprisingly, the characterization depends on the semantics of 
generators. However, we show that the dependence is quite uniform, and 
involves a characterization of strong equivalence of generators relative
to their underlying semantics, when they are considered on their own as
propositional theories. Finally, we combine the characterizations of 
the ``one-dimensional'' concepts of strong equivalence into a 
characterization of the general ``combined'' notion. 
(3) We develop our results for the case when preferences are ranked. In 
practice, preferences are commonly ranked due to the hierarchical structure 
of preference providers. The general case we study allows for additions
of preferences of ranks from a specified interval $[i,j]$. This covers
the case when only some segment in the hierarchy of preference providers
is allowed to add preferences (top decision makers, middle management,
low-level designers), as well as the case when there is no distinction 
between the importance of preferences (the non-ranked case). (4) We 
establish the complexity of deciding whether two optimization problems
are strongly equivalent relative to changing selectors, generators, or
both. 
\LongShort{We present only proof sketches and some of the simpler and
  not overly technical proofs in the main text. Detailed proofs can be
  found in the Appendix.}{Due to the space restriction, we present
  only proof sketches and some of the simpler and not overly technical
  proofs here. All proofs can be found in the full version of the
  paper at \url{http://arxiv.org/abs/??????????}.}

\section{Optimization Problems}\label{sec:background}

A \emph{qualitative optimization problem} (an \emph{optimization
  problem}, from now on) is an ordered pair $P=(T,S)$, where $T$ is
called the \emph{generator} and $S$ the \emph{selector}. The role of
the generator is to specify the family of \emph{outcomes} to be
compared. The role of the selector $S$ is to define a relation $\geq$
on the set of outcomes and, consequently, define the notion of an
optimal outcome. The relation $\geq$ induces relations $>$ and
$\approx$: we define $I> J$ if $I\geq J$ and $J\not\geq I$, and $I\approx J$ if
$I\geq J$ and $J\geq I$. For an optimization problem $P$, we write
$P^g$ and $P^s$ to refer to its generator and selector, respectively.

\smallskip
\noindent
\textbf{Generators.}
As generators we use propositional theories in the language determined
by a fixed countable universe $\U$, a boolean constant $\bot$, and boolean
connectives $\land$, $\lor$ and $\rightarrow$, and where we define the
constant $\top$, and the connectives $\neg$ and $\leftrightarrow$ in 
the usual way.\footnote{While the choice of primitive connectives is not 
common for the language of classical propositional logic, it is standard
for the of logic here-and-there which underlies the answer-set semantics.}
Models of the generator, as defined by the semantics 
used, represent \emph{outcomes} of the corresponding optimization 
problem.  We consider two 
quite different 
semantics for generators: 
the classical propositional logic semantics and the semantics of 
\emph{equilibrium} models \cite{Pearce97}. Thus, outcomes are either
models or equilibrium models, depending on the semantics chosen.
The first semantics is of interest due to the fundamental role and 
wide\-spread use of classical propositional logic, in particular, as 
a means to describe constraints. Equilibrium models generalize answer 
sets of logic programs to the case of arbitrary propositional theories 
\cite{Pearce97,fer05} and are often referred to as answer sets. The
semantics of equilibrium models is important due to the demonstrated 
effectiveness of logic programming with the semantics of answer sets 
for knowledge representation applications. We use the terms equilibrium
models and answer sets interchangeably. 

Throughout the paper, 
we represent interpretations of $\U$
as subsets of $\U$. We write $I\models \phi$ to state that an 
interpretation $I\subseteq \U$ is a (classical propositional) model 
of a formula $\phi$. Furthermore, we denote the set of classical models 
of a formula or theory $T$ by $\Md(T)$.

Equilibrium models arise in the context of the propositional logic of 
here-and-there, or the logic HT for short \cite{ht30}. We briefly recall
here definitions of concepts, as well as properties of the logic HT that are 
directly relevant to our work. We refer to the papers by 
\citeauthor{Pearce97}~\shortcite{Pearce97} and 
\citeauthor{fer05}~\shortcite{fer05} for further details.

The logic HT is a logic located between the intuitionistic and the 
classical logics. Interpretations in the logic HT are pairs 
$\langle I,J\rangle$ of 
standard propositional interpretations such that $I\subseteq J$. We 
write $\langle I,J\rangle \models_{HT} \phi$ to denote that a formula 
$\phi$ holds in an interpretation $\langle I,J\rangle$ in the logic HT. 
The relation $\models_{HT}$ is defined recursively as follows: for an
atom $a$, $\langle I,J\rangle \models_{HT} a$ if and only if $a\in I$. 
The cases of the boolean connectives $\land$ and $\lor$ are standard, 
and $\langle I,J\rangle \models_{HT} \vph\rightarrow\psi$ if and only 
if $J\models \vph\rightarrow\psi$ (classical satisfiability) and $\langle
I,J\rangle \not\models_{HT} \vph$ or $\langle I,J\rangle \models_{HT}
\psi$. Finally, $\langle I,J\rangle \not \models_{HT} \bot$. %

An \emph{equilibrium model} or \emph{answer set} of a propositional 
theory $T$ is a standard interpretation $I$ such that 
$\langle I,I\rangle\models_{HT} T$ and for 
every proper subset $J$ of $I$, $\langle J,I \rangle\not \models_{HT} 
T$. Answer sets of a propositional theory $T$ are also classical
models of $T$. The converse is not true in general.

We denote the set of all answer sets of a theory 
$T$ by $\AS{T}$, and the set of all HT-models of $T$ by $\HT{T}$, that 
is, $\HT{T} =\{ \langle I,J\rangle \mid I\subseteq J, \langle I,J\rangle 
\models_{HT} T\}$. 

For each of the semantics there are two natural concepts of equivalence.
Two theories $T_1$ and $T_2$ are \emph{equivalent} if they have the same
models (classical or equilibrium, respectively). They are 
\emph{strongly equivalent} if for every theory $S$, $T_1\cup S$ and 
$T_2\cup S$ have the same models (again, classical or equilibrium, 
respectively).

For classical semantics, strong equivalence and equivalence 
coincide. It is not so for the semantics of equilibrium models. The 
result by \citeauthor{lpv01}~\shortcite{lpv01} states that 
two theories $T_1$ and $T_2$ are strongly equivalent for equilibrium 
models if and only if $T_1$ and $T_2$ are equivalent in the logic HT, 
that is $\HT{T_1} = \HT{T_2}$.

We call optimization problems under the classical interpretation of 
generators \emph{classical optimization problems} or CO problems, for 
short. When we use the answer-set semantics for generators, we
speak about \emph{answer-set optimization problems} or ASO problems, 
for short.

\smallskip
\noindent
\textbf{Selectors.}
We follow the definitions of preference modules in ASO
programs \cite{bnt03}, 
adjusting the terminology to our more
general setting. A \emph{selector} is a finite set of \emph{ranked
preference rules} 
\begin{equation}\label{eq::rprefrule}
\phi_1 > \cdots > \phi_k \LPifrank{j} \psi
\end{equation}
where $k$ and $j$ are positive integers, and $\phi_i$, $1\leq i\leq k$, 
and $\psi$ are propositional formulas over $\U$. For a rule $r$ of the 
form (\ref{eq::rprefrule}), the number $j$ is the \emph{rank} of $r$,
denoted by $\rank{r}$, $\head{r} = \{\phi_1, \ldots, \phi_k\}$ is the 
\emph{head} of $r$ and $\psi$ is the \emph{body} of $r$, $\body{r}$. 
Moreover, we 
write $\headi{i}{r}$ to refer to formula $\phi_i$.

If $\rank{r}=1$ for every preference rule $r$ in a selector $S$, then
$S$ is a \emph{simple selector}. Otherwise, $S$ is \emph{ranked}. We
often omit ``$1$'' from the notation ``$\LPifrank{1}$'' for simple
selectors.  For a selector $S$, and $i,j\in\{0,1,2,\ldots\} \cup
\{\infty\}$, we define $S_{[i,j]}=\{ r\in S \mid i\leq \rank{r} \leq
j\}$ (where we assume that for every integer $k$, $k<\infty$) and
write $[i,j]$ for the \emph{rank interval} $\{k \mid i \leq k \leq
j\}$. We extend this notation to optimization problems. For $P=(T,S)$
and a rank interval $[i,j]$, we set $P_{[i,j]} = (T,S_{[i,j]})$. For
some rank intervals we use shorthands, for example $=i$ for $[i,i]$,
$<i$ for $[1,i-1]$, $\geq i$ for $[i,\infty]$, and similar.

For an interpretation $I$, a \emph{satisfaction degree} of a preference
rule $r$ is $\degree{I}{r}=min\{i\mid I \models \headi{i}{r}\}$, if
$I\models \body{r}$ and $I\models %
\bigvee%
\head{r}$; otherwise, the rule is
\emph{irrelevant} to $I$, and $\degree{I}{r}=1$. 
We note that 
\citeauthor{bnt03} (\citeyear{bnt03}) represented the satisfaction 
degree of an irrelevant rule by a special non-numeric degree, treated 
as being equivalent to $1$. The difference is immaterial and the two 
approaches are equivalent.

Selectors determine a preference relation on interpretations. Given
interpretations $I$ and $J$ and a simple selector $S$, $I\geq^S J$
holds precisely when for all $r\in S$, $\degree{I}{r}\leq\degree{J}{r}$. Therefore, $I >^S J$ holds if and only if $I\geq^S J$ and there exists $r\in S$
such that $\degree{I}{r} < \degree{J}{r}$; $I \approx^S J$ if
and only if for every $r\in S$, $\degree{I}{r} = \degree{J}{r}$.

Given a \emph{ranked} selector $S$, we define $I\geq^S J$ if for every 
preference rule $r\in S$, $\degree{I}{r} =\degree{J}{r}$, or if there 
is a rule $r'\in S$ such that the following three conditions hold:
\begin{enumerate}
\item $\degree{I}{r'} < \degree{J}{r'}$  
\item for every $r\in S$ of the same rank as $r'$, 
$\degree{I}{r} \leq \degree{J}{r}$
\item for every $r\in S$ of smaller rank than $r'$, 
$\degree{I}{r} = \degree{J}{r}$. 
\end{enumerate}
Moreover, $I>^S J$ if and only if there is a rule $r'$ for which the
three conditions above hold, and $I \approx^S J$ if and only if for
every $r\in S$, $\degree{I}{r} = \degree{J}{r}$. Given an optimization
problem $P=(T,S)$ we often write $\geq^P$ for $\geq^S$ (and similarly
for $>$ and $\approx$).

\smallskip
\noindent
\textbf{Optimal (preferred) outcomes.} For an optimization problem $P$,
$\acc(P)$ denotes the set of all outcomes of $P$, that is, the set of 
all models (under the selected semantics) of the generator of $P$. 
Thus, $\mu(P)$ stands for all models of $P$ in the framework of CO 
problems and for all answer sets of $P$, when ASO problems are 
considered. A model $I \in \acc(P)$ is \emph{optimal} or \emph{preferred}
for $P$ if there is no model $J\in \acc(P)$ such that $J>^P I$. We denote 
the set of all \emph{preferred models} of $P$ by $\pref(P)$. 

\smallskip
\noindent
\textbf{Relation to ASO programs.} Optimization problems 
extend the 
formalism
of ASO programs \cite{bnt03}
in several 
ways.
First, the generator programs are 
\emph{arbitrary} propositional theories. Under the semantics of 
equilibrium models, our generators properly extend logic programs with 
the answer-set semantics used as generators in ASO programs. Second, 
the selectors use arbitrary propositional formulas for options in the 
heads of preference rule, as well as for conditions in their bodies. 
Finally, optimization problems explicitly allow for alternative 
semantics of generators, a possibility mentioned but not pursued by 
\citeauthor{bnt03}~\shortcite{bnt03}. 

\smallskip
\noindent
\textbf{Notions of Equivalence.}
We define the union of optimization problems as expected, that is, for
$P_1=(T_1,S_1)$ and $P_2=(T_2,S_2)$, we set $P_1\cup P_2=(T_1\cup 
T_2,S_1\cup S_2)$. Two optimization problems $P_1$ and $P_2$ are 
\emph{strongly equivalent} with respect to a class ${\cal R}$ of 
optimization problems (\emph{contexts}) if for every optimization 
problem $R\in {\cal R}$, $\pi(P_1\cup R) = \pi(P_2\cup R)$.

We consider three general classes of contexts. First and foremost, we 
are interested in the class $\L$ of all optimization problems over $\U$.
We also consider the families $\L^g$ and $\L^s$ of all optimization 
problems of the form $(T,\emptyset)$ and $(\emptyset,S)$, respectively. 
The first class consists of optimization problems where all models of 
the generator are equally preferred. We call such optimization problems 
\emph{generator problems}. The second class consists of optimization 
problems in which every interpretation of $\U$ is an acceptable outcome.
We call such optimization problems \emph{selector problems}. These 
``one-dimensional'' contexts provide essential insights into the general 
case. For the first class, we simply speak of \emph{strong equivalence}, 
denoted $\equiv^s_g$. For the latter two classes, we speak of 
\emph{strong gen-equivalence}, denoted $\equiv_g$, and \emph{strong 
sel-equivalence}, denoted $\equiv^s$, respectively.

Constraining ranks of rules in selectors gives rise to two additional
classes of contexts parameterized using rank intervals $[i,j]$:
\begin{enumerate}
\item $\L^{s,[i,j]}=\{(\emptyset,S)\in\L^s\mid  S=S_{[i,j]}\}$
\item $\L^{[i,j]}=\{(T,S)\in \L\mid S=S_{[i,j]}\}$
\end{enumerate}

The first class of contexts gives rise to \emph{strong sel-equivalence} 
with respect to rules of rank in $[i,j]$, denoted by $\equiv^{s,[i,j]}$.
The second class of contexts yields the concept of 
\emph{strong equivalence} with respect to rules of rank in $[i,j]$.
We denote it by $\equiv^{s,[i,j]}_g$. We call problems in the class 
$\L^{=1}=\L^{[1,1]}$ \emph{simple optimization problems}.

\section{Examples} 

We present now examples that illustrate key issues relevant to 
strong equivalence of optimization problems. %
They point to the necessity of some conditions that appear later in 
characterizations of strong equivalence and hint at some constructions 
used in proofs.
In all examples except for
the last one, we consider simple CO problems. 
In all problems only atoms explicitly listed matter, so 
we disregard all others. 

\begin{example}
\label{ex:1n}
Let $P_1=(T_1,S_1)$, where $T_1=\{a\leftrightarrow \neg b\}$ and 
$S_1=\{a > b \LPif\}$. There are two outcomes here, $\{a\}$ and $\{b\}$,
that is, $\mu(P_1)=\{\{a\},\{b\}\}$. Let $r$ be the only preference 
rule in $S_1$. Clearly, $v_{\{a\}}(r)=1$ and $v_{\{b\}}(r)=2$.  Thus, 
$\{a\} >^{P_1} \{b\}$ and so, $\pi(P_1)=\{\{a\}\}$.

In addition, let $P_2=(T_2,S_1)$, where $T_2=\{a\land \neg b\}$ and 
$S_1$ is as above. Then, $\mu(P_1)=\{\{a\}\}$ and, trivially, 
$\pi(P_1)=\{\{a\}\}$. It follows that $P_1$ and $P_2$ are equivalent, 
as they specify the same optimal outcomes. However, they are not 
strongly gen-equivalent (and so, also not strongly equivalent). Indeed, 
let $R=(\{\neg a\},\emptyset)$. Then $\mu(P_1\cup R) =\{\{b\}\}$ and so, 
$\pi(P_1\cup R)=\{\{b\}\}$. On the other hand, $\mu(P_2\cup R)=\emptyset$ 
and, therefore, $\pi(P_2\cup R)=\emptyset$. 
\end{example}

Example \ref{ex:1n} 
suggests that we must have
$\mu(P_1)=\mu(P_2)$ if problems $P_1$ and $P_2$ are to be strongly 
\mbox{(gen-)}equivalent. Otherwise, by properly selecting the context 
generator, we can eliminate all outcomes in one problem still leaving 
some in the other. 

\begin{example}
\label{ex:2n}
Let $P_3=(T_3,S_3)$, where $T_3=\{a\vee b\vee c, \neg(a\land b),
\neg(a\land c), \neg(b\land c)\}$ and $S_3=\{a > b \LPif,\ a> c \LPif\}$.
We have $\mu(P_3)=\{\{a\},\{b\},\{c\}\}$. In addition, $\{a\} >^{P_3} \{b\}$,
$\{a\} >^{P_3} \{c\}$, and $\{b\}$ and $\{c\}$ are incomparable. Thus,
$\pi(P_3)=\{\{a\}\}$. 
Let now $P_4=(T_4,S_4)$, where $T_4=T_3$ and
$S_4=\{a>b>c \LPif\}$. Clearly, $\mu(P_4)=\mu(P_3)=\{\{a\},\{b\},\{c\}\}$.
Moreover, $\{a\}>^{P_4}\{b\}>^{P_4}\{c\}$. Thus, $\pi(P_4)=\{\{a\}\}$ and so,
$P_3$ and $P_4$ are equivalent. They are not strongly (gen-)equivalent.
Indeed, let $R=(\{\neg a\},\emptyset)$. 
Then, $\pi(P_3\cup R)=
\{\{b\},\{c\}\}$ and that $\pi(P_4\cup R)=\{\{b\}\}$. 
\end{example}

This example suggests
that for two optimization problems to be strongly (gen-)equivalent, they
have to define the same preference relation $>$ on outcomes.

\begin{example}
\label{ex:3n}
Let $P_5=(T_5,S_5)$, where $T_5=\{a,\neg b\}$ and $S_5=\emptyset$.
We have $\mu(P_5)=\{\{a\}\}$ and so, $\pi(P_5)=\{\{a\}\}$. It follows that
$P_5$ is equivalent to $P_1$. Let $R=(\emptyset, \{b>a\LPif\})$. Since
$\mu(P_5\cup R)=\mu(P_5)=\{\{a\}\}$, $\pi(P_5\cup R)=\{\{a\}\}$. Further,
$P_1\cup R=(T_1, \{a>b\LPif, b>a\LPif\})$ and so, $\mu(P_1\cup R)=\mu(P_1)
=\{\{a\},\{b\}\})$ and we get $\pi(P_1\cup R)=\{\{a\},\{b\}\}$. Thus, $P_5$ and
$P_1$ are not strongly (sel-)equivalent. 
\end{example}

Informally, this example shows that by modifying the selector part, we 
can make non-optimal outcomes optimal. Thus, as in the case of strong 
gen-equivalence (Example \ref{ex:1n}) the equality of sets of 
models (i.e., equivalence) is important 
for strong \mbox{(sel-)}equivalence (a more refined condition will be 
needed for ranked programs, as we show in Theorem 
\ref{thm:sequivrankedprefeq}).

\begin{example}
\label{ex:4n}
For the next example, let us consider problems $P_6=(T_1,S_6)$ and
$P_7=(T_1,S_7)$, where $T_1$ is the generator from Example \ref{ex:1n}, 
$S_6=\{a >b \LPif,\ b>a\LPif\}$ and $S_7=\emptyset$. We have
$\mu(P_6)=\mu(P_7)= \{\{a\},\{b\}\}$. Moreover, $\{a\}\geq^{P_6} \{b\}$ 
and $\{b\}\geq^{P_6} \{a\}$. Thus, $\pi(P_6)=\{\{a\},\{b\}\}$. Since 
$P_7^s=\emptyset$, we also have (trivially) that $\{a\}\approx^{P_7}
\{b\}$. Thus, $\pi(P_7)=\{\{a\},\{b\}\}$, too, and the problems $P_6$
and $P_7$ are equivalent. They are not strongly sel-equivalent, though.
Let $R=(\emptyset,\{a>b \LPif\})$. Then, $P_6\cup R=P_6$ and so, 
$\pi(P_6\cup R)=\{\{a\},\{b\}\}$. On the other hand, $\{a\}>^{P_7\cup R}
\{b\}$. Thus, $\pi(P_7\cup R)=\{\{a\}\}$. 
\end{example}

The example suggests that 
for strong sel-equivalence
the 
equality of the relation $\geq$ induced by the problems 
considered 
is important. The equality of the relation $>$ is not sufficient. In our 
example, the relations $>^{P_6}$ and $>^{P_7}$ are both empty and so 
-- equal. But they are empty for different reasons, absence of preference 
versus conflicting preferences, which can give rise to different preferred 
models when extending selectors. 

Our last example involves ranked problems. It is meant to hint at
issues that arise when ranked problems are considered. In the 
general case of ranked selectors, the equality of the relation $>$ 
induced by programs being evaluated is not related to strong sel-equivalence
in any direct way and an appropriate modification of that requirement has
to be used together with yet another condition (cf.\ 
Theorem~\ref{thm:sequivrankedprefeq}).

\begin{example}
\label{ex:6n}
Let $P=(T_1,S)$, where 
$S=\{a>b\LPifrank{2}\}$ and $P'=(T_1,S')$, where 
$S'=\{a>b\LPifrank{3}\}$. Clearly, $\mu(P)=\mu(P')=\{\{a\},\{b\}\}$
and $\pi(P)=\pi(P')=\{a\}$. Thus, the two problems are equivalent.
They are not strongly sel-equivalent if arbitrary selectors are allowed.
For instance, let $R=(\emptyset, \{b>a\LPifrank{2}\})$. Adding this new
preference rule to $P$ makes $\{a\}$ and $\{b\}$ incomparable and so,
$\pi(P\cup R)=\{\{a\},\{b\}\}$. On the other hand, since the new rule has
rank 2, it dominates the preference rule of $P'$, which is of rank 3.
Thus, $\mu(P'\cup R)=\{\{b\}\}$. A similar effect occurs with the problem $R'=
(\emptyset, \{b>a\LPifrank{3}\})$. Since its only preference rule
is dominated by the only preference rule in $P$,  
$\pi(P\cup R')= \pi(P)= \{\{a\}\}$.
On the other hand, $\{a\}$ and $\{b\}$ are incomparable in $P'\cup R'$
and, consequently, $\pi(P'\cup R')=\{\{a\},\{b\}\}$. Thus, extending $P$
and $P'$ with selectors containing rules of rank 2 or 3 may lead to 
different optimal outcomes. It is so even though
the relations $>$ induced by $P$ and $P'$
on the set of all outcomes coincide. 

However, 
adding selectors consisting only of rules
of rank greater than 3 cannot have such an effect, since the existing
rules would dominate them. Also, adding rules of rank 1 cannot result
in differing preferred answer sets, as such rules would dominate the
existing ones. Formally, the problems $P$ and $P'$ are strongly
sel-equivalent relative to selectors with preference rules of rank
greater than 3 or less than 2.
\end{example}
 
\section{Strong sel-equivalence}
\label{sec:str-sel-eq}

We start with the case of strong sel-equivalence, 
the core case for our study. Indeed, characterizations of strong 
sel-equivalence naturally imply characterizations for the general case 
thanks to the following simple observation.

\begin{proposition}
\label{prop:ortho}
Let $P$ and $Q$ be optimization problems (either under classical or
answer-set semantics for the generators) and $[i,j]$ a rank
interval. Then $P\equiv^{s,[i,j]}_g Q$ if and only if for every
generator $R\in \L^g$, $P\cup R\equiv^{s,[i,j]} Q\cup R$.
\end{proposition}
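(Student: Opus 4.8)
The plan is to exploit the fact that the union of optimization problems is defined componentwise, and is therefore associative and commutative in each coordinate, together with the observation that every context in $\L^{[i,j]}$ splits into a pure generator part and a pure selector part of the prescribed rank. Concretely, for any $(T,S)\in\L^{[i,j]}$ we have $(T,S)=(T,\emptyset)\cup(\emptyset,S)$, where $(T,\emptyset)\in\L^g$ and, since $S=S_{[i,j]}$, also $(\emptyset,S)\in\L^{s,[i,j]}$. Conversely, gluing a generator context $R=(T,\emptyset)\in\L^g$ to a selector context $R'=(\emptyset,S)\in\L^{s,[i,j]}$ yields $R\cup R'=(T,S)\in\L^{[i,j]}$. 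These two facts are the whole engine of the proof; the rest is regrouping unions.

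For the forward direction I would assume $P\equiv^{s,[i,j]}_g Q$ and fix an arbitrary $R=(T,\emptyset)\in\L^g$. To establish $P\cup R\equiv^{s,[i,j]} Q\cup R$ I take an arbitrary $R'=(\emptyset,S)\in\L^{s,[i,j]}$ and regroup $(P\cup R)\cup R'=P\cup(R\cup R')$. Since $R\cup R'=(T,S)\in\L^{[i,j]}$, the hypothesis applied to this single context gives $\pi(P\cup(R\cup R'))=\pi(Q\cup(R\cup R'))$, and regrouping back yields $\pi((P\cup R)\cup R')=\pi((Q\cup R)\cup R')$, as required.

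For the backward direction I would assume $P\cup R\equiv^{s,[i,j]} Q\cup R$ for every $R\in\L^g$ and fix an arbitrary context $R''=(T,S)\in\L^{[i,j]}$. Using the split $R''=(T,\emptyset)\cup(\emptyset,S)$, I set $R=(T,\emptyset)\in\L^g$ and $R'=(\emptyset,S)\in\L^{s,[i,j]}$. Instantiating the hypothesis at $R$, and then at the particular selector context $R'$, gives $\pi((P\cup R)\cup R')=\pi((Q\cup R)\cup R')$, and regrouping gives $\pi(P\cup R'')=\pi(Q\cup R'')$.

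I expect no real obstacle here: the argument is a mechanical unfolding of the definitions of the context classes combined with associativity of $\cup$. The only point that needs genuine care is the bookkeeping on ranks, namely checking that the selector part $(\emptyset,S)$ of a context in $\L^{[i,j]}$ really lies in $\L^{s,[i,j]}$ (which holds precisely because $S=S_{[i,j]}$) and, in the reverse direction, that gluing an $\L^g$ context (whose selector is empty, hence vacuously of rank in $[i,j]$) to an $\L^{s,[i,j]}$ context yields a legitimate $\L^{[i,j]}$ context. Finally, I note that the argument never inspects the semantics of the generators, so it applies verbatim to both CO and ASO problems.
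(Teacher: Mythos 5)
Your proof is correct and follows essentially the same route as the paper's: the backward direction is the identical decomposition of a context $(T,S)\in\L^{[i,j]}$ into a generator part in $\L^g$ and a selector part in $\L^{s,[i,j]}$, and your forward direction just unrolls the paper's more condensed observation that $P\equiv^{s,[i,j]}_g Q$ implies $P\cup R\equiv^{s,[i,j]}_g Q\cup R$ (hence a fortiori $\equiv^{s,[i,j]}$) by gluing $R$ and the selector context into a single $\L^{[i,j]}$ context. The rank bookkeeping you flag is exactly the point that makes both directions work, and your argument handles it correctly.
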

\begin{proof}
($\Rightarrow$)
Let $R\in \L^g$. Since $P\equiv^{s,[i,j]}_g Q$,
$P\cup R\equiv^{s,[i,j]}_g Q\cup R$ and so, $P\cup R\equiv^{s,[i,j]}
Q\cup R$.

\smallskip
\noindent
($\Leftarrow$) Let $R$ be any optimization problem in $\L^{[i,j]}$. We have $P\cup R=(P\cup (R^g,\emptyset))\cup
(\emptyset,\R^s)$ and $Q\cup R=(Q\cup (R^g,\emptyset))\cup
(\emptyset,\R^s)$. By the assumption, it follows that $P\cup (R^g,\emptyset)
\equiv^{s,[i,j]} Q\cup (R^g,\emptyset)$. Thus,
\[ \pref((P\cup (R^g,\emptyset))\cup (\emptyset,\R^s))=
\pref((Q\cup (R^g,\emptyset))\cup (\emptyset,\R^s)). \]
It follows that $\pref(P\cup R)=\pref(Q\cup R)$ and, consequently,
that $P\equiv^{s,[i,j]}_g Q$.
\end{proof}

Furthermore, %
the set of outcomes of an optimization problem 
$P$ is unaffected by changes in the selector module. It follows that
the choice of the semantics for generators does not matter 
for characterizations of strong sel-equivalence. Thus, 
whenever in this section we refer to the set of outcomes of 
an optimization problem $P$, we use the notation $\mu(P)$, and 
not the more specific one, $\Mod{P^g}$ or $\AS{P^g}$, that applies
to CO and ASO problems, respectively.

Our first main result concerns strong sel-equivalence relative to
selectors consisting of preference rules of ranks in a rank interval
$[i,j]$. Special cases for strong sel-equivalence will follow as
corollaries.
To state the result, we
need some auxiliary notation. For an optimization problem $P$, we
define $\dff^P(I,J)$ to be the largest $k$ such that $I
\approx^{P_{<k}} J$. If for every $k$ we have $I \approx^{P_{<k}} J$,
then we set $\dff^P(I,J)=\infty$. It is clear that $\dff^P(I,J)$ is
well-defined. 
Moreover, as $I \approx^{P_{<1}} J$, $\dff^P(I,J)\geq
1$. Furthermore, for a set $V\subseteq 2^\U$ and a relation $\succ$
over $2^\U$, we write $\succ_{V}$ for the restriction of $\succ$ to
$V$, that is, $\succ_{V}\ = \{(A,B)\in\,\, \succ \mid A,B\in V\}$.

\begin{theorem}\label{thm:sequivrankedprefeq}
For every ranked optimization problems $P$ and $Q$, and every rank
interval $[i,j]$, $P \equiv^{s,[i,j]} Q$ if and only if the following
conditions hold:
\begin{enumerate}
\item $\pref(P_{<i})=\pref(Q_{<i})$
\item ${>^P_{\pref(P_{<i})}}={>^Q_{\pref(Q_{<i})}}$
\item For every $I,J\in \pref(P_{<i})$ such that $i<\dff^P(I,J)$ or
$i<\dff^Q(I,J)$, $\dff^P(I,J)=\dff^Q(I,J)$ or both $\dff^P(I,J)>j$ and
$\dff^Q(I,J)>j$.
\end{enumerate}
\end{theorem}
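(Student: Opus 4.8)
The plan is to reduce strong sel-equivalence to a purely rank-by-rank analysis of the preference relations and then treat the two implications separately, leaning throughout on one structural observation. The backbone is this: because every rule of a context $R\in\L^{s,[i,j]}$ has rank at least $i$, the rules of rank below $i$ in $P$ (resp.\ $Q$) take precedence in the lexicographic comparison, so if two outcomes already differ on some rule of rank $<i$ the addition of $R$ cannot change the verdict. I would first establish $\pref(P\cup R)\subseteq\pref(P_{<i})$ for every such $R$, and more precisely that for $I\in\pref(P_{<i})$ the membership $I\in\pref(P\cup R)$ is decided solely by the outcomes $J\in\pref(P_{<i})$ with $\dff^P(I,J)\geq i$ (those tied with $I$ on all rules of rank $<i$), every other $J$ satisfying $J\not>^{P\cup R}I$. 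This turns the whole question into: for each such relevant pair, does $J>^{P\cup R}I$ hold iff $J>^{Q\cup R}I$, which is governed by the first rank $m$ at which the pair differs in $P\cup R$ (equivalently $Q\cup R$) and by the three-valued Pareto verdict of the degree vectors at rank $m$.

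For the ``if'' direction I would fix $R$ and $I\in\pref(P_{<i})=\pref(Q_{<i})$ (Condition~1), and for each tied candidate $J$ let $d_R\in[i,j]\cup\{\infty\}$ be the least rank at which $R$ separates $I$ and $J$. The first differing rank of the pair is $\min(\dff^P(I,J),d_R)$ in $P\cup R$ and $\min(\dff^Q(I,J),d_R)$ in $Q\cup R$, and Condition~3 is precisely what forces these to a common value $m$: when $\dff^P=\dff^Q$ this is immediate, and when both exceed $j$ then, since $d_R\leq j$, either $R$ already separates the pair (so $m=d_R$ on both sides) or $R$ is silent and the verdict falls to the rank-$>j$ tail, which Condition~2 matches. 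At rank $m$ the $R$-contribution is identical on the two sides; when $m<\dff^P$ both $P$ and $Q$ are silent there and the verdict is $R$'s, and when $m=\dff^P=\dff^Q$ the vectors genuinely differ, so ``$I$ weakly dominates $J$ at rank $m$'' is equivalent to $I>^PJ$, which Condition~2 equates with $I>^QJ$ (and symmetrically). Combining identical $R$-parts with matched $P/Q$-parts preserves Pareto dominance, giving $J>^{P\cup R}I\iff J>^{Q\cup R}I$. The residual asymmetric case, $\dff^P(I,J)=i$ with $\dff^Q(I,J)<i$ (which Condition~3 permits), leaves the pair incomparable on both sides and is dispatched directly from Condition~2.

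For the ``only if'' direction I would argue contrapositively, separating any violation by a context built from two gadgets over the finitely many atoms $V$ occurring in $P$ and $Q$: a \emph{pinning} formula $\chi_I$ (the conjunction fixing $I\cap V$) in a \emph{boosting} rule $\chi_I>\neg\chi_I\LPifrank{i}$, which forces $I$ optimal by pre-empting every outcome outside its $V$-class at rank $i$, and a \emph{conflict} rule $\chi_J>\chi_I\LPifrank{m}$ at a chosen rank $m$. Failure of Condition~1 is separated by a single boosting rule, since $\pref(P\cup R)\subseteq\pref(P_{<i})$. Failure of Condition~2, say $I>^PJ$ but $I\not>^QJ$ (which forces $\dff^P(I,J)\geq i$, so the pair is tied through rank $i$), is separated by the pair-boost $(\chi_I\vee\chi_J)>\neg(\chi_I\vee\chi_J)\LPifrank{i}$, which protects $I$ and $J$ from all others while leaving them tied at rank $i$: their order is then inherited from $P$ (so $I$ kills $J$) but not from $Q$ (so $J$ survives). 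Failure of Condition~3, with $\dff^P(I,J)=d_P<d_Q=\dff^Q(I,J)$ and $d_P\leq j$, is separated as in Example~\ref{ex:6n} by adding the pair-boost at rank $i$ together with a conflict rule at $m=\max(d_P,i)\in[i,j]$ oriented to cancel $P$'s verdict at its decisive rank $d_P$: this keeps $I,J$ incomparable in $P\cup R$ but strictly orders them in $Q\cup R$, where $Q$ is still silent below $d_Q$.

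I expect the main obstacle to be the verdict-matching step of the ``if'' direction, precisely because the ranked preference order is a lexicographic composition of Pareto comparisons and is \emph{not} transitive: one cannot chase dominations through intermediate outcomes and must instead control, pair by pair, both the first differing rank and the three-valued outcome at that rank, showing invariance under adjoining the common context. The delicate bookkeeping lies in how $\dff^P$, $\dff^Q$, $i$, $j$ and $d_R$ interleave, especially the boundary values $\dff=i$, $\dff>j$ and $\dff=\infty$, where Conditions~2 and~3 must be invoked in exactly the right combination; a secondary care point is verifying that the pinning formulas $\chi_I$ behave correctly modulo the finite relevant vocabulary, so that all constructed contexts genuinely lie in $\L^{s,[i,j]}$.
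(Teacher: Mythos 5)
Your proposal is correct and follows essentially the same route as the paper's proof: the sufficiency direction is the paper's case analysis on $\min(\dff^P(I,J),\dff^R(I,J))$ versus $\min(\dff^Q(I,J),\dff^R(I,J))$ (Lemma~\ref{lemma:compose}) combined with $\pref(P\cup R)\subseteq\pref(P_{<i})$ (Lemma~\ref{lemma:lowerrank}), and the necessity direction uses the same three tailored context types with the same sub-case split on $\dff^P(I,J)$ relative to $i$. Your gadgets merely repackage the paper's $R_i[I]$ and $R'_i[I,J]$ as single rules over formulas $\chi_I$, $\chi_I\vee\chi_J$ on the finite relevant vocabulary, which is functionally equivalent (and sidesteps the cardinality quibble of quantifying over all of $\U$).
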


We now comment on this characterization and derive some of its consequences.
First,
we observe that the conditions (1) and (2) are indeed necessary --- 
differences 
between $\pi(P_{<i})$ and $\pi(Q_{<i})$ or ${>^P_{\pref(P_{<i})}}$ and
${>^Q_{\pref(Q_{<i})}}$ can be exploited to construct a selector
from $\L^{s,[i,j]}$ whose addition to $P$ and $Q$ results in problems
with different sets of optimal outcomes. This is illustrated by
Examples \ref{ex:3n} and \ref{ex:4n} in the case of simple problems and
simple contexts ($i=j=1$), where $\pi(P_{<i})$ and $\pi(Q_{<i})$ coincide
with 
$\mu(P)$ and $\mu(Q)$, %
respectively. The condition (3) is
necessary, too. Intuitively, if the first ranks where $P$ and $Q$ 
differentiate between two outcomes $I$ and $J$ (which are optimal for
ranks less than $i$) are not equal, these first ranks must both be 
larger than $j$. Otherwise, one can find a selector with rules of ranks 
in $[i,j]$, that will make one of the interpretation optimal in one 
extended problem but not in the other.  

Next, we discuss some special cases of the characterization. First, we
consider the case $i=1$, which allows for a simplification of
Theorem~\ref{thm:sequivrankedprefeq}.
\begin{corollary}\label{cor:sequivrankedprefeq1}
For every ranked optimization problems $P$ and $Q$, and every rank
interval $[1,j]$, $P \equiv^{s,[1,j]} Q$ if and only if the following
conditions hold:
\begin{enumerate}
\item $\mu(P)=\mu(Q)$
\item ${>^P_{\mu(P)}}={>^Q_{\mu(Q)}}$
\item For every $I,J\in \mu(P)$, $\dff^P(I,J)=\dff^Q(I,J)$ or both $\dff^P(I,J)>j$ and $\dff^Q(I,J)>j$.
\end{enumerate}
\end{corollary}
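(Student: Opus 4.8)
The plan is to obtain the corollary as the instance $i=1$ of Theorem~\ref{thm:sequivrankedprefeq}, rewriting each of its three conditions into the simpler form claimed here. First I would unwind the relevant abbreviations. Since $<1$ is shorthand for the rank interval $[1,0]$, the selector $S_{<1}=S_{[1,0]}=\{r\in S\mid 1\leq\rank{r}\leq 0\}$ is empty, so $P_{<1}=(P^g,\emptyset)$. For an empty selector the strict relation $>$ is empty (no witnessing rule $r'$ exists) while $\geq$ holds between every pair of interpretations, so no outcome is strictly preferred to any other and every outcome is optimal; hence $\pref(P_{<1})=\mu(P_{<1})$. Because the set of outcomes of an optimization problem is determined by its generator alone (as noted after Proposition~\ref{prop:ortho}), we have $\mu(P_{<1})=\mu(P)$, and therefore $\pref(P_{<1})=\mu(P)$. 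The same computation gives $\pref(Q_{<1})=\mu(Q)$.

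Substituting these identities into Theorem~\ref{thm:sequivrankedprefeq} with $i=1$, condition (1), $\pref(P_{<1})=\pref(Q_{<1})$, becomes exactly $\mu(P)=\mu(Q)$, and condition (2), ${>^P_{\pref(P_{<1})}}={>^Q_{\pref(Q_{<1})}}$, becomes ${>^P_{\mu(P)}}={>^Q_{\mu(Q)}}$. These are verbatim conditions (1) and (2) of the corollary, so no further work is needed for them.

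The only condition needing genuine attention is (3). The theorem quantifies over pairs $I,J\in\pref(P_{<1})=\mu(P)$ restricted by $1<\dff^P(I,J)$ or $1<\dff^Q(I,J)$, whereas the corollary quantifies over all pairs $I,J\in\mu(P)$. To close this gap I would use the already-noted fact that $\dff^P(I,J)\geq 1$ for every pair. Consequently the pairs the theorem omits are precisely those with $\dff^P(I,J)=\dff^Q(I,J)=1$; for any such pair the first disjunct of the corollary's condition (3), namely $\dff^P(I,J)=\dff^Q(I,J)$, already holds. Thus extending the quantifier from the theorem's restricted set to all of $\mu(P)$ adds only vacuously true instances, and the two formulations of (3) are equivalent. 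Combining the three reformulated conditions then yields the stated equivalence.

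I do not expect a real obstacle here, since the corollary is a direct specialization; the main point to get right is the bookkeeping that $\pref(P_{<1})$ collapses to $\mu(P)$ (relying on the empty selector giving an empty $>$ and outcomes depending only on the generator) and that widening the quantifier in condition (3) introduces no new constraints, which in turn rests on the lower bound $\dff\geq 1$.
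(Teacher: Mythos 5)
Your proposal is correct and follows essentially the same route as the paper's own proof: specialize Theorem~\ref{thm:sequivrankedprefeq} to $i=1$, observe that the empty selector of $P_{<1}$ forces $\pref(P_{<1})=\mu(P)$, and note that the pairs excluded by the theorem's precondition are exactly those with $\dff^P(I,J)=\dff^Q(I,J)=1$, for which condition (3) holds trivially. Your write-up is, if anything, slightly more careful than the paper's in negating the disjunctive precondition via the bound $\dff\geq 1$.
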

\begin{proof}
Starting from 
Theorem~\ref{thm:sequivrankedprefeq}, we
note that the se\-lector of $P_{<1}$ is empty and hence $\pi(P_{<1})
=\mu(P)$. Moreover, if the precondition $i<\dff^P(I,J)$ and
$i<\dff^Q(I,J)$ in condition (3) of Theorem~\ref{thm:sequivrankedprefeq}
is not satisfied for $i=1$ and a pair $I,J\in \mu(P)$, then
$\dff^P(I,J)=1$ and $\dff^Q(I,J)=1$ and thus the consequent is
satisfied in that case as well, which allows for omitting the
precondition.
\end{proof}

If in addition $j = \infty$, we obtain the case of rank-unrestricted
selector contexts, and condition (3) 
can be simplified once more, since $\dff^P(I,J)>j$ and
$\dff^Q(I,J)>j$ never hold for $j = \infty$.
\begin{corollary}\label{cor:sequivrankedpref}
For every optimization problems $P$ and $Q$, $P \equiv^{s} Q$ (equivalently, $P \equiv^{s,\geq 1} Q$ or $P \equiv^{s,[1,\infty]} Q$) if and 
only if the following conditions hold:
\begin{enumerate}
\item $\mu(P)=\mu(Q)$
\item ${>^P_{\mu(P)}}={>^Q_{\mu(Q)}}$
\item for every $I,J\in \mu(P)$, $\dff^P(I,J) =\dff^Q(I,J)$.
\end{enumerate}
\end{corollary}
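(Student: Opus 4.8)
The plan is to derive this corollary directly from Corollary~\ref{cor:sequivrankedprefeq1} by specializing to $j=\infty$, after first reconciling the three equivalent notations that appear in the statement. I would begin by observing that the three relations $\equiv^s$, $\equiv^{s,\geq 1}$, and $\equiv^{s,[1,\infty]}$ coincide. By definition every preference rule carries a positive integer rank, so for an arbitrary selector $S$ we have $S=S_{[1,\infty]}$. Consequently the context class $\L^s$ of all selector problems $(\emptyset,S)$ is exactly $\L^{s,[1,\infty]}$, and $\geq 1$ is merely shorthand for the rank interval $[1,\infty]$. Hence all three symbols denote literally the same relation, and it suffices to characterize $\equiv^{s,[1,\infty]}$.

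Next I would invoke Corollary~\ref{cor:sequivrankedprefeq1} with $j=\infty$. Its conditions (1) and (2) are already verbatim the conditions (1) and (2) of the present statement, so no work is required there. The only thing left to check is that its condition (3) collapses to the simpler condition (3) stated here.

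The one substantive step is therefore the simplification of condition (3). Corollary~\ref{cor:sequivrankedprefeq1} requires, for every $I,J\in\mu(P)$, that $\dff^P(I,J)=\dff^Q(I,J)$ or that both $\dff^P(I,J)>j$ and $\dff^Q(I,J)>j$. Recall that $\dff^P(I,J)$ takes values in $\{1,2,\ldots\}\cup\{\infty\}$, and that, by the paper's convention, $k<\infty$ for every integer $k$; in particular no admissible value of $\dff$ is strictly greater than $\infty$. Thus for $j=\infty$ the disjunct ``both $\dff^P(I,J)>j$ and $\dff^Q(I,J)>j$'' is never satisfied, and condition (3) reduces to $\dff^P(I,J)=\dff^Q(I,J)$ for every $I,J\in\mu(P)$. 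This is exactly condition (3) of the corollary to be proved, which completes the reduction.

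I expect no genuine obstacle: the statement is a boundary case of the already-established Corollary~\ref{cor:sequivrankedprefeq1}, and the whole argument is a matter of unwinding notation (the identification of the three equivalence symbols) and checking that the ``$>j$'' escape clause becomes vacuous at $j=\infty$. The only point deserving a moment's care is confirming that $\infty$ is genuinely admitted as a value of $\dff$ (it is, by the explicit stipulation $\dff^P(I,J)=\infty$ when $I\approx^{P_{<k}}J$ for all $k$) and that the ordering convention indeed rules out any $\dff$ value exceeding $\infty$.
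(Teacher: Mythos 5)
Your proposal is correct and matches the paper's own derivation exactly: the paper obtains this corollary by specializing Corollary~\ref{cor:sequivrankedprefeq1} to $j=\infty$ and observing that the disjunct ``$\dff^P(I,J)>j$ and $\dff^Q(I,J)>j$'' can never hold in that case. Your additional care in identifying the three notations $\equiv^s$, $\equiv^{s,\geq 1}$, and $\equiv^{s,[1,\infty]}$ is a harmless elaboration of what the paper treats as immediate.
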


Next, we note that if an optimization problem is simple then
$\dff^P(I,J)>1$ if and only if $\dff^P(I,J)=\infty$, which is
equivalent to $I\approx^P J$. This observation leads to the following
characterization of strong sel-equivalence of simple optimization
problems.
\begin{corollary}\label{cor:simples}
For every two \emph{simple} optimization problems $P$ and $Q$, the following 
statements are equivalent:
\begin{enumerate}
\item[(a)] $P \equiv^s Q$ (equivalently, $P \equiv^{s,[1,\infty]} Q$)
\item[(b)] $P \equiv^{s,=1} Q$ (equivalently, $P \equiv^{s,[1,1]} Q$)
\item[(c)] $\mu(P)=\mu(Q)$ and $\geq^P_{\mu(P)} = \geq^Q_{\mu(Q)}$.
\end{enumerate}
\end{corollary}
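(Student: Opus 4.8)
The plan is to obtain all three equivalences directly from the two corollaries just proved, specializing them to simple problems via the observation (recorded immediately above the corollary) that for a simple problem $P$ one has $\dff^P(I,J)>1$ iff $\dff^P(I,J)=\infty$ iff $I\approx^P J$. Concretely, since every rule of a simple $P$ has rank $1$, the selector of $P_{<1}$ is empty and $P_{<k}=P$ for all $k\ge 2$; hence $I\approx^{P_{<k}}J$ holds for every $k$ exactly when $I\approx^P J$, and otherwise fails already at $k=2$. Thus $\dff^P(I,J)\in\{1,\infty\}$, equalling $\infty$ precisely when $I\approx^P J$, and the same holds for $Q$. This is the only place where simplicity enters, and it is what collapses the $\dff$-conditions of both corollaries into conditions on the $\approx$ relations.

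For statement (a), I would apply Corollary~\ref{cor:sequivrankedpref}: $P\equiv^s Q$ holds iff $\mu(P)=\mu(Q)$, ${>^P_{\mu(P)}}={>^Q_{\mu(Q)}}$, and $\dff^P(I,J)=\dff^Q(I,J)$ for all $I,J\in\mu(P)$. By the observation the third condition is equivalent to requiring that $I\approx^P J$ iff $I\approx^Q J$ for all $I,J\in\mu(P)$, that is, ${\approx^P_{\mu(P)}}={\approx^Q_{\mu(Q)}}$. For statement (b), I would apply Corollary~\ref{cor:sequivrankedprefeq1} with $j=1$: $P\equiv^{s,[1,1]}Q$ holds iff $\mu(P)=\mu(Q)$, ${>^P_{\mu(P)}}={>^Q_{\mu(Q)}}$, and for all $I,J\in\mu(P)$ either $\dff^P(I,J)=\dff^Q(I,J)$ or both exceed $1$. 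Since the $\dff$-values lie in $\{1,\infty\}$, ``both exceed $1$'' forces both to be $\infty$ and hence equal, so this disjunction again reduces to $\dff^P(I,J)=\dff^Q(I,J)$, i.e. to ${\approx^P_{\mu(P)}}={\approx^Q_{\mu(Q)}}$. Therefore (a) and (b) reduce to the identical triple of conditions: $\mu(P)=\mu(Q)$, equality of the restricted $>$-relations, and equality of the restricted $\approx$-relations.

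It remains to identify this triple with (c). Here I would use the elementary fact, immediate from the definitions ($I>J$ iff $I\geq J$ and $J\not\geq I$; $I\approx J$ iff $I\geq J$ and $J\geq I$), that on any fixed domain the relation $\geq$ is the disjoint union of $>$ and $\approx$ and, conversely, determines both of them. Restricting to the common domain $\mu(P)=\mu(Q)$, this gives ${\geq^P_{\mu(P)}}={\geq^Q_{\mu(Q)}}$ if and only if both ${>^P_{\mu(P)}}={>^Q_{\mu(Q)}}$ and ${\approx^P_{\mu(P)}}={\approx^Q_{\mu(Q)}}$ hold. Hence each of (a) and (b) is equivalent to ``$\mu(P)=\mu(Q)$ and ${\geq^P_{\mu(P)}}={\geq^Q_{\mu(Q)}}$'', which is exactly (c), so the three statements are equivalent.

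I do not anticipate a genuine obstacle: once the $\dff$-values are pinned to $\{1,\infty\}$ the argument is pure bookkeeping about how $\geq$, $>$ and $\approx$ interrelate. The one point that needs care is the justification that $\dff^P(I,J)=\infty$ exactly when $I\approx^P J$ in the simple case; I would state explicitly that $P_{<1}$ carries the empty selector and that $P_{<k}=P$ for $k\ge 2$ (because all ranks equal $1$), so that the largest $k$ with $I\approx^{P_{<k}}J$ is either $1$ or $\infty$. Everything else is a direct instantiation of the already-established corollaries.
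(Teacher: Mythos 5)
Your proof is correct and follows essentially the same route as the paper's: both rest on Corollaries~\ref{cor:sequivrankedpref} and~\ref{cor:sequivrankedprefeq1} together with the observation that for simple problems $\dff^P(I,J)\in\{1,\infty\}$, with $\dff^P(I,J)=\infty$ exactly when $I\approx^P J$. The only difference is organizational --- the paper argues via the cycle (a)$\Rightarrow$(b)$\Rightarrow$(c)$\Rightarrow$(a), whereas you reduce (a) and (b) each to the same explicit triple of conditions and then identify that triple with (c) using the decomposition of $\geq$ into the disjoint union of $>$ and $\approx$; both arguments are sound.
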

\begin{proof}
The implication (a)$\Rightarrow$(b) is evident from the definitions.

\smallskip
\noindent
(b)$\Rightarrow$(c)
From Corollary~\ref{cor:sequivrankedprefeq1} with $j=1$ we directly
obtain $\mu(P)=\mu(Q)$. %
The condition $\geq^P_{\mu(P)}=\geq^Q_{\mu(Q)}$ 
follows from conditions (2) and (3) of that corollary. Indeed, let us
consider $I,J\in \mu(P)$ such that $I \geq^P J$ and distinguish two 
cases. If (i) $\dff^P(I,J)=1$ then $I >^P J$ and
by condition (2) of Corollary~\ref{cor:sequivrankedprefeq1}, also 
$I >^Q J$, implying $I \geq^Q J$. If (ii) $\dff^P(I,J)>1$ then by 
condition (3) of Corollary~\ref{cor:sequivrankedprefeq1}, $\dff^Q(I,J)>1$.
Since $P, Q$ are simple, $I \approx^Q J$, and consequently
$I \geq^Q J$. By symmetry, we also have that $I \geq^Q J$ implies 
$I \geq^P J$. Thus, $\geq^P_{\mu(P)}=\geq^Q_{\mu(Q)}$.

\smallskip
\noindent
(c)$\Rightarrow$(a)
From (c) it follows that $>^P_{\mu(P)} = >^Q_{\mu(Q)}$ and $\approx^P_{\mu(P)} = \approx^Q_{\mu(Q)}$.
Thus, the conditions (1) and (2) of Corollary~\ref{cor:sequivrankedpref} follow. To prove the condition (3), 
let us first
assume $\dff^P(I,J)> 1$ for $I,J \in \mu(P)$. It follows that $\dff^P(I,J)=\infty$ and thus
$I\approx^P_{\mu(P)} J$. By our earlier observation also $I\approx^Q_{\mu(Q)} J$ and thus $\dff^Q(I,J)=\infty$. Hence $\dff^P(I,J)=\dff^Q(I,J)$. For $\dff^Q(I,J)> 1$ we reason analogously. In the last remaining case, $\dff^P(I,J)=1$ and 
$\dff^Q(I,J)=1$, so we directly obtain $\dff^P(I,J)=\dff^Q(I,J)$. By 
Corollary~\ref{cor:sequivrankedpref}, $P \equiv^{s} Q$ follows.
\end{proof}

Corollary \ref{cor:simples} shows, in particular, that for simple 
problems there is no difference between the relations $\equiv^{s,\geq 1}$ 
and $\equiv^{s,=1}$. This property reflects the role of preference rules
of rank 2 and higher. They allow us to break ties among optimal outcomes, 
as defined by preference rules of rank 1. Thus, they can eliminate some 
of these outcomes from the family of optimal ones, but they cannot 
introduce new optimal outcomes. Therefore, they do not affect strong
sel-equivalence of simple problems. This property has the following 
generalization to ranked optimization problems.

\begin{corollary}
\label{cor:bigranks}
Let $P$ and $Q$ be ranked optimization problems and let $k$ be the
maximum rank of a preference rule in $P\cup Q$. Then the relations
$\equiv^{s,\geq k}$ and $\equiv^{s,=k}$ coincide.
\end{corollary}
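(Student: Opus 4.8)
The plan is to reduce both relations to the characterization of Theorem~\ref{thm:sequivrankedprefeq}. First I would note that $\equiv^{s,\geq k}$ and $\equiv^{s,=k}$ are shorthands for $\equiv^{s,[k,\infty]}$ and $\equiv^{s,[k,k]}$. One inclusion is immediate: since $\L^{s,[k,k]}\subseteq \L^{s,[k,\infty]}$, strong sel-equivalence relative to the larger context class implies strong sel-equivalence relative to the smaller one, so $P\equiv^{s,[k,\infty]} Q$ entails $P\equiv^{s,[k,k]} Q$. To obtain the reverse implication, and hence coincidence of the two relations, I would apply Theorem~\ref{thm:sequivrankedprefeq} with $i=k$ and the two choices $j=\infty$ and $j=k$, and show that the three characterizing conditions agree in both instantiations.

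Conditions (1) and (2) of Theorem~\ref{thm:sequivrankedprefeq} depend only on $i=k$ and not on $j$, so they are verbatim the same for $[k,\infty]$ and $[k,k]$. The whole argument therefore concentrates on condition (3). The key step is the observation that, because $k$ is the maximum rank of a preference rule in $P\cup Q$, the quantity $\dff^P(I,J)$ is the least rank at which $I$ and $J$ receive differing satisfaction degrees on some rule of $P$ (and $\infty$ if no such rule exists), so that it takes values only in $\{1,\dots,k\}\cup\{\infty\}$; the same holds for $\dff^Q(I,J)$, as every rule of $Q$ also has rank at most $k$. Consequently $\dff^P(I,J)>k$ holds if and only if $\dff^P(I,J)=\infty$, and likewise for $Q$.

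Using this, I would collapse both versions of condition (3), noting first that the premise ``$k<\dff^P(I,J)$ or $k<\dff^Q(I,J)$'' is identical in the two cases. For $j=\infty$ the second disjunct of the conclusion, ``both $\dff^P(I,J)>\infty$ and $\dff^Q(I,J)>\infty$'', can never hold, so condition (3) reduces to: for all relevant $I,J\in\pref(P_{<k})$, $\dff^P(I,J)=\dff^Q(I,J)$. For $j=k$ the second disjunct ``both $\dff^P(I,J)>k$ and $\dff^Q(I,J)>k$'' is, by the observation above, equivalent to ``$\dff^P(I,J)=\dff^Q(I,J)=\infty$'', which already entails the first disjunct; hence it is absorbed, and condition (3) again reduces to the same statement. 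Thus all three conditions coincide for $[k,\infty]$ and $[k,k]$, and the two relations coincide.

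The main (and essentially only) subtle point is establishing the value-range claim for $\dff$ and recognizing that it is precisely the maximality of $k$ that rules out the gap strictly between $k$ and $\infty$. Without this hypothesis the two disjuncts in condition (3) would genuinely diverge --- for instance one could have $\dff^P(I,J)=k+1\neq k+2=\dff^Q(I,J)$ with both values exceeding $k$ --- and the equivalence of $\equiv^{s,\geq k}$ and $\equiv^{s,=k}$ would fail, which is consistent with Example~\ref{ex:6n}.
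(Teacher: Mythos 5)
Your proof is correct and follows essentially the same route as the paper's: both reduce the claim to condition (3) of Theorem~\ref{thm:sequivrankedprefeq} with $i=k$ and use the maximality of $k$ to conclude that $\dff^P(I,J)>k$ (resp.\ $\dff^Q(I,J)>k$) forces $\dff^P(I,J)=\infty$ (resp.\ $\dff^Q(I,J)=\infty$), so that the $j=k$ and $j=\infty$ instances of the condition collapse to the same requirement. Your presentation is slightly more systematic in showing that both characterizations literally coincide, but the key observation and overall structure are the same.
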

\begin{proof}
Clearly, $P\equiv^{s,\geq k}Q$ implies $P\equiv^{s,=k} Q$. Thus,
it is enough to prove that if $P\equiv^{s,=k} Q$ then $P\equiv^{s,\geq k}Q$.
To prove the condition 
(3), let us consider $I,J\in\pref(P_{<k})$ such that $\dff^P(I,J)>k$. By 
the condition (3) of Theorem \ref{thm:sequivrankedprefeq}, $\dff^Q(I,J)>
k$. Since $k$ is the maximum rank of a preference rule in $P$ or $Q$, 
$\dff^P(I,J)=\infty$ and $\dff^Q(I,J)=\infty$. Thus, $\dff^P(I,J)=
\dff^Q(I,J)$ (the case $\dff^Q(I,J)>k$ is similar).
\end{proof}

Our observation on the role of preference rules with ranks higher than 
ranks of rules in $P$ or $Q$ also implies that $P$ and $Q$ are strongly
sel-equivalent relative to selectors consisting exclusively of such rules if
and only if $P$ and $Q$ are equivalent (have the same optimal outcomes), 
and if optimal outcomes that ``tie'' in $P$ also ``tie'' in $Q$ and
conversely. Formally, we have the following result.

\begin{corollary}
\label{cor:biggerranks}
Let $P$ and $Q$ be ranked optimization problems and let $k$ be the
maximum rank of a preference rule in $P\cup Q$. Then $P\equiv^{s,\geq k+1}
Q$ if and only if $\pi(P)=\pi(Q)$ and $\approx^P_{\pref(P)}=
\approx^Q_{\pref(Q)}$.
\end{corollary}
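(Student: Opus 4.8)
The plan is to derive the corollary directly from Theorem~\ref{thm:sequivrankedprefeq}, instantiated with the rank interval $[k+1,\infty]$ (so $i=k+1$ and $j=\infty$), and then to simplify each of its three conditions. The first step is to observe that, since $k$ is the maximum rank of a preference rule in $P\cup Q$, restricting to ranks below $k+1$ discards no rules: $P_{<k+1}=P$ and $Q_{<k+1}=Q$. Hence $\pref(P_{<i})=\pref(P)$ and $\pref(Q_{<i})=\pref(Q)$, and condition (1) of the theorem becomes exactly $\pi(P)=\pi(Q)$, the first half of the claimed characterization.

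Next I would dispose of condition (2). I claim $>^P_{\pref(P)}$ is empty: if $I,J\in\pref(P)$ and $I>^P J$, then $J$ admits a strictly better outcome in $\mu(P)$ and so is not optimal, contradicting $J\in\pref(P)$. The same holds for $Q$, so both restricted relations are empty and condition (2) is satisfied automatically, independently of condition (1); it therefore contributes nothing to the characterization.

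The core of the argument is condition (3), which I would reduce to the equality of the $\approx$ relations. The key fact is that, because $k$ is the maximum rank, $P_{<m}=P$ for every $m>k$; therefore $\dff^P(I,J)$ never takes a finite value exceeding $k$, so $\dff^P(I,J)\in\{1,\ldots,k\}\cup\{\infty\}$, and moreover $\dff^P(I,J)=\infty$ holds exactly when $I\approx^P J$. In particular the precondition $i<\dff^P(I,J)$, with $i=k+1$, holds iff $\dff^P(I,J)=\infty$ iff $I\approx^P J$, and symmetrically for $Q$. Since $j=\infty$, the disjunct ``both $\dff^P(I,J)>j$ and $\dff^Q(I,J)>j$'' of condition (3) can never hold, so condition (3) collapses to: for all $I,J\in\pref(P)$, if $I\approx^P J$ or $I\approx^Q J$ then $\dff^P(I,J)=\dff^Q(I,J)$. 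Using the correspondence between $\dff=\infty$ and $\approx$, this is precisely the assertion that $I\approx^P J\iff I\approx^Q J$ for all $I,J\in\pref(P)=\pref(Q)$, i.e.\ $\approx^P_{\pref(P)}=\approx^Q_{\pref(Q)}$. Spelling out both directions: if the $\approx$ relations agree, the premise forces both $I\approx^P J$ and $I\approx^Q J$, giving $\dff^P(I,J)=\dff^Q(I,J)=\infty$; conversely, if condition (3) holds then $I\approx^P J$ triggers the premise, forcing $\dff^Q(I,J)=\dff^P(I,J)=\infty$ and hence $I\approx^Q J$, and symmetrically.

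I expect the main obstacle to be the bookkeeping around condition (3) --- specifically, confirming that the range of $\dff^P(I,J)$ is confined to $\{1,\ldots,k\}\cup\{\infty\}$ and that the precondition degenerates to $I\approx^P J\vee I\approx^Q J$. Once that $\dff$--$\approx$ correspondence is established, the equivalence between condition (3) and equality of the $\approx$ relations on preferred models is a short two-directional check, and the recognition that condition (2) is vacuous completes the reduction to $\pi(P)=\pi(Q)$ together with $\approx^P_{\pref(P)}=\approx^Q_{\pref(Q)}$.
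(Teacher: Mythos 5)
Your proposal is correct and follows essentially the same route as the paper's proof: both instantiate Theorem~\ref{thm:sequivrankedprefeq} with $i=k+1$, $j=\infty$, note that $P_{<k+1}=P$ and $Q_{<k+1}=Q$ so that condition (1) becomes $\pi(P)=\pi(Q)$, observe that $>^P_{\pref(P)}$ and $>^Q_{\pref(Q)}$ are empty so condition (2) is vacuous, and use the fact that $\dff$ exceeds $k$ only by being $\infty$ (equivalently, $\approx$) to identify condition (3) with $\approx^P_{\pref(P)}=\approx^Q_{\pref(Q)}$. Your write-up is somewhat more explicit about the range of $\dff$ and the degeneration of the precondition, but the argument is the same.
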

\begin{proof}
Clearly, $P_{<k+1}=P$ and $Q_{<k+1}=Q$ and so, $\pref(P_{<k+1})=\pref(P)$
and $\pref(Q_{<k+1})=\pref(Q)$. Thus, the ``only-if'' part follows by 
Theorem \ref{thm:sequivrankedprefeq} (the condition (1) of that theorem
reduces to $\pi(P)=\pi(Q)$ and the condition (3) implies $\approx^P_{\pref(P)}=
\approx^Q_{\pref(Q)}$). To prove the ``if'' part, we note that
the condition (1) of Theorem \ref{thm:sequivrankedprefeq} holds by the 
assumption. Moreover, the relations $>^P_{\pref(P)}$ and $>^Q_{\pref(Q)}$ 
are empty and so, they coincide. Thus, the condition (2) of Theorem 
\ref{thm:sequivrankedprefeq} holds. Finally, if $I,J\in\pref(P)$, and 
$\dff^P(I,J)>k+1$, then $\dff^P(I,J)=\infty$ and so, $I\approx^P J$. By
the assumption, $I\approx^Q J$, that is, $\dff^Q(I,J)=\infty=\dff^P(I,J)$.
The case when $\dff^Q(I,J)>k+1$ is similar.
Thus, the condition~(3) of Theorem \ref{thm:sequivrankedprefeq} holds, too, and
$P\equiv^{s,\geq k+1} Q$ follows.
\end{proof}

Lastly, we give some simple examples illustrating how our results 
can be used to ``safely'' modify or simplify optimization problems, 
that is rewrite one into another strongly sel-equivalent one.

\begin{example}
\label{ex:5n}
Let $P=(T,S)$, where $T=\{a\vee b\vee c, \neg(a\land b),
\neg(a\land c), \neg(b\land c)\}$ and $S=\{a>c\LPif,\ b>c\LPif\}$, 
and $P'=(T,S')$, where $S'=\{a\vee b > c\LPif\}$. Regarding these
problems as CO problems, we have that $\mu(P)=\mu(P')=\{\{a\},\{b\},
\{c\}\}$. Moreover, it is evident that $\geq^P_{\mu(P)}=\geq^Q_{\mu(Q)}$. 
Thus, by Corollary \ref{cor:simples}, $P$ and $P'$ are strongly sel-equivalent.
In other words, we can faithfully replace rules 
$a>c\LPif$, $b>c\LPif$ in the selector of any 
optimization problem 
with generator $T$
by the single rule $a\vee b > c\LPif$.
\end{example}

Next, for an example of a more general principle, we note that removing 
preference rules with only one option in the head yields a problem that 
is strongly sel-equivalent.

\begin{corollary}\label{cor:rem}
Let $P$ and $Q$ be two CO or ASO problems such that $P^g=Q^g$ and
$Q^s$ is obtained from $P^s$ by removing all preference rules with only
one option in the head. Then $P$ and $Q$ are strongly sel-equivalent.
\end{corollary}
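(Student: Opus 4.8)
The plan is to reduce the claim to the characterization of strong sel-equivalence provided by Corollary~\ref{cor:sequivrankedpref}, exploiting the fact that a preference rule with a single option in its head is completely inert under the satisfaction-degree semantics. Since $P$ and $Q$ may be ranked, I would work with Corollary~\ref{cor:sequivrankedpref} (rather than the simple-case Corollary~\ref{cor:simples}) and verify its three conditions.

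The central observation I would establish first is that any rule $r$ whose head is a singleton $\head{r}=\{\phi_1\}$ satisfies $\degree{I}{r}=1$ for \emph{every} interpretation $I$. This is immediate from the definition of satisfaction degree: if $r$ is relevant to $I$, i.e.\ $I\models\body{r}$ and $I\models\bigvee\head{r}=\phi_1$, then $\degree{I}{r}=\min\{i\mid I\models\headi{i}{r}\}=1$ because $\phi_1$ is the only option; and if $r$ is irrelevant to $I$, then $\degree{I}{r}=1$ by convention. Thus a single-option rule contributes the constant value $1$, independently of $I$.

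Next, since $Q^s$ is exactly $P^s$ with all its single-option rules deleted, every removed rule $r$ has $\degree{I}{r}=\degree{J}{r}=1$ for every pair $I,J$. I would then check directly against the ranked definition of $\geq^S$ that these rules are neutral: a removed rule can never be the witnessing rule $r'$ (the required strict inequality $\degree{I}{r'}<\degree{J}{r'}$ is impossible when both degrees equal $1$), and it vacuously satisfies both side conditions --- the ``$\leq$'' condition for rules of the same rank and the ``$=$'' condition for rules of smaller rank. Hence $I\geq^P J$ if and only if $I\geq^Q J$, and the identical argument applied to each truncation shows $I\geq^{P_{<k}}J$ if and only if $I\geq^{Q_{<k}}J$ for every $k$. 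From these equalities I obtain ${>^P}={>^Q}$ and $\approx^{P_{<k}}\,=\,\approx^{Q_{<k}}$ for all $k$, and therefore $\dff^P(I,J)=\dff^Q(I,J)$ for all $I,J$. Finally, condition~(1) of Corollary~\ref{cor:sequivrankedpref}, namely $\mu(P)=\mu(Q)$, holds because $P^g=Q^g$ and the set of outcomes depends only on the generator; conditions~(2) and~(3) follow at once from the equalities of ${>}$ and of $\dff$ just derived, so $P\equiv^s Q$.

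I expect the only delicate point to be the third step: verifying carefully that the deleted rules are genuinely inert across all three clauses of the ranked preference relation and at every rank truncation $P_{<k}$ versus $Q_{<k}$, so that both the strict relation ${>}$ and the indifference relation $\approx$ (hence $\dff$) are preserved. This bookkeeping is routine once the degree-$1$ observation is in hand, and everything else reduces to invoking the established characterization.
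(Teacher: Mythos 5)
Your proposal is correct and follows essentially the same route as the paper: the paper's own proof is a one-line appeal to the observation that a single-option rule $r$ has $v_I(r)=1$ for every interpretation $I$, from which the conditions of the characterization of strong sel-equivalence (Theorem~\ref{thm:sequivrankedprefeq}, or equivalently Corollary~\ref{cor:sequivrankedpref} for the rank interval $[1,\infty]$) follow. Your write-up merely spells out the bookkeeping that the paper leaves implicit, and that bookkeeping is accurate.
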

\begin{proof}
The conditions (1)-(3) of Theorem \ref{thm:sequivrankedprefeq} all
follow from an observation that for every interpretation $I$ and every 
preference rule $r$ with just one option in the head, $v_I(r)=1$.
\end{proof}

\section{Strong gen-equivalence}
\label{gen-eq}

We 
now
focus on the case of strong gen-equivalence. The 
semantics of generators makes a difference here but the difference
concerns only the fact that under the two semantics we consider,
the concepts of strong equivalence are different. Other aspects of 
the characterizations are the same. Specifically, generators have to 
be strongly 
equivalent relative to a selected semantics. Indeed, if the generators 
are not strongly equivalent, one can extend them uniformly so that after 
the extension one problem has a single outcome, which is then trivially 
an optimal one, too, while the other one has no outcomes and so, no 
optimal ones. Second, the preference relation $>$ defined by the 
selectors of the problems considered must coincide. Thus, a single 
theorem handles both types of problems. 

\begin{theorem}
\label{thm:simpleg}
For every two CO (ASO, respectively) problems $P$ and $Q$, $P \equiv_g Q$ 
if and only if $P^g$ and $Q^g$ are strongly equivalent (that is, 
$\Mod{P^g}=\Mod{Q^g}$ for CO problems, and $\HT{P^g} = \HT{Q^g}$
for ASO problems) and ${>^P_{{\Mod{P^g}}}}= {>^Q_{{\Mod{Q^g}}}}$.
\end{theorem}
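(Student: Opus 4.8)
The plan is to argue both directions straight from the definition: $P\equiv_g Q$ means $\pi(P\cup R)=\pi(Q\cup R)$ for every generator problem $R=(T_R,\emptyset)\in\L^g$. The structural fact I would exploit throughout is that such an $R$ leaves the selector untouched, $(P\cup R)^s=P^s$, so $>^{P\cup R}\,=\,>^P$, and therefore
\[
\pi(P\cup R)=\{\,I\in\mu(P\cup R)\mid \text{there is no }J\in\mu(P\cup R)\text{ with }J>^P I\,\}.
\]
Thus the preferred outcomes of a gen-extension depend on $P$ only through the outcome set $\mu(P\cup R)=\mu((P^g\cup T_R,P^s))$ and through $>^P$ restricted to that set. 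This already explains why the statement asks only for agreement of $>$, not of $\geq$ or $\approx$: a gen-context never adds preference rules, so optimality is computed from the fixed strict relation alone.

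\textbf{Sufficiency.} Assume the two conditions and fix $R=(T_R,\emptyset)$. Strong equivalence of generators gives $\mu(P\cup R)=\mu(Q\cup R)$: for CO problems $\Mod{P^g}=\Mod{Q^g}$ yields $\Mod{P^g\cup T_R}=\Mod{Q^g\cup T_R}$, while for ASO problems $\HT{P^g}=\HT{Q^g}$ gives $\HT{P^g\cup T_R}=\HT{P^g}\cap\HT{T_R}=\HT{Q^g}\cap\HT{T_R}=\HT{Q^g\cup T_R}$ and hence $\AS{P^g\cup T_R}=\AS{Q^g\cup T_R}$. Call this common set $V$. In either semantics $V\subseteq\Mod{P^g}=\Mod{Q^g}$ (answer sets are classical models), so condition (2) forces ${>^P_V}={>^Q_V}$. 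Substituting $V$ and the agreeing strict relations into the displayed formula yields $\pi(P\cup R)=\pi(Q\cup R)$.

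\textbf{Necessity.} For the converse I argue contrapositively, using contexts that force at most one or exactly two outcomes, so that preferences become irrelevant or decisive in a controlled way. For condition (1), CO case: if $I$ lies in $\Mod{P^g}\triangle\Mod{Q^g}$, take $T_R$ to be the theory of all literals true in $I$ (unique model $I$); then one of $\pi(P\cup R),\pi(Q\cup R)$ is $\{I\}$ and the other is $\emptyset$, a contradiction. For the ASO case, $\HT{P^g}=\HT{Q^g}$ is, by Lifschitz--Pearce--Valverde, equivalent to strong equivalence of $P^g,Q^g$ under equilibrium models; from a pair $\langle X,Y\rangle$ separating their HT-models I would reuse the distinguishing theory of that proof as $T_R$, yielding different and ``pinned down'' answer-set sets (a singleton versus the empty set, or two distinct singletons), each with $\pi=\mu$, which again contradicts $P\equiv_g Q$. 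For condition (2), having fixed $M=\Mod{P^g}=\Mod{Q^g}$, suppose ${>^P_M}\neq{>^Q_M}$, say $I>^P J$ but $I\not>^Q J$ with $I,J\in M$. I force the outcomes to be exactly $\{I,J\}$ (under the convention that only the finitely many atoms of $P^s\cup Q^s$ matter, $I,J$ may be taken to differ finitely; for ASO one builds $T_R$ with answer sets $\{I,J\}$, neither having a proper HT-submodel, so adding the strongly equivalent generators preserves precisely these two). Then $J\notin\pi(P\cup R)$ because $I>^P J$, whereas $J\in\pi(Q\cup R)$ because $I\not>^Q J$ and $I,J$ are the only outcomes --- again contradicting $P\equiv_g Q$.

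\textbf{Main obstacle.} Sufficiency and the CO side of necessity are routine. The delicate point is necessity of (1) in the ASO case. One cannot simply force all of $Y$ as facts: doing so collapses the here/there distinction and detects only classical models, thereby proving $\Mod{P^g}=\Mod{Q^g}$ but never the finer $\HT{P^g}=\HT{Q^g}$. The argument must instead import the genuinely here-and-there construction underlying the Lifschitz--Pearce--Valverde theorem and then verify that the generator context it produces leaves the two problems with outcome sets small enough (singleton or empty on each side) that the difference is not washed out in the passage from $\mu$ to $\pi$. Checking that last preservation is where the real work lies.
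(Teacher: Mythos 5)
Your sufficiency argument and your treatment of condition (2) follow essentially the same route as the paper: strong equivalence of generators gives a common outcome set $V\subseteq \Mod{P^g}=\Mod{Q^g}$ on which the fixed strict relations agree, and for necessity of (2) one pins the outcome set down to exactly $\{I,J\}$. (For the ASO side of (2) you still owe the explicit pinning theory; the paper uses the set of binary disjunctions $\Pi[I,J]=\{\,\ell_I\vee \ell_J \st \ell_I \mbox{ a literal true in } I,\ \ell_J \mbox{ a literal true in } J\,\}$, whose classical models \emph{and} answer sets, when added to any theory having $I,J$ among its models, are exactly $\{I,J\}$ — a fact imported from Ferraris/Ferraris--Lifschitz rather than proved ad hoc.)

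The genuine gap is the one you flag yourself: necessity of condition (1) in the ASO case is left as ``where the real work lies,'' and the route you sketch — extracting a distinguishing generator from a separating HT-pair $\langle X,Y\rangle$ — amounts to re-proving a direction of the Lifschitz--Pearce--Valverde theorem and then still verifying that the resulting difference in $\mu$ survives the passage to $\pi$. The paper sidesteps all of this. Condition (1) is \emph{stated} as ``$P^g$ and $Q^g$ are strongly equivalent,'' with the HT-model identity only as the LPV reformulation quoted as a black box. So for necessity one argues directly from the definition: if the generators are not strongly equivalent, there is some $R\in\L^g$ with $\mu(P\cup R)\neq\mu(Q\cup R)$, say $I\in\mu(P\cup R)\setminus\mu(Q\cup R)$; now add the constraint theory $\Pi[I]=\{a\rightarrow\bot\st a\notin I\}\cup\{\neg a\rightarrow\bot\st a\in I\}$. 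Constraints of this form eliminate outcomes without creating new ones under either semantics, so $\mu(P\cup R\cup(\Pi[I],\emptyset))=\{I\}$ while $\mu(Q\cup R\cup(\Pi[I],\emptyset))=\emptyset$, whence $\pi=\{I\}$ versus $\pi=\emptyset$ and the difference is manifestly not ``washed out.'' No HT-internals are needed, and the preservation worry you identify disappears because a singleton or empty outcome set makes the preference relation irrelevant. You should replace your sketched LPV-reconstruction with this two-stage argument (first a witnessing context from the definition of strong equivalence, then the $\Pi[I]$ pinning) to close the proof.
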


In view of Examples \ref{ex:1n} and \ref{ex:2n}, the result is not unexpected.
The two examples demonstrated that the conditions of the characterization
cannot, in general, be weakened.

It is clear from Corollary \ref{cor:sequivrankedpref} and Theorem 
\ref{thm:simpleg} that strong sel-equivalence of CO problems is a 
stronger property than their strong gen-equivalence. 

\begin{corollary}\label{cor:coincide}
For every two CO problems $P$ and $Q$, $P \equiv^s Q$ implies
$P \equiv_g Q$.
\end{corollary}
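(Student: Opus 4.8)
The plan is to read this off directly from the two characterization theorems already in hand: Corollary~\ref{cor:sequivrankedpref} pins down strong sel-equivalence $\equiv^s$ by three conditions, while Theorem~\ref{thm:simpleg} pins down strong gen-equivalence $\equiv_g$ of CO problems by two. Since $P$ and $Q$ are CO problems, their outcomes are exactly the classical models of their generators, i.e.\ $\mu(P)=\Mod{P^g}$ and $\mu(Q)=\Mod{Q^g}$. Under this identification the two clauses of Theorem~\ref{thm:simpleg} are literally (a subset of) the clauses of Corollary~\ref{cor:sequivrankedpref}, so the whole implication reduces to a short chain of substitutions.

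Concretely, I would assume $P\equiv^s Q$ and apply Corollary~\ref{cor:sequivrankedpref} to obtain its three conditions. Condition~(1) gives $\mu(P)=\mu(Q)$, which under the CO reading is $\Mod{P^g}=\Mod{Q^g}$; as Theorem~\ref{thm:simpleg} records (and as follows from the coincidence of equivalence and strong equivalence in classical logic), this equality of classical model sets is exactly the statement that $P^g$ and $Q^g$ are strongly equivalent, the first premise of that theorem. Condition~(2), namely ${>^P_{\mu(P)}}={>^Q_{\mu(Q)}}$, rewrites verbatim as ${>^P_{\Mod{P^g}}}={>^Q_{\Mod{Q^g}}}$, the second premise. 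Having both premises, Theorem~\ref{thm:simpleg} yields $P\equiv_g Q$.

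There is no real obstacle here; the proof is essentially a relabelling. The only points worth flagging are that the argument relies on the CO identification $\mu(\cdot)=\Mod{\cdot^g}$ (so that the semantics-agnostic conditions of Corollary~\ref{cor:sequivrankedpref} specialize correctly), and that condition~(3) of that corollary—the agreement of the $\dff$ values—is not needed at all. This is to be expected: strong gen-equivalence is blind to the internal rank structure that merely breaks ties among equally preferred outcomes, so only the model set and the strict relation $>$ on it are relevant, and these are supplied by conditions~(1) and~(2) alone.
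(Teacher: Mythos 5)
Your proof is correct and is essentially the paper's own argument: the corollary is stated there as an immediate consequence of combining Corollary~\ref{cor:sequivrankedpref} with Theorem~\ref{thm:simpleg}, using exactly the CO identification $\mu(\cdot)=\Mod{\cdot^g}$ and the coincidence of equivalence and strong equivalence for classical generators. Your observation that condition~(3) is superfluous matches the paper's surrounding discussion of why the implication is strict.
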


In general the implication in Corollary \ref{cor:coincide} cannot be
reversed. The problems $P_6$ and $P_7$ considered in Example \ref{ex:4n}
are not strongly sel-equivalent. However, based on Theorem~\ref{thm:simpleg}, 
they are strongly gen-equivalent. Indeed, $\Mod{P_6^g}=\Mod{P_7^g}$ and, 
writing $M$ for $\Mod{P_6^g}=\Mod{P_7^g}$, the relations $>^{P_6}_M$ and 
$>^{P_7}_M$ are both empty and so, equal. 

The relation between strong sel-equivalence and strong gen-equivalence
of ASO problem is more complex. In general, neither property implies 
the other even if both problems $P$ and $Q$ are assumed to be simple.
It is so because $P\equiv^s Q$ if and only if $AS(P^g)=AS(Q^g)$ and 
$\geq^P_{AS(P^g)} = \geq^P_{AS(P^g)}$ (Corollary \ref{cor:simples}), 
and $P\equiv^g Q$ if and only of $\HT{P^g}=\HT{Q^g}$ and $>^P_{\Mod{P^g}} 
= >^P_{\Mod{Q^g}}$ (Theorem \ref{thm:simpleg}). Now, $AS(P^g)=AS(Q^g)$ 
(regular equivalence of programs) does not imply $\HT{P^g}=\HT{Q^g}$ 
(strong equivalence) and $>^P_{\Mod{P^g}} = >^P_{\Mod{Q^g}}$ does not imply 
$\geq^P_{AS(P^g)} = \geq^P_{AS(P^g)}$.

\section{Strong equivalence --- the combined case} 
\label{eq-comb}

Finally, we consider the relation $\equiv^s_g$, which results from
considering contexts that combine both generators and selectors. Since
generators may vary here, as in the previous section, the semantics of
generators matters. But, as in the previous section, the difference
boils down to different characterizations of strong equivalence of 
generators.

We start with a result characterizing strong equivalence of CO and 
ASO problems relative to combined contexts (both generators and selectors 
possibly non-empty) with selectors consisting of rules of rank at least $i$ and at 
most $j$, respectively. 

\begin{theorem}\label{thm:cequivrankedprefeqcomb}
For every ranked CO (ASO, respectively) problems $P$ and $Q$,
and every rank interval $[i,j]$,
$P \equiv^{s,[i,j]}_g Q$ if and only if the following conditions hold:
\begin{enumerate}
\item $P^g$ and $Q^g$ are strongly equivalent (that is, $\Mod{P^g}=
\Mod{Q^g}$ for CO problems, and $\HT{P^g}=\HT{Q^g}$ for ASO problems)
\item ${>^P_{\Mod{P^g}}}={>^Q_{\Mod{Q^g}}}$
\item For every $I,J\in \Md(P^g)$ such that $i<\dff^P(I,J)$ or
$i<\dff^Q(I,J)$, $\dff^P(I,J)=\dff^Q(I,J)$ or both $\dff^P(I,J)>j$ and
$\dff^Q(I,J)>j$
\item ${>^{P_{<i}}_{\Mod{P^g}}}= {>^{Q_{<i}}_{\Mod{Q^g}}}$.
\end{enumerate}
\end{theorem}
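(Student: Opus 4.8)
The plan is to reduce the combined notion to strong sel-equivalence over purely generator contexts and then invoke the earlier characterizations. By Proposition~\ref{prop:ortho}, $P\equiv^{s,[i,j]}_g Q$ holds iff $P\cup R\equiv^{s,[i,j]} Q\cup R$ for every generator $R\in\L^g$. Such an $R$ adds no preference rules, so the selectors of $P\cup R$ and $Q\cup R$ are $P^s$ and $Q^s$; hence the relations $>$ and $\approx$, the degrees $\dff$, and the truncations are inherited: $>^{P\cup R}=\,>^P$, $\dff^{P\cup R}=\dff^P$, and $(P\cup R)_{<k}=P_{<k}\cup R$ (similarly for $Q$). Applying Theorem~\ref{thm:sequivrankedprefeq} to $P\cup R$ and $Q\cup R$, the goal becomes: its three conditions hold for $P\cup R,Q\cup R$ for \emph{every} $R\in\L^g$ iff conditions (1)--(4) of the present theorem hold. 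As the quantifier over $R$ distributes over the conjunction, I would treat the three conditions of Theorem~\ref{thm:sequivrankedprefeq} separately.

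I would first handle its condition (1). Quantified over all $R\in\L^g$, the requirement $\pref((P\cup R)_{<i})=\pref((Q\cup R)_{<i})$, i.e.\ $\pref(P_{<i}\cup R)=\pref(Q_{<i}\cup R)$, is by definition precisely $P_{<i}\equiv_g Q_{<i}$. By Theorem~\ref{thm:simpleg} applied to $P_{<i}$ and $Q_{<i}$ (whose generators are $P^g$ and $Q^g$), this is equivalent to the conjunction of condition (1) (strong equivalence of $P^g,Q^g$) and condition (4) (${>^{P_{<i}}_{\Md(P^g)}}={>^{Q_{<i}}_{\Md(Q^g)}}$). From here I assume (1) and (4); then $\Md(P^g)=\Md(Q^g)=:M$ (in the ASO case $\HT{P^g}=\HT{Q^g}$ still gives $\Md(P^g)=\Md(Q^g)$), the set $V(R):=\pref(P_{<i}\cup R)=\pref(Q_{<i}\cup R)$ is well defined, and $V(R)\subseteq M$ since outcomes of $P_{<i}\cup R$ are (classical or answer-set) models of $P^g$.

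The crux is a realizability observation, which I expect to be the main obstacle: for any $I,J\in M$ with $I\not>^{P_{<i}}J$ and $J\not>^{P_{<i}}I$ there is a generator $R$ with $V(R)=\{I,J\}$. For CO problems I would simply take $R=(T_R,\emptyset)$ with $\Md(T_R)=\{I,J\}$; the only outcomes of $P_{<i}\cup R$ are $I$ and $J$, and as neither $>^{P_{<i}}$-dominates the other, both are preferred. The ASO case is harder, since $I$ and $J$ must be made answer sets simultaneously: I would take $T_R$ to assert the facts $I\cap J$, to exclude all atoms outside $I\cup J$, and to add a disjunction $d_I\vee d_J$, where $d_I$ (resp.\ $d_J$) fixes the atoms of the symmetric difference to their values in $I$ (resp.\ $J$). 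A short HT-computation shows $\Md(P^g\cup T_R)=\{I,J\}$ and that the facts for $I\cap J$ enforce HT-minimality, whence $\AS{P^g\cup T_R}=\{I,J\}$ (using $I,J\in M$); as before both are preferred, so $V(R)=\{I,J\}$.

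With this in hand, conditions (2) and (3) follow. For condition (2), the ``if'' direction is immediate: $V(R)\subseteq M$ makes $>^P_{V(R)}$ a restriction of $>^P_M$, so (2) yields condition (2) of Theorem~\ref{thm:sequivrankedprefeq} for all $R$. For the converse, given $I>^P J$ with $I,J\in M$, either $I>^{P_{<i}}J$, and then condition (4) gives $I>^{Q_{<i}}J$, hence $I>^Q J$ (a low-rank strict preference is inherited by the full selector); or $I\not>^{P_{<i}}J$, in which case $J\not>^{P_{<i}}I$ (since $J>^{P_{<i}}I$ would give $J>^P I$, contradicting $I>^P J$ by asymmetry of $>^P$), so the realizability observation supplies $R$ with $V(R)=\{I,J\}$ and condition (2) of Theorem~\ref{thm:sequivrankedprefeq} yields $I>^Q J$; by symmetry $>^P_M=>^Q_M$. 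Condition (3) is analogous: its ``if'' direction is again restriction to $V(R)\subseteq M$, while for the converse the precondition $i<\dff^P(I,J)$ or $i<\dff^Q(I,J)$ forces $I\approx^{P_{<i}}J$ or $I\approx^{Q_{<i}}J$, and together with condition (4) this gives $I\not>^{P_{<i}}J$ and $J\not>^{P_{<i}}I$; the realizability observation then provides $R$ with $V(R)=\{I,J\}$, and condition (3) of Theorem~\ref{thm:sequivrankedprefeq} applied to this pair delivers the required conclusion. Collecting the three parts proves the equivalence.
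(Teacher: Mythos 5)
Your proof is correct, and while the backward direction coincides with the paper's (Proposition~\ref{prop:ortho} plus a restriction argument to verify the conditions of Theorem~\ref{thm:sequivrankedprefeq} for $P\cup R$, $Q\cup R$), your forward direction is organized quite differently. The paper argues each of conditions (2)--(4) from scratch by building explicit \emph{combined} contexts: a generator $\Pi[I,J]$ that shrinks the outcome space to $\{I,J\}$, augmented where needed by a rank-$i$ selector $R_i[J]$ to rule out the possibility that a strict preference witnessed only at ranks $\geq i$ masquerades as one at ranks $<i$ (this is the delicate step in its proof of condition (4)). You instead push the entire statement through Proposition~\ref{prop:ortho} and distribute the universal quantifier over $R\in\L^g$ across the three conditions of Theorem~\ref{thm:sequivrankedprefeq}; the key payoff is recognizing that the quantified first condition is literally the definition of $P_{<i}\equiv_g Q_{<i}$, so Theorem~\ref{thm:simpleg} applied to the truncated problems delivers conditions (1) and (4) in one stroke, with no need for the auxiliary selector $R_i[J]$ --- the rank-$i$ selector constructions are still doing the work, but they are already encapsulated inside the proof of Theorem~\ref{thm:sequivrankedprefeq}, which you correctly use as a black box. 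Conditions (2) and (3) then reduce to a single two-model realizability claim ($V(R)=\{I,J\}$ for non-$>^{P_{<i}}$-comparable pairs), whose ASO instance is essentially the paper's Lemma~\ref{lemma:as-elim2}, and your case split for (2) (via condition (4) when $\dff^P(I,J)\le i$ would block realizability) is sound since $>^{P}$ is asymmetric and low-rank strict preferences are inherited by the full selector. The net effect is a more modular proof that makes the ``systematic combination'' of Theorems~\ref{thm:sequivrankedprefeq} and~\ref{thm:simpleg}, which the paper only describes informally after the theorem statement, into the actual proof mechanism; the paper's more hands-on construction has the minor advantage of being self-contained at the level of explicit contexts.
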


The corresponding characterizations for CO and ASO problems differ 
only in their 
respective conditions (1), which now reflect different conditions 
guaranteeing strong equivalence of generators under the classical and 
answer-set 
semantics. Moreover, the four conditions of Theorem 
\ref{thm:cequivrankedprefeqcomb} can be obtained by suitably 
combining and extending the conditions of Theorem 
\ref{thm:sequivrankedprefeq} and Theorem \ref{thm:simpleg}. First, 
as combined strong equivalence implies strong gen-equivalence, the 
condition (1) is taken from Theorem \ref{thm:simpleg}. 
Second, we modify the conditions (2) and (3) from Theorem 
\ref{thm:sequivrankedprefeq} replacing $\pi(P_{<i})$ with $\Md(P^g)$
(and accordingly  $\pi(Q_{<i})$ with $\Md(Q^g)$), as 
each classical model of $P^g$ can give rise to an optimal classical or 
equilibrium one upon the addition of a context, an aspect also
already visible in Theorem \ref{thm:simpleg}. Finally, we have to add 
a new condition that the relations $>^{P_{<i}}$ and $>^{Q_{<i}}$ coincide 
on the sets of models of $P^g$ and $Q^g$. When generators are allowed
to be extended, one can make any two of their models to be the only
outcomes after the extension. If the two outcomes, say $I$ and $J$,
are related differently by the corresponding strict relations induced 
by rules with ranks less than $i$, then in one extended problem, exactly 
one of the two outcomes, say $I$, is optimal. In the other extended
problem we cannot have both outcomes be optimal nor $J$ only be optimal 
as that
would contradict the strong equivalence of problems under considerations.
If, however, $I$ is the only optimal outcome also in the other extended
problem, then $I$ must ``win'' with $J$ based on rules that have ranks
at least $j$. In such case, there is a way to add new preferences of
rank $i$ that will ``promote'' $J$ to be optimal too, without making
it optimal in the first problem. However, that contradicts strong
equivalence. 

We conclude this section with observations concerning the relation
$\equiv^{s}_g$ for both CO and ASO problems. The contexts relevant 
here may contain preference rules of arbitrary ranks. We start with the
case of CO problems, where the results are stronger. While they can be 
derived from the general theorems above, we will present here arguments
relying on results from previous sections, which is possible since for
CO problems equivalence and strong-equivalence of generators coincide.

We saw in the last section that for CO problems $\equiv^s$ is a strictly 
stronger relation than $\equiv^g$. In fact, for CO problems, $\equiv^s$ 
coincides with the general relation $\equiv^s_g$.

\begin{theorem}\label{thm:coincide}
For every CO problems $P$ and $Q$, $P \equiv^s_g Q$ if and only if
$P \equiv^s Q$.
\end{theorem}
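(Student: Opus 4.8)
The plan is to prove the two implications separately; one is immediate and the other carries the content. For the forward direction, observe that every selector problem $(\emptyset,S)$ belongs to $\L$, so $\L^s\subseteq\L$ and hence $P\equiv^s_g Q$ trivially yields $P\equiv^s Q$. The work is the converse. Here I would first reduce the combined relation to a family of sel-equivalence statements by invoking Proposition~\ref{prop:ortho}. Since every selector $S$ satisfies $S=S_{[1,\infty]}$, the class $\L^{[1,\infty]}$ coincides with $\L$, so $\equiv^s_g$ is the same relation as $\equiv^{s,[1,\infty]}_g$; applying Proposition~\ref{prop:ortho} with $[i,j]=[1,\infty]$ (and using $\equiv^{s,[1,\infty]}=\equiv^s$ from Corollary~\ref{cor:sequivrankedpref}) shows that $P\equiv^s_g Q$ holds if and only if $P\cup R\equiv^s Q\cup R$ for every generator problem $R\in\L^g$. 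Thus it suffices to assume $P\equiv^s Q$ and to prove $P\cup R\equiv^s Q\cup R$ for an arbitrary $R=(T_R,\emptyset)$.

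For this I would apply the characterization of Corollary~\ref{cor:sequivrankedpref} in both directions. The hypothesis $P\equiv^s Q$ supplies (a) $\Md(P^g)=\Md(Q^g)$, (b) ${>^P_{\Md(P^g)}}={>^Q_{\Md(Q^g)}}$, and (c) $\dff^P(I,J)=\dff^Q(I,J)$ for all $I,J\in\Md(P^g)$. I then verify the same three conditions for the pair $P\cup R$, $Q\cup R$. The key observation is that $P\cup R$ and $P$ share the selector $P^s$, so $>^{P\cup R}={>^P}$ and $\dff^{P\cup R}=\dff^P$ as relations and maps on pairs of interpretations (and likewise for $Q$); only the outcome set changes. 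For the outcomes, classical semantics is monotone: $\mu(P\cup R)=\Md(P^g\cup T_R)=\Md(P^g)\cap\Md(T_R)$, which by (a) equals $\Md(Q^g)\cap\Md(T_R)=\mu(Q\cup R)$, giving condition (1). Conditions (2) and (3) then follow because $\mu(P\cup R)\subseteq\Md(P^g)$: restricting the unchanged relation $>^P$ and the unchanged map $\dff^P$ to this smaller set and using (b) and (c) yields ${>^{P\cup R}_{\mu(P\cup R)}}={>^{Q\cup R}_{\mu(Q\cup R)}}$ and $\dff^{P\cup R}(I,J)=\dff^{Q\cup R}(I,J)$ for all $I,J\in\mu(P\cup R)$. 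By Corollary~\ref{cor:sequivrankedpref}, $P\cup R\equiv^s Q\cup R$, and Proposition~\ref{prop:ortho} closes the converse.

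The step I expect to carry the weight is the monotonicity of the classical model operator, i.e.\ $\Md(P^g\cup T_R)=\Md(P^g)\cap\Md(T_R)$ and the resulting inclusion $\mu(P\cup R)\subseteq\mu(P)$. This is exactly what lets sel-equivalence survive an arbitrary generator extension: adding hard constraints can only shrink the outcome set, never introduce new outcomes, so the selector-induced data $>^P$ and $\dff^P$ are merely restricted to a subset on which they already agree with those of $Q$. It is also the point at which the proof is special to CO problems; for ASO problems a generator extension can create new answer sets, the inclusion fails, and the two relations indeed no longer coincide. I would finally remark that the same conclusion can be read off Theorem~\ref{thm:cequivrankedprefeqcomb} by specializing $[i,j]=[1,\infty]$, where its conditions (3) and (4) degenerate (the clause ``both $\dff^P(I,J)>j$ and $\dff^Q(I,J)>j$'' is never satisfiable for $j=\infty$, and $P_{<1}$ has empty selector so its relation $>^{P_{<1}}$ is empty), leaving precisely the conditions of Corollary~\ref{cor:sequivrankedpref}.
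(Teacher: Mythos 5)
Your proposal is correct and follows essentially the same route as the paper's proof: the reduction to strong sel-equivalence via Proposition~\ref{prop:ortho}, the application of Corollary~\ref{cor:sequivrankedpref} to both the hypothesis and the goal, and the key observation that $\Md((P\cup R)^g)\subseteq\Md(P^g)$ so that conditions (2) and (3) are preserved under restriction. The closing remark about reading the result off Theorem~\ref{thm:cequivrankedprefeqcomb} is a nice addition but not part of the paper's argument.
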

\begin{proof}
The ``only-if'' implication is evident. To prove the converse implication,
we will use Proposition \ref{prop:ortho} which reduces checking for
strong equivalence to checking for strong sel-equivalence. Let $R\in\L^g$
be a generator problem. Since $P \equiv^s Q$, from Corollary
\ref{cor:sequivrankedpref} we have $\Md(P^g)=\Md(Q^g)$. Consequently,
$\Md((P\cup R)^g)=\Md((Q\cup R)^g)$. Writing $M$ for $\Md(P^g)$ and $M'$
for $\Md((P\cup R)^g)$ we have $M'\subseteq M$. Thus, also by Corollary
\ref{cor:sequivrankedpref}, $>^{P\cup R}_{M'}=>^{Q\cup R}_{M'}$. 
Finally, the condition (3) of Corollary \ref{cor:sequivrankedpref} for
$P$ and $Q$ implies the condition (3) of that corollary for $P\cup R$
and $Q\cup R$ (as $R$ has no preference rules and $M'\subseteq M)$. 
It follows, again by Corollary \ref{cor:sequivrankedpref}, that 
$P\cup R \equiv^s Q\cup R$. Thus, by Proposition~\ref{prop:ortho},
$P \equiv^s_g Q$.
\end{proof}

In particular, Corollary \ref{cor:simples} implies that the relations 
$\equiv^s_g$, $\equiv^{s,=1}_g$, $P\equiv^{s,=1} Q$, and $\equiv^s$ 
coincide on simple CO problems.

\begin{corollary}\label{cor1:coincide}
For every simple CO problems $P$ and $Q$ all properties $P \equiv^s_g Q$, 
$P \equiv^{s,=1}_g Q$, $P \equiv^{s,=1} Q$ and $P \equiv^s Q$ are 
equivalent.
\end{corollary}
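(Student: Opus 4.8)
The plan is to prove the four properties equivalent by chaining together results already established, rather than arguing from the definition of each equivalence directly. Concretely, I would show three links: $P \equiv^s_g Q \iff P \equiv^s Q$; $P \equiv^s Q \iff P \equiv^{s,=1} Q$; and $P \equiv^{s,=1}_g Q \iff P \equiv^s_g Q$. Any two of the four listed properties are then connected through these links, so their pairwise equivalence follows.

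The first two links are essentially immediate. The equivalence $P \equiv^s_g Q \iff P \equiv^s Q$ is exactly the content of Theorem~\ref{thm:coincide}, which holds for all CO problems and in particular for simple ones. The equivalence $P \equiv^s Q \iff P \equiv^{s,=1} Q$ is precisely the equivalence of items (a) and (b) in Corollary~\ref{cor:simples}, applicable here because $P$ and $Q$ are assumed simple. Together these already identify three of the four properties, namely $P \equiv^s_g Q$, $P \equiv^s Q$, and $P \equiv^{s,=1} Q$.

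The only remaining task is to incorporate $\equiv^{s,=1}_g$, and this is where I would use Proposition~\ref{prop:ortho} twice at two different rank intervals. Applying it with the interval $[1,1]$, the property $P \equiv^{s,=1}_g Q$ holds if and only if for every generator problem $R \in \L^g$ we have $P \cup R \equiv^{s,=1} Q \cup R$. The key observation is that adjoining a generator $R=(R^g,\emptyset)$ does not alter the selectors, so $P \cup R = (P^g \cup R^g, P^s)$ and $Q \cup R = (Q^g \cup R^g, Q^s)$ are again \emph{simple} CO problems. Corollary~\ref{cor:simples} (equivalence of (a) and (b), now applied to these unions) therefore lets me replace $\equiv^{s,=1}$ by $\equiv^s$, so that $P \cup R \equiv^{s,=1} Q \cup R$ holds exactly when $P \cup R \equiv^s Q \cup R$. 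Invoking Proposition~\ref{prop:ortho} a second time, now with the interval $[1,\infty]$ (where $\equiv^{s,[1,\infty]}$ is just $\equiv^s$ and $\equiv^{s,[1,\infty]}_g$ is $\equiv^s_g$), the condition ``$P \cup R \equiv^s Q \cup R$ for every $R \in \L^g$'' is precisely $P \equiv^s_g Q$. This closes the loop and gives $P \equiv^{s,=1}_g Q \iff P \equiv^s_g Q$, completing the cycle of equivalences.

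There is no genuinely hard step here; the argument is bookkeeping over the definitions and the earlier theorems. The single point demanding care is verifying that simplicity is preserved when a simple problem is unioned with a generator problem, since that fact is exactly what licenses the use of Corollary~\ref{cor:simples} on $P\cup R$ and $Q\cup R$, and the correct matching of rank intervals in the two applications of Proposition~\ref{prop:ortho}.
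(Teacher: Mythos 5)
Your proof is correct and follows essentially the route the paper intends: Corollary~\ref{cor1:coincide} is presented as an immediate consequence of Theorem~\ref{thm:coincide} and Corollary~\ref{cor:simples}, which are exactly your first two links. For the remaining property $\equiv^{s,=1}_g$, your double application of Proposition~\ref{prop:ortho} is valid (and you rightly check that adjoining a generator problem preserves simplicity), though the cycle could also be closed more cheaply via the trivial implications $P \equiv^s_g Q \Rightarrow P \equiv^{s,=1}_g Q \Rightarrow P \equiv^{s,=1} Q$ obtained by restricting the class of contexts.
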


For simple ASO problems we still have that $\equiv^{s}_g$ and $\equiv^{s,=1}_g$
coincide but in general these notions are different from $\equiv^s$ and
$\equiv^{s,=1}$.

\begin{corollary}\label{cor:char-aso-combined}
For every simple ASO problems $P$ and $Q$, the following conditions are
equivalent
\begin{enumerate}
\item[(a)] $P \equiv_g^s Q$
\item[(b)] $P \equiv_g^{s,=1} Q$
\item[(c)] $\HT{P^g}{=}\HT{Q^g}$ and ${\geq^P_{\Md(P^g)}} = {\geq^Q_{\Md(Q^g)}}$.
\end{enumerate}
\end{corollary}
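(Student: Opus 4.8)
The plan is to read both relations $\equiv_g^s$ and $\equiv_g^{s,=1}$ directly off Theorem~\ref{thm:cequivrankedprefeqcomb}, specialized to simple problems, and to verify that in this setting the two resulting characterizations collapse to the single condition~(c). The implication (a)$\Rightarrow$(b) is immediate: writing $\equiv_g^s=\equiv^{s,[1,\infty]}_g$ and $\equiv_g^{s,=1}=\equiv^{s,[1,1]}_g$, the contexts admitted by the latter (selectors with rules of rank in $[1,1]$) form a subclass of those admitted by the former (selectors with rules of rank in $[1,\infty]$), so strong equivalence against the larger class entails it against the smaller one. It then remains to establish (b)$\Rightarrow$(c) and (c)$\Rightarrow$(a), both through Theorem~\ref{thm:cequivrankedprefeqcomb} with $i=1$.

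Throughout I would write $M$ for $\Md(P^g)$ and record three facts used repeatedly. (i) Condition~(1) of the theorem, $\HT{P^g}=\HT{Q^g}$, forces $\Md(P^g)=\Md(Q^g)=M$, because $\langle J,J\rangle\models_{HT} T$ holds iff $J\models T$, so the diagonal HT-models of a theory coincide with its classical models. (ii) Since $P$ and $Q$ are simple, each of $\dff^P(I,J)$ and $\dff^Q(I,J)$ lies in $\{1,\infty\}$, and $\dff^P(I,J)>1$ holds iff $\dff^P(I,J)=\infty$ iff $I\approx^P J$ (likewise for $Q$). (iii) On the fixed set $M$ the relations $>$ and $\approx$ are interdefinable with $\geq$: for $I,J\in M$ one has $I\geq^P J$ iff $I>^P J$ or $I\approx^P J$, and the strict and indifference parts are recovered from $\geq^P$ in the usual way; hence $\geq^P_M=\geq^Q_M$ holds precisely when both $>^P_M=>^Q_M$ and $\approx^P_M=\approx^Q_M$ hold.

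For (b)$\Rightarrow$(c) I would apply Theorem~\ref{thm:cequivrankedprefeqcomb} with $i=j=1$. Condition~(1) gives $\HT{P^g}=\HT{Q^g}$ and, by~(i), the common set $M$. Condition~(2) gives $>^P_M=>^Q_M$. Condition~(4) is vacuous, since the selector of $P_{<1}$ (and of $Q_{<1}$) is empty, making $>^{P_{<1}}$ the empty relation. The only delicate point is condition~(3): for $j=1$ it requires that whenever $\dff^P(I,J)>1$ or $\dff^Q(I,J)>1$, either $\dff^P(I,J)=\dff^Q(I,J)$ or both exceed $1$. By~(ii), $\dff^P(I,J)>1$ means $I\approx^P J$, and in either disjunct this forces $\dff^Q(I,J)=\infty$, i.e.\ $I\approx^Q J$; arguing symmetrically, condition~(3) amounts exactly to $\approx^P_M=\approx^Q_M$. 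Combining with condition~(2) and~(iii) yields $\geq^P_M=\geq^Q_M$, which is condition~(c).

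For (c)$\Rightarrow$(a) I would verify the four conditions of Theorem~\ref{thm:cequivrankedprefeqcomb} for $i=1$, $j=\infty$. Condition~(1) is part of~(c); conditions~(2) and~(3) follow by splitting $\geq^P_M=\geq^Q_M$ into its strict and indifference parts via~(iii), noting that for $j=\infty$ condition~(3) again reduces, by~(ii), to $\approx^P_M=\approx^Q_M$ (the alternative ``both $\dff^P(I,J)>\infty$ and $\dff^Q(I,J)>\infty$'' can never hold); condition~(4) is vacuous as before. This gives $P\equiv_g^s Q$ and closes the cycle (a)$\Rightarrow$(b)$\Rightarrow$(c)$\Rightarrow$(a). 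The step I would check most carefully — and the crux of why (a) and (b) coincide for simple problems — is that condition~(3) collapses to the \emph{same} requirement $\approx^P_M=\approx^Q_M$ for both $j=1$ and $j=\infty$; everything else is a routine unfolding of the definitions.
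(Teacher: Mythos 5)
Your proposal is correct and follows essentially the same route as the paper: both directions are read off Theorem~\ref{thm:cequivrankedprefeqcomb}, with the key observation that for simple problems $\dff^P(I,J)>1$ iff $\dff^P(I,J)=\infty$ iff $I\approx^P J$, so that conditions (2) and (3) together are equivalent to ${\geq^P_{\Md(P^g)}}={\geq^Q_{\Md(Q^g)}}$ while condition (4) is vacuous for $i=1$. The only cosmetic difference is that you package the argument as ``$\geq$ splits into $>$ and $\approx$'' in both directions, where the paper instead does a direct case analysis on $I\geq^P J$ for (b)$\Rightarrow$(c); the content is identical.
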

\begin{proof}
The implication (a)$\Rightarrow$(b) is evident. 

Let us assume (b). By
Theorem \ref{thm:cequivrankedprefeqcomb}, we have $\HT{P^g}{=}\HT{Q^g}$.
This identity implies $\Md(P^g)=\Md(Q^g)$. Let us assume that for some 
$I,J\in\Md(P^g)$, $I\geq^P_{\Md(P^g)} J$. If $I>^P_{\Md(P^g)} J$ then,
by Theorem \ref{thm:cequivrankedprefeqcomb}, $I>^Q_{\Md(Q^g)} J$ and
so, $I\geq^Q_{\Md(Q^g)} J$. Otherwise, $I\approx^P J$ and so, $\dff^P(I,J)
=\infty$. By Theorem~\ref{thm:cequivrankedprefeqcomb}, $\dff^Q(I,J)>1$.
Since $Q$ is simple, $\dff^Q(I,J)=\infty$. Thus, $I\approx^Q J$ and, also,
$I\geq^Q_{\Md(Q^g)} J$. The converse implication follows by symmetry. Thus,
(c) holds.   

Finally, we assume (c) and prove (a). To this end, we show that the
conditions (1)--(4) of Theorem \ref{thm:cequivrankedprefeqcomb} hold. 
Directly from the assumptions, we have that condition~(1) holds. 
Condition (2) follows from the general fact that
for every optimization problems $P$ and $Q$, and every set $V\subseteq 2^\U$,
${\geq^P_V} = {\geq^Q_V}$ implies
${>^P_V} = {>^Q_V}$.
Moreover, we also have that $\Md(P^g)=\Md(Q^g)$. To prove the 
condition (3), let us assume that $I,J\in\Md(P^g)$ and that 
$\dff^P(I,J)>1$. Since $P$ is simple, $I\approx^P J$. Thus, $I\approx^Q J$
and, consequently, $\dff^P(I,J)=\infty=\dff^Q(I,J)$. Finally, 
condition (4), i.e.\
${>^{P_{<i}}_{\Mod{P^g}}}= {>^{Q_{<i}}_{\Mod{Q^g}}}$,
obviously holds in case $i=1$ and $\Md(P^g)=\Md(Q^g)$.
\end{proof}

\section{Complexity}
\label{sec:complexity}

In this section, we study the problems of deciding the various notions
of strong equivalence. Typically the comparisons between sets of outcomes in the
characterizations determine the respective complexity. We start with
results concerning strong sel-equivalence.
\begin{theorem}\label{thm:co-sel:simple}
Given optimization problems $P$ and $Q$, 
deciding $P\equiv^{s} Q$ is $\CONP$-complete in case of CO-problems 
and $\PiP{2}$-complete in case of ASO-problems.
\end{theorem}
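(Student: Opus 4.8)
The plan is to base everything on Corollary~\ref{cor:sequivrankedpref}, which says that $P\equiv^s Q$ holds exactly when (1) $\mu(P)=\mu(Q)$, (2) ${>^P_{\mu(P)}}={>^Q_{\mu(Q)}}$, and (3) $\dff^P(I,J)=\dff^Q(I,J)$ for all $I,J\in\mu(P)$. The crucial point is that the \emph{relational} content of this test is cheap: for fixed interpretations $I,J$, each satisfaction degree $\degree{I}{r}$ is obtained by model-checking the (polynomially many) formulas occurring in the selector rules, so deciding $I>^P J$ and computing $\dff^P(I,J)$, and likewise for $Q$, are all polynomial, even in the ranked case. Hence the only genuine source of complexity is \emph{outcome membership}, i.e.\ deciding whether a given interpretation lies in $\mu(P)$. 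For CO problems this is classical model-checking and thus polynomial; for ASO problems, $I\in\AS{P^g}$ is a $\CONP$ question (verify $\langle I,I\rangle\models_{HT}P^g$ and that no proper subset of $I$ is an HT-model of $P^g$), while $I\notin\AS{P^g}$ is an $\NP$ question.

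For the upper bounds I would establish membership of the complementary disequivalence problem. By Corollary~\ref{cor:sequivrankedpref}, $P\not\equiv^s Q$ holds iff at least one of (1)--(3) fails, and each failure admits a succinct witness: a single interpretation lying in exactly one of $\mu(P)$, $\mu(Q)$ (for~(1)); a pair $I,J$ lying in exactly one of ${>^P_{\mu(P)}}$, ${>^Q_{\mu(Q)}}$ (for~(2)); or a pair $I,J\in\mu(P)$ with $\dff^P(I,J)\neq\dff^Q(I,J)$ (for~(3)). The verifier guesses the failure mode and the witnessing interpretation(s) and checks them. For CO problems this verification is polynomial, so disequivalence is in $\NP$ and therefore $\equiv^s$ is in $\CONP$. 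For ASO problems the verification additionally confirms a constant number of memberships/non-memberships in $\AS{\cdot}$, each resolved by a single call to an $\NP$ oracle; hence disequivalence is in $\SigmaP{2}$ and $\equiv^s$ is in $\PiP{2}$.

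For the lower bounds I would isolate the sub-case of empty selectors, where conditions~(2) and~(3) hold vacuously (both strict relations are empty and every $\dff$-value equals $\infty$), so that $P\equiv^s Q$ collapses to $\mu(P)=\mu(Q)$. For CO-hardness, given a propositional formula $\phi$ set $P=(\phi,\emptyset)$ and $Q=(\bot,\emptyset)$; then $P\equiv^s Q$ iff $\Md(\phi)=\Md(\bot)=\emptyset$, i.e.\ iff $\phi$ is unsatisfiable, a $\CONP$-complete problem. For ASO-hardness, given an arbitrary propositional theory $T$ set $P=(T,\emptyset)$ and $Q=(\bot,\emptyset)$; since $\AS{\bot}=\emptyset$, we get $P\equiv^s Q$ iff $\AS{T}=\emptyset$, i.e.\ iff $T$ has no answer set. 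As deciding existence of an answer set (equilibrium model) of a propositional theory is $\SigmaP{2}$-complete, its complement is $\PiP{2}$-complete, yielding $\PiP{2}$-hardness.

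The step deserving the most care is the complexity bookkeeping for the ASO upper bound: one must check that all auxiliary relational computations stay polynomial, that only a constant number of oracle calls is needed, and that a single level of $\NP$ oracle suffices, so that disequivalence lands in $\SigmaP{2}$ and not higher. On the hardness side, the only external fact to pin down is that answer-set existence remains $\SigmaP{2}$-complete for the full class of propositional theories under the equilibrium semantics (membership from the guess-and-check procedure above, hardness inherited from disjunctive logic programs), after which both reductions are immediate.
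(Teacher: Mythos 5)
Your proposal is correct and follows essentially the same route as the paper: membership via a guess-and-check algorithm for the complement of the characterization in Corollary~\ref{cor:sequivrankedpref} (polynomial verification for CO, one level of $\NP$ oracle for answer-set checking in the ASO case), and hardness by restricting to empty selectors so that $\equiv^s$ collapses to equivalence of the generators. The only cosmetic difference is that you source the hardness from unsatisfiability and from non-existence of equilibrium models (using $\SigmaP{2}$-completeness of answer-set existence for propositional theories) rather than citing hardness of theory equivalence directly, which is an equally valid instantiation of the same idea.
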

\begin{proof}[Sketch]
For membership, one can show that given a pair of interpretations $I,J$
it can be verified in polynomial time (for CO-problems) or in
polynomial time using an $\NP$ oracle (for ASO-problems) whether they
form a witness for the complement of the conditions stated in
Corollary~\ref{cor:sequivrankedpref}. The main observation is that
model checking is polynomial for the classical semantics, but
$\CONP$-complete for the equilibrium semantics (Theorem~8 of
\cite{PearceTW09}).

Hardness follows from considering the equivalence problem for
optimizations problems with empty selectors, which is known to be
$\CONP$-hard (for classical semantics) and $\PiP{2}$-hard (for
equilibrium semantics, Theorem~11 of \cite{PearceTW09}).
\end{proof}

For the ranked case, we observe an increase in complexity, which can
be explained by the characterization given in
Theorem~\ref{thm:sequivrankedprefeq}: Instead of outcome checking,
this characterization involves optimal outcome checking, which is more
difficult (unless the polynomial hierarchy collapses).
\begin{theorem}\label{thm:co-sel:ranked}
Given optimization problems $P$ and $Q$ and a rank interval $[i,j]$,
deciding $P\equiv^{s,[i,j]} Q$ is $\PiP{2}$-complete in case of CO-problems 
and $\PiP{3}$-complete in case of ASO-problems.
\end{theorem}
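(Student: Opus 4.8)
The plan is to base both directions on the characterization in Theorem~\ref{thm:sequivrankedprefeq}, whose decisive feature --- absent from the non-ranked Corollary~\ref{cor:sequivrankedpref} that underlies Theorem~\ref{thm:co-sel:simple} --- is that the relevant set of interpretations is $\pref(P_{<i})$ rather than $\mu(P)$. The single new subroutine needed is therefore \emph{preferred-outcome checking}: given $I$, decide whether $I\in\pref(P_{<i})$. For CO-problems this is in $\CONP$: the membership $I\models P^g$ is polynomial, and optimality amounts to $\neg\exists J\,(J\models P^g \wedge J>^{P_{<i}}I)$, a single $\CONP$ test since the comparison $J>^{P_{<i}}I$ is polynomial-time evaluable once $I,J$ are fixed. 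For ASO-problems the same predicate rises to $\PiP{2}$, because answer-set checking $I\in\AS{P^g}$ is itself $\CONP$-complete (Theorem~8 of \cite{PearceTW09}), so optimality becomes a universal quantifier over $J$ of a $\CONP$-body. I also note that, given $I,J$, both the $>^{P}$-comparisons and the value $\dff^P(I,J)$ are computable in polynomial time by evaluating satisfaction degrees rank by rank.

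For membership I would argue on the complement. By Theorem~\ref{thm:sequivrankedprefeq}, $P\not\equiv^{s,[i,j]}Q$ holds exactly when there is a witness --- one or two interpretations together with a choice of which of the three conditions to violate --- whose correctness is a Boolean combination of preferred-outcome checks (for conditions (1) and (2), membership of $I,J$ in $\pref(P_{<i})$ and their $>$-relationship) and a polynomial-time comparison of $\dff$-values (condition (3)). Guessing the witness with an outer existential and verifying it with the subroutine above places non-equivalence in $\SigmaP{2}$ for CO-problems and in $\SigmaP{3}$ for ASO-problems; hence $\equiv^{s,[i,j]}$ lies in $\PiP{2}$ and $\PiP{3}$, respectively. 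The Boolean combination of the $\CONP$/$\NP$ (resp.\ $\PiP{2}$/$\SigmaP{2}$) subchecks stays within $\DeltaP{2}$ (resp.\ $\DeltaP{3}$), so prepending the existential guess introduces no alternation beyond one level.

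Hardness I would obtain by reducing a quantified Boolean formula whose truth is captured by a failure of condition (1), the equality $\pref(P_{<i})=\pref(Q_{<i})$, taking $i\geq 2$ so the bottom part carries genuine preferences. For CO-problems I reduce the $\SigmaP{2}$-complete problem $\exists X\,\forall Y\,\psi$ to \emph{non}-equivalence: the generator admits assignments to $X\cup Y$, the rank-$<i$ selector of $P$ is built so that an $X$-assignment completed by some $Y$ is a preferred outcome of $P_{<i}$ precisely when $\forall Y\,\psi$ holds for that $X$, and $Q_{<i}$ is a controlled perturbation in which this interpretation fails to be preferred; a guessed witness supplies the outer $\exists X$ while the non-domination test internalizes the $\forall Y$. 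Thus $P\not\equiv^{s,[i,j]}Q$ iff $\exists X\,\forall Y\,\psi$, giving $\PiP{2}$-hardness. For ASO-problems I lift the scheme one level, using generators whose answer sets encode the inner layer via the $\CONP$-completeness of equilibrium-model checking, so that the three nested quantifiers of a $\SigmaP{3}$-complete QBF are realized respectively by the guessed witness, the answer-set (reduct-minimality) test, and the search for a dominating answer set.

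The step I expect to be the main obstacle is precisely this hardness construction: the generators and rank-$<i$ rules must be designed so that preferred-outcome membership in $P_{<i}$ corresponds \emph{exactly} to the inner universally quantified subformula, and so that $Q_{<i}$ deviates from $P_{<i}$ on the intended witnesses and nowhere else; otherwise spurious witnesses would render the reduction unsound. For the ASO case the extra difficulty is threading the $\CONP$-hardness of answer-set checking through the encoding without collapsing a level or silently adding an alternation, so that the bound lands at exactly $\PiP{3}$ rather than one level off.
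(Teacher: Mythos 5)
Your membership argument is sound and is essentially the paper's: guess a two-interpretation witness for the complement of the characterization in Theorem~\ref{thm:sequivrankedprefeq}, and verify it using preferred-outcome checking ($\CONP$ for CO-problems, one level higher for ASO-problems because answer-set checking is $\CONP$-complete) together with polynomial-time evaluation of $>^P$ and $\dff^P$. This yields $\SigmaP{2}$ resp.\ $\SigmaP{3}$ for non-equivalence, exactly as in the paper.

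The hardness half, however, contains a genuine gap, and you have correctly located it yourself: the entire content of the lower bound is the construction you defer. Your plan is to reduce a QBF directly, building $P_{<i}$ so that preferred-outcome membership encodes the inner $\forall Y$ block and $Q_{<i}$ so that it deviates ``on the intended witnesses and nowhere else.'' As stated this is only a specification of what the gadget must achieve, not a gadget; in particular you never say how to arrange that a \emph{single} pair $(P,Q)$ is non-equivalent iff \emph{some} $X$-assignment works, nor how to prevent conditions (2) and (3) of Theorem~\ref{thm:sequivrankedprefeq} from being violated for spurious reasons. The paper resolves both issues by routing through an intermediate problem that is already of ``comparison'' type: deciding whether two theories have the same \emph{minimal models}, which is $\PiP{2}$-complete classically and which the paper proves $\PiP{3}$-hard for open QBFs $\forall Y\,\phi(X,Y)$ (Lemma~\ref{lem:compl-qbf-min-mod} --- a nontrivial reduction from $\forall\exists\forall$-QBFs that your sketch would have to reproduce inline). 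Given a theory $T$ over $U$, the paper sets $P^g_T = T[\neg u/u'] \cup \{u \leftrightarrow \neg u' \mid u\in U\}$ and $P^s_T = \{u' > u \leftarrow\ \mid u\in U\}$, so that $\pi(P_T)$ is in bijection with the minimal models of $T$; the analogous ASO construction grafts this onto the standard equilibrium-model encoding of $\forall Y\,\phi$. The decisive extra property, which your sketch does not secure, is that no two outcomes of $P_T$ tie ($I\approx^{P_T}J$ never holds for distinct outcomes), so adding any selector of rank $\geq 2$ cannot change the preferred sets, and therefore $P_S \equiv^{s,\geq 2} P_T$ collapses exactly to $\pi(P_S)=\pi(P_T)$, i.e., to equality of minimal models. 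Without an explicit construction enforcing an analogue of this no-ties property, your reduction cannot be certified sound, so the hardness direction is not yet a proof.
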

\begin{proof}[Sketch]
The membership part essentially follows the same arguments as the
proof of Theorem~\ref{thm:co-sel:simple}, but here the
problem of checking $I\in\pref(P_{<i})$ is in $\CONP$ for CO-problems
and in $\PiP{2}$ for ASO-problems. 

For the hardness part, we reduce the following problem to sel-equivalence of CO-problems:
Given two propositional theories $S$ and $T$, decide whether they possess
the same minimal models. This problem is known to 
be $\PiP{2}$-complete 
(e.g.\ Theorem~6.15 \cite{efw04}), and 
the problem remains hard if $S$ and $T$ are in NNF 
given over the
same alphabet.
We adapt a construction used in \cite{bnt-unpublished}, and
given a theory $T$ (over atoms in $U$) we construct a CO problem 
$P_T$ where
\begin{eqnarray*}
P^g_T &=& T[\neg u/u'] \cup \{  u \leftrightarrow \neg u' \mid u \in U\},\\
P^s_T &=& \{ u' > u \leftarrow \mid u \in U\},
\end{eqnarray*}
{and $T[\neg u/u']$ stands for the theory resulting from
replacing all $\neg u$ by $u'$ in $T$.}
The elements in $\pi(P_T)$
are in a one-to-one correspondence to the minimal models of $T$.
For theories $S$ and $T$ over $U$ it follows 
that 
$S$ and $T$ have the same minimal models if and only if
$P_S \equiv^{s,\geq 2} P_T$.

Concerning the hardness part for ASO problems, we use the following 
problem: given two open QBFs
$\forall Y \phi(X,Y)$,
$\forall Y \psi(X,Y)$, do they possess 
the same minimal models. This problem is
$\PiP{3}$-hard\LongOnly{ (see Lemma~\ref{lem:compl-qbf-min-mod} in Appendix A)}.
For $\phi(X,Y)$, 
we construct $P_\phi$  as follows:
\begin{eqnarray*}
P^g_\phi &=& \{ z \vee z' \mid z\in X\cup Y\}\cup\\
        && \{ (y\wedge y')\rightarrow w, w\rightarrow y, w\rightarrow y'  \mid y\in Y\}\cup\\
        && \{ \phi[\neg z/z'] \rightarrow w,
\neg w \rightarrow w \},\\
P^s_\phi &=& \{ x' > x \leftarrow \mid x \in X\},
\end{eqnarray*}
{where $\phi[\neg z/z']$ stands for the formula obtained by
replacing all $\neg z$ by $z'$ in $\phi(X,Y)$.}
The elements in $\pi(P_\phi)$
are in a one-to-one correspondence to the minimal models of $\forall Y\phi(X,Y)$.
For $\phi$ and $\psi$ over $X\cup Y$ we get
that 
$\forall Y\phi(X,Y)$ and $\forall Y\psi(X,Y)$
have the same minimal models if and only if
$P_\phi \equiv^{s,\geq 2} P_\psi$.
\end{proof}

In Theorem~\ref{thm:co-sel:ranked} the rank interval $[i,j]$
is given in input. When fixing the interval, the hardness results
still hold, provided that $i>1$. In fact, the critical condition in
Corollary~\ref{cor:sequivrankedprefeq1} is
$\pref(P_{<i})=\pref(Q_{<i})$; for rank intervals $[1,j]$, the
selectors become empty and the condition is reduced to
$\mu(P)=\mu(Q)$, which is easier to decide.

The remaining problems are all in $\CONP$. For strong gen-equivalence,
completeness follows directly from Theorem~\ref{thm:simpleg} and
$\CONP$-completeness of deciding strong equivalence between two
propositional theories (for both semantics).
\begin{theorem}\label{thm:co-gen}
Given two CO (ASO, respectively) problems $P$ and $Q$, deciding $P \equiv_g Q$
is $\CONP$-complete.
\end{theorem}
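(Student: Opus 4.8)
The plan is to read off both the upper and the lower bound directly from the characterization in Theorem~\ref{thm:simpleg}, which states that $P\equiv_g Q$ holds exactly when (i) the generators are strongly equivalent under the relevant semantics ($\Mod{P^g}=\Mod{Q^g}$ for CO problems, $\HT{P^g}=\HT{Q^g}$ for ASO problems) and (ii) ${>^P_{\Mod{P^g}}}={>^Q_{\Mod{Q^g}}}$. Since both conjuncts are ``local'' conditions, each failure is witnessed by a small, polynomially checkable object, which is what puts the problem into \CONP.

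For membership I would argue that the complementary problem, deciding $P\not\equiv_g Q$, lies in \NP. As (i)$\wedge$(ii) is a conjunction, a failure is witnessed by the failure of one of the two conjuncts, and the nondeterministic machine first guesses which. A witness to the failure of (i) is a single object separating the generators: for CO problems an interpretation $I$ lying in exactly one of $\Mod{P^g}$, $\Mod{Q^g}$, and for ASO problems an HT-pair $\langle I,J\rangle$ lying in exactly one of $\HT{P^g}$, $\HT{Q^g}$. A witness to the failure of (ii) is a pair $(I,J)$ in the symmetric difference of the two restricted strict relations. In every case verification is polynomial: classical model checking, the $\models_{HT}$ recursion, membership tests $I\in\Md(\cdot)$, and the satisfaction degrees $\degree{I}{r}$ (and hence $>^P$ and $>^Q$ on a fixed pair, even for ranked selectors) are all computable in polynomial time. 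Thus the complement is in \NP and deciding $\equiv_g$ is in \CONP, for both kinds of problems.

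For hardness I would reduce the problem of deciding strong equivalence of two propositional theories, which is known to be \CONP-complete under both the classical and the equilibrium semantics, to strong gen-equivalence. Given theories $T_1$ and $T_2$, put $P=(T_1,\emptyset)$ and $Q=(T_2,\emptyset)$. Both selectors are empty, so $>^P$ and $>^Q$ are empty and condition (ii) holds vacuously; by Theorem~\ref{thm:simpleg}, $P\equiv_g Q$ holds if and only if $T_1$ and $T_2$ are strongly equivalent under the chosen semantics. This yields \CONP-hardness for CO problems (via equivalence of classical theories) and for ASO problems (via HT-equivalence).

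The only nonroutine ingredient, and the step that really carries the result, is the appeal to the \CONP-completeness of deciding strong equivalence of arbitrary propositional theories under the equilibrium semantics, i.e.\ of testing $\HT{P^g}=\HT{Q^g}$; the matching \NP membership for the ASO case rests on the fact that $\models_{HT}$ is polynomially decidable, so that an HT-pair is a succinctly checkable witness of non-equivalence of the generators. Everything else is the routine bookkeeping of guessing one of two witness types and verifying it in polynomial time.
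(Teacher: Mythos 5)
Your proposal is correct and follows essentially the same route as the paper: both read membership and hardness off the characterization in Theorem~\ref{thm:simpleg}, verify a polynomial-size witness (an interpretation or HT-pair for generator non-equivalence, a pair of interpretations for a discrepancy in the strict preference relation) for the complementary problem, and obtain hardness from the $\CONP$-completeness of (strong) equivalence testing of propositional theories under each semantics via empty selectors. The only difference is cosmetic: the paper packages everything into a single witness pair $(I,J)$ checked against all conditions, whereas you guess which conjunct fails first.
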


Finally, for the combined case the hardness result follows from
Theorem~\ref{thm:cequivrankedprefeqcomb} and $\CONP$-completeness of
deciding strong equivalence of propositional theories.
\begin{theorem}\label{thm:co-combined}
Given ranked CO (ASO, respectively) problems $P$ and $Q$,
and rank interval $[i,j]$,
deciding $P \equiv^{s,[i,j]}_g Q$ is $\CONP$-complete. 
\end{theorem}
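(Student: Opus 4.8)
The plan is to combine the characterization in Theorem~\ref{thm:cequivrankedprefeqcomb} with the observation that, in the combined case, all four conditions can be tested by reasoning about \emph{classical} models rather than about optimal outcomes. This is precisely what keeps the complexity at $\CONP$, in contrast with the $\PiP{2}$/$\PiP{3}$ bounds of Theorem~\ref{thm:co-sel:ranked}, whose conditions refer to $\pref(P_{<i})$.

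For membership I would show that the complementary problem, deciding $P \not\equiv^{s,[i,j]}_g Q$, is in $\NP$ by guessing and verifying a witness to the failure of one of the four conditions. A violation of condition~(1) is witnessed by a single interpretation $I$ (for CO problems) or a single HT-pair $\langle I,J\rangle$ with $I\subseteq J$ (for ASO problems) that lies in exactly one of $\Md(P^g),\Md(Q^g)$ (respectively $\HT{P^g},\HT{Q^g}$); since both classical and HT model checking run in polynomial time, such a witness is verifiable in polynomial time. A violation of any of conditions~(2)--(4) is witnessed by a pair $(I,J)$ of interpretations: checking $I,J\in\Md(P^g)$ is polynomial (classical model checking), each satisfaction degree $\degree{I}{r}$ is evaluable in polynomial time, and hence the restrictions of $>^P$ and $>^{P_{<i}}$ to this pair, as well as the quantities $\dff^P(I,J)$ and $\dff^Q(I,J)$, are all computable in polynomial time. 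Guessing the witness and performing these checks places the complement in $\NP$, so the problem is in $\CONP$, for both CO and ASO problems.

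For hardness I would reduce from deciding strong equivalence of two propositional theories, which is $\CONP$-complete under both the classical and the equilibrium (HT) semantics. Given theories $T_1$ and $T_2$, form the optimization problems $P=(T_1,\emptyset)$ and $Q=(T_2,\emptyset)$ with empty selectors, and fix any interval, say $[i,j]=[1,1]$. With empty selectors every preference relation $>^P$ and $>^{P_{<i}}$ is empty and $\dff^P(I,J)=\infty$ for all $I,J$ (and likewise for $Q$), so conditions~(2)--(4) of Theorem~\ref{thm:cequivrankedprefeqcomb} hold vacuously. Hence $P\equiv^{s,[i,j]}_g Q$ reduces to condition~(1), namely strong equivalence of $T_1$ and $T_2$ ($\Md(T_1)=\Md(T_2)$ in the CO case, $\HT{T_1}=\HT{T_2}$ in the ASO case), establishing $\CONP$-hardness in both cases.

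The main point to get right is the membership argument. The reason the complexity does not climb into the second or third level of the polynomial hierarchy, as it does for $\equiv^{s,[i,j]}$, is that Theorem~\ref{thm:cequivrankedprefeqcomb} phrases conditions~(2)--(4) over $\Md(P^g)$ rather than over $\pref(P_{<i})$. Replacing optimal-outcome membership, which is itself $\CONP$/$\PiP{2}$-hard to decide, by plain classical-model membership, which is polynomial, is exactly what collapses witness verification to polynomial time; I would take care to confirm that this substitution is faithful to the characterization and that $\dff^P(I,J)$ is genuinely computed from satisfaction degrees alone, without any hidden optimal-outcome test.
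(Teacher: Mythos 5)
Your proposal is correct and follows essentially the same route as the paper: membership via polynomial-time verification of a two-interpretation (or HT-pair) witness for the complement of each condition of Theorem~\ref{thm:cequivrankedprefeqcomb}, exploiting that conditions (2)--(4) range over $\Md(P^g)$ rather than $\pref(P_{<i})$, and hardness by reducing from strong equivalence of propositional theories using problems with empty selectors. The paper states the hardness direction more tersely, but your explicit observation that conditions (2)--(4) hold vacuously for empty selectors is exactly the justification it relies on.
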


By construction, all hardness results hold already for
simple optimization problem.

\section{Discussion}

We introduced the formalism of optimization problems, generalizing
the principles of ASO programs, in particular, the separation of hard 
and soft constraints~\cite{bnt03}. We focused on two important 
specializations of optimization problems: CO problems and ASO problems.
We studied various forms of strong equivalence for these classes of 
optimization problems, depending on what contexts are considered.
Specifically, we considered the following cases: new preference 
information is added, but the hard constraints remain unchanged (strong 
sel-equivalence); hard constraints are added but preferences remain 
unchanged (strong gen-equivalence); both hard constraints and 
preferences can be added (strong equivalence). 
To the best of our knowledge, this natural classification of equivalences
in preference formalisms has not been studied yet.
In certain cases some of these notions coincide (Theorem~\ref{thm:coincide})
but this is no longer true when the underlying semantics is changed
or ranks in contexts are restricted.

In fact, we established characterizations of all these notions of strong 
equivalence. They exhibit strong similarities. The characterizations 
of strong sel-equivalence for CO and ASO problems in Theorem 
\ref{thm:sequivrankedprefeq}
are precisely the same, mirroring 
the fact that generators are not subject to change. Theorem 
\ref{thm:simpleg} concerns strong gen-equivalence for CO and ASO problems. 
In each case, the characterizations consist of two requirements: the
strong equivalence of generators, and the equality of the strict 
preference relations restricted to the class of models of the 
generators. The only difference comes from the fact that strong 
equivalence for classical and the equilibrium-model semantics have 
different characterizations. Theorem \ref{thm:cequivrankedprefeqcomb} 
which concerns the combined case of strong equivalence 
also does not differentiate between CO and
ASO problems other than implicitly (as before, the conditions of strong
equivalence are different for the two semantics). Moreover, the
characterizations given arise in a certain systematic way from 
those given in Theorems \ref{thm:sequivrankedprefeq} and~%
\ref{thm:simpleg}. This being the case in each of the different semantics
we used strongly suggests that there are some abstract principles at
play here. We are currently pursuing this direction, conjecturing that 
this is an inherent feature of preference formalisms with separation of 
hard and soft constraints.

Our results give rise to problem rewriting methods that transform
optimization problems into strongly equivalent ones. We provided two
simple examples illustrating that application of our results 
in Example~\ref{ex:5n} and Corollary~\ref{cor:rem}.
Similar examples can be constructed
for our results concerning strong gen-equivalence and (combined) strong 
equivalence. A more systematic study of optimization problem rewriting
rules that result in strongly equivalent problems will be a subject of
future work.

Finally, we established the complexity of deciding whether optimization
problems are strongly equivalent. Notably, in the general case of 
strong (combined) equivalence this problem remains in $\CONP$ 
for both CO and ASO problem.
It is 
strong sel-equivalence that is computationally hardest to test 
(in case of ASO problems, $\PiP{3}$-hard).
It is so, 
as the concept depends of properties of outcomes that are optimal with 
respect to rules of ranks less than $i$, while in other cases all models
have to be considered. Testing optimality is harder than testing
for being a model, explaining the results we obtained.  

\section{Acknowledgments}
The first author was supported by Regione Calabria and EU under POR
Calabria FESR 2007-2013 within the PIA project of DLVSYSTEM s.r.l.,
and by MIUR under the PRIN project LoDeN. The second author was
supported by the NSF grant IIS-0913459.

\bibliography{add}

\begin{thebibliography}{}

\bibitem[\protect\citeauthoryear{Brewka, Niemel{\"a}, and
  Truszczy{\'n}ski}{2003}]{bnt03}
Brewka, G.; Niemel{\"a}, I.; and Truszczy{\'n}ski, M.
\newblock 2003.
\newblock Answer set optimization.
\newblock In Gottlob, G., and Walsh, T., eds., {\em IJCAI-03, Proceedings of
  the Eighteenth International Joint Conference on Artificial Intelligence,
  Acapulco, Mexico},  867--872.
\newblock Morgan Kaufmann.

\bibitem[\protect\citeauthoryear{Brewka, Niemel{\"a}, and
  Truszczy{\'n}ski}{2011}]{bnt-unpublished}
Brewka, G.; Niemel{\"a}, I.; and Truszczy{\'n}ski, M.
\newblock 2011.
\newblock Answer set optimization.
\newblock Unpublished manuscript.

\bibitem[\protect\citeauthoryear{Eiter and Gottlob}{1995}]{EiterG95}
Eiter, T., and Gottlob, G.
\newblock 1995.
\newblock On the computational cost of disjunctive logic programming:
  Propositional case.
\newblock {\em Annals of Mathematics and Artificial Intelligence}
  15(3/4):289--323.

\bibitem[\protect\citeauthoryear{Eiter \bgroup et al\mbox.\egroup
  }{2007}]{EiterFFW07}
Eiter, T.; Faber, W.; Fink, M.; and Woltran, S.
\newblock 2007.
\newblock Complexity results for answer set programming with bounded predicate
  arities and implications.
\newblock {\em Annals of Mathematics and Artificial Intelligence}
  51(2--4):123--165.

\bibitem[\protect\citeauthoryear{Eiter, Fink, and Woltran}{2007}]{efw04}
Eiter, T.; Fink, M.; and Woltran, S.
\newblock 2007.
\newblock Semantical characterizations and complexity of equivalences in answer
  set programming.
\newblock {\em {ACM} Transactions on Computational Logic} 8(3).

\bibitem[\protect\citeauthoryear{Faber and Konczak}{2006}]{FaberK06}
Faber, W., and Konczak, K.
\newblock 2006.
\newblock Strong order equivalence.
\newblock {\em Annals of Mathematics and Artificial Intelligence}
  47(1--2):43--78.

\bibitem[\protect\citeauthoryear{Faber, Tompits, and Woltran}{2008}]{FaberTW08}
Faber, W.; Tompits, H.; and Woltran, S.
\newblock 2008.
\newblock Notions of strong equivalence for logic programs with ordered
  disjunction.
\newblock In {\em Proceedings of the 11th International Conference on
  Principles of Knowledge Representation and Reasoning},  433--443.
\newblock {AAAI Press}.

\bibitem[\protect\citeauthoryear{Ferraris and Lifschitz}{2005}]{fl05}
Ferraris, P., and Lifschitz, V.
\newblock 2005.
\newblock Mathematical foundations of answer set programming.
\newblock In {\em We Will Show Them! Essays in Honour of Dov Gabbay}. College
  Publications.
\newblock  615--664.

\bibitem[\protect\citeauthoryear{Ferraris}{2005}]{fer05}
Ferraris, P.
\newblock 2005.
\newblock Answer sets for propositional theories.
\newblock In Baral, C.; Greco, G.; Leone, N.; and Terracina, G., eds., {\em
  Logic Programming and Nonmonotonic Reasoning, 8th International Conference,
  LPNMR 2005, Diamante, Italy, September 2005, Proceedings}, volume 3662 of
  {\em Lecture Notes in Computer Science},  119--131.
\newblock Springer Verlag.

\bibitem[\protect\citeauthoryear{Gelfond and Lifschitz}{1991}]{gelf-lifs-91}
Gelfond, M., and Lifschitz, V.
\newblock 1991.
\newblock Classical negation in logic programs and disjunctive databases.
\newblock {\em New Generation Computing} 9:365--385.

\bibitem[\protect\citeauthoryear{Goldsmith and Junker}{2008}]{aim08}
Goldsmith, J., and Junker, U., eds.
\newblock 2008.
\newblock {\em Special Issue on Preferences}, volume 29(4) of {\em AI
  Magazine}.

\bibitem[\protect\citeauthoryear{Heyting}{1930}]{ht30}
Heyting, A.
\newblock 1930.
\newblock Die formalen {R}egeln der intuitionistischen {L}ogik.
\newblock {\em Sitzungsberichte der Preussischen Akademie der Wissenschaften}
  42--56.

\bibitem[\protect\citeauthoryear{Lifschitz, Pearce, and Valverde}{2001}]{lpv01}
Lifschitz, V.; Pearce, D.; and Valverde, A.
\newblock 2001.
\newblock Strongly equivalent logic programs.
\newblock {\em {ACM} Transactions on Computational Logic} 2(4):526--541.

\bibitem[\protect\citeauthoryear{Lin}{2002}]{lin-2002-kr}
Lin, F.
\newblock 2002.
\newblock Reducing strong equivalence of logic programs to entailment in
  classical propositional logic.
\newblock In Fensel, D.; Giunchiglia, F.; McGuiness, D.~L.; and Williams,
  M.-A., eds., {\em Proceedings of the Eighth International Conference on
  Principles of Knowledge Representation and Reasoning (KR-2002)},  170--176.
\newblock Toulouse, France: Morgan Kaufmann.

\bibitem[\protect\citeauthoryear{Marek and Truszczy{\'n}ski}{1993}]{mt93}
Marek, V.~W., and Truszczy{\'n}ski, M.
\newblock 1993.
\newblock {\em Nonmonotonic {L}ogic; {C}ontext-{D}ependent {R}easoning}.
\newblock Berlin: Springer.

\bibitem[\protect\citeauthoryear{Pearce, Tompits, and
  Woltran}{2009}]{PearceTW09}
Pearce, D.; Tompits, H.; and Woltran, S.
\newblock 2009.
\newblock Characterising equilibrium logic and nested logic programs:
  Reductions and complexity.
\newblock {\em Theory and Practice of Logic Programming} 9(5):565--616.

\bibitem[\protect\citeauthoryear{Pearce}{1997}]{Pearce97}
Pearce, D.
\newblock 1997.
\newblock A new logical characterisation of stable models and answer sets.
\newblock In Dix, J.; Pereira, L.~M.; and Przymusinski, T., eds., {\em
  Non-Monotonic Extensions of Logic Programming}, volume 1216 of {\em Lecture
  Notes in Computer Science},  57--70.
\newblock Springer.

\bibitem[\protect\citeauthoryear{Turner}{2003}]{tu03}
Turner, H.
\newblock 2003.
\newblock Strong equivalence made easy: Nested expressions and weight
  constraints.
\newblock {\em Theory and Practice of Logic Programming} 3(4--5):609--622.

\end{thebibliography}
\bibliographystyle{aaai}

\LongOnly{

\section*{Appendix A: Useful Lemmas}

We provide here several lemmas that we use later in the proofs of the
results discussed in the main body of the paper.
The first %
property follows immediately from the definitions.

\begin{lemma}\label{lemma:simple}
Let $P$ and $Q$ be optimization problems with %
${>^P_{\acc(P)}} = {>^Q_{\acc(Q)}}$. 
Then, $\pref(P)=\pref(Q)$.
\end{lemma}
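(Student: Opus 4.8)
The plan is to prove the two inclusions $\pref(P)\subseteq\pref(Q)$ and $\pref(Q)\subseteq\pref(P)$ by a direct unfolding of the definition of optimality, the argument being entirely symmetric in $P$ and $Q$. Recall that an outcome $I$ belongs to $\pref(P)$ exactly when $I\in\acc(P)$ and there is no $J\in\acc(P)$ with $J>^P I$; equivalently, $I\in\acc(P)$ and no pair $(J,I)$ with $J\in\acc(P)$ lies in the restricted relation ${>^P_{\acc(P)}}$. Thus membership in $\pref(P)$ is determined entirely by the ground set $\acc(P)$ together with the set of pairs ${>^P_{\acc(P)}}$, and likewise for $Q$. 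The whole proof is then a membership chase.

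First I would record that, besides the hypothesis ${>^P_{\acc(P)}}={>^Q_{\acc(Q)}}$, the argument relies on the outcome sets coinciding, $\acc(P)=\acc(Q)$; this is the point on which the claim turns, and I discuss it below. Granting it, I would fix an arbitrary outcome $I\in\acc(P)=\acc(Q)$ and unfold: $I\in\pref(P)$ iff no $J\in\acc(P)$ satisfies $(J,I)\in{>^P_{\acc(P)}}$. Using $\acc(P)=\acc(Q)$ to match the range of the quantifier and the hypothesis to match the relations, this is the very same condition as ``no $J\in\acc(Q)$ satisfies $(J,I)\in{>^Q_{\acc(Q)}}$'', that is, $I\in\pref(Q)$. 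Since $I$ was arbitrary, $\pref(P)=\pref(Q)$ follows at once.

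The step I expect to be the main obstacle --- indeed the only substantive point --- is having $\acc(P)=\acc(Q)$ available. The restricted strict relation alone does not determine the ground set: an outcome incomparable to every other outcome appears in no pair of ${>^P_{\acc(P)}}$, yet is automatically optimal. Hence two problems can share the same restricted strict relation while having different outcome sets and different preferred sets, so equality of the outcome sets must come either as part of the hypothesis or from the context in which the lemma is applied (where the generators, hence their outcomes, are held fixed). Once $\acc(P)=\acc(Q)$ is secured, the membership equivalence above requires no further work, which is precisely why the statement ``follows immediately from the definitions.''
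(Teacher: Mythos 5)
Your proof is correct and is exactly the argument the paper has in mind: the paper states this lemma without proof, remarking only that it ``follows immediately from the definitions,'' and your membership chase --- $I\in\pref(P)$ iff $I\in\acc(P)$ and no $J\in\acc(P)$ has $(J,I)\in{>^P_{\acc(P)}}$ --- is that immediate unfolding. Your side observation is also well taken: the hypothesis ${>^P_{\acc(P)}}={>^Q_{\acc(Q)}}$ alone does not force $\acc(P)=\acc(Q)$ (two problems with empty selectors and different outcome sets have equal, namely empty, restricted relations but different preferred sets), so the lemma as literally stated is missing that assumption. In every place the paper invokes Lemma~\ref{lemma:simple} (the proofs of Theorem~\ref{thm:simpleg} and Theorem~\ref{thm:cequivrankedprefeqcomb}), the equality of the relevant outcome sets has already been established before the lemma is applied, so the uses are sound; but the statement itself should carry the hypothesis $\acc(P)=\acc(Q)$ explicitly, exactly as you note.
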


The next lemma characterizes the relation $>^{P\cup Q}$ for %
ranked optimization problems $P$ and $Q$.
We recall that %
$\dff^P(I,J)$ is the largest $k$ such that 
$I \approx^{P_{<k}} J$, where in addition $\dff^P(I,J)=\infty$, if $I 
\approx^{P_{<k}} J$, for every $k$.

\begin{lemma}\label{lemma:compose}
Let $P$ and $Q$ be optimization problems, and $I,J$ be interpretations.
Then, $I>^{P\cup Q} J$ holds if and only if one of the following conditions
holds:
\begin{enumerate}
\item $\dff^P(I,J)<\dff^Q(I,J)$ and $I>^P J$;
\item $\dff^P(I,J)>\dff^Q(I,J)$ and $I>^Q J$;
\item $\dff^P(I,J)=\dff^Q(I,J)$, $I>^P J$ and $I>^Q J$. 
\end{enumerate}
\end{lemma}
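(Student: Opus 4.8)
The plan is to reduce $I >^{P\cup Q} J$ to a statement about the least rank at which $I$ and $J$ differ, and then to case-split on how that rank compares across $P$ and $Q$. The first step is to establish a convenient reformulation of strict preference for a \emph{single} ranked problem $R$: I claim that $I >^R J$ holds if and only if $\dff^R(I,J) < \infty$ and, writing $d = \dff^R(I,J)$, every rule $r \in R$ of rank $d$ satisfies $\degree{I}{r} \leq \degree{J}{r}$. This follows by unwinding the three defining conditions of $I >^R J$ against a witness rule $r'$: condition (3) forces all rules of rank below $\rank{r'}$ to agree, while condition (1) forces a strict difference at $r'$, so $\rank{r'}$ must equal the least rank at which $I$ and $J$ differ, i.e.\ $\rank{r'} = \dff^R(I,J) = d$; condition (2) is then exactly the clause ``$I$ never loses at rank $d$''. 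Conversely, if $d < \infty$ and $I$ never loses at rank $d$, the rule where the degrees differ (hence $\degree{I}{r'} < \degree{J}{r'}$) witnesses all three conditions.

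Second, I would record the behaviour of $\dff$ under union. Since the rank-$k$ rules of $P \cup Q$ are precisely the rank-$k$ rules of $P$ together with those of $Q$, we have $I \approx^{(P\cup Q)_{<k}} J$ iff $I \approx^{P_{<k}} J$ and $I \approx^{Q_{<k}} J$. Taking the largest such $k$ yields $\dff^{P\cup Q}(I,J) = \min\{\dff^P(I,J), \dff^Q(I,J)\}$ (with the convention $k < \infty$).

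With these two facts in hand, I apply the reformulation to $R = P \cup Q$ and set $m = \dff^{P\cup Q}(I,J) = \min(d_P, d_Q)$, writing $d_P = \dff^P(I,J)$ and $d_Q = \dff^Q(I,J)$. Then $I >^{P\cup Q} J$ becomes: $m < \infty$ and every rank-$m$ rule of $P \cup Q$ has $\degree{I}{r} \leq \degree{J}{r}$. The final step is a three-way case split. If $d_P < d_Q$, then $m = d_P < \infty$ and every rank-$m$ rule of $Q$ already has equal degrees (as $m < d_Q$), so the condition collapses to ``every rank-$d_P$ rule of $P$ has $\degree{I}{r} \leq \degree{J}{r}$'', which by the reformulation is exactly $I >^P J$, giving clause (1); the case $d_P > d_Q$ is symmetric and gives clause (2). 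When $d_P = d_Q = m$, the rank-$m$ rules of $P \cup Q$ separate into the $P$-part and the $Q$-part, so the condition becomes the conjunction of the two ``never loses at rank $m$'' clauses, i.e.\ $I >^P J$ and $I >^Q J$, which is clause (3). I also dispatch the degenerate subcase $m = \infty$: then $I \approx^{P\cup Q} J$, so $I \not>^{P\cup Q} J$, and indeed no clause can hold since each requires a finite $\dff$ together with a strict win.

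I expect the only delicate point to be the first step: carefully matching the existentially-quantified ``witness rule $r'$'' formulation of $>^R$ against the rank-indexed ``least differing rank'' formulation via $\dff$, and in particular confirming that the witness rule necessarily sits at rank exactly $\dff^R(I,J)$. Once that equivalence is clean, the union behaviour of $\dff$ and the case analysis are routine bookkeeping.
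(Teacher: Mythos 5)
Your proof is correct and follows essentially the same route as the paper's: the key observation in both is that any witness rule for $I >^R J$ must sit at rank exactly $\dff^R(I,J)$, combined with $\dff^{P\cup Q}(I,J)=\min(\dff^P(I,J),\dff^Q(I,J))$ and a trichotomy on how $\dff^P$ and $\dff^Q$ compare. You are in fact more complete than the paper, which declares the ``if'' direction and the two unequal-$\dff$ cases evident and only spells out the case $\dff^P(I,J)=\dff^Q(I,J)$; your explicit reformulation of $>^R$ via $\dff^R$ cleanly covers all cases, including the degenerate $\dff=\infty$ one.
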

\begin{proof}
The ``if'' direction is evident. To prove the ``only-if'' direction,
we note that the cases $\dff^P(I,J)<\dff^Q(I,J)$ and $\dff^P(I,J)>
\dff^Q(I,J)$ are obvious, too. Thus, let us assume $\dff^P(I,J)=\dff^Q(I,J)=i$.
Clearly, $i<\infty$ (otherwise, $I\approx^{P\cup Q} J$, contrary to the
assumption). It follows that for every rule $r\in P^s\cup Q^s$ of rank
less than $i$, $v_I(r)=v_J(r)$. Next, for every $r\in P^s\cup Q^s$ 
of rank $i$, $v_I(r)\geq v_J(r)$. Finally, there are rules $r\in P^s$ and
$r'\in Q^s$, each of rank $i$ such that $v_I(r)\not= v_J(r)$ and
$v_I(r')\not= v_J(r')$ (since $\dff^P(I,J)=i$ and $\dff^Q(I,J)=i$). It
follows that $v_I(r)<v_J(r)$ and $v_I(r')<v_J(r')$. Thus,
$I>^P J$ and $I>^Q J$, as needed.
\end{proof}

\begin{lemma}\label{lemma:lowerrank}
For every optimization problem $P$ and every $i \geq 1$,
$\pref(P_{<i})\supseteq\pref(P)$.
\end{lemma}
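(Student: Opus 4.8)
The plan is to reduce the claimed set inclusion to a single pointwise statement about the strict preference relation and then exploit the fact that, under the ranked semantics, whether $J >^{S} I$ holds is decided entirely by rules of low rank. First I would record that the generator of $P$ and of $P_{<i}$ is the same theory $T$, so by the observation that the set of outcomes is unaffected by the selector, $\acc(P)=\acc(P_{<i})$. Hence $\pref(P)$ and $\pref(P_{<i})$ are subsets of one and the same outcome set, and it suffices to show that every $I\in\pref(P)$ remains preferred in $P_{<i}$. Since preferredness means ``not strictly beaten by any outcome,'' this follows once I establish, for all outcomes $I,J$, the implication
\[ J >^{P_{<i}} I \ \Rightarrow\ J >^{P} I, \]
because its contrapositive says that if nothing beats $I$ in $P$ then nothing beats it in $P_{<i}$.

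To prove the implication, I would assume $J >^{P_{<i}} I$ and unfold the definition of the ranked strict relation. There is then a witnessing rule $r'\in (P_{<i})^s=S_{<i}$ of some rank $\rho=\rank{r'}<i$ satisfying the three defining conditions with respect to $S_{<i}$: namely $\degree{J}{r'}<\degree{I}{r'}$; $\degree{J}{r}\leq\degree{I}{r}$ for every $r\in S_{<i}$ of rank $\rho$; and $\degree{J}{r}=\degree{I}{r}$ for every $r\in S_{<i}$ of rank smaller than $\rho$. The crucial step is to note that these three conditions quantify only over rules of rank at most $\rho$, and since $\rho<i$ every rule of $S=P^s$ of rank at most $\rho$ already belongs to $S_{<i}$. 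Therefore the very same $r'$ satisfies the three conditions with respect to the full selector $S$, and so $r'$ witnesses $J >^{P} I$, as required.

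The only place demanding care — and the point I would flag as the conceptual crux rather than a genuine obstacle — is the recognition that the definition of $>^{S}$ imposes \emph{no} constraint on rules of rank strictly greater than $\rank{r'}$. Consequently the rules of rank $\geq i$ that are present in $P$ but discarded in $P_{<i}$ cannot reverse or block a strict preference that is already settled at the lower rank $\rho$; they simply play no role in the witness. This is exactly what lets the same rule $r'$ serve in both problems, and it closes the argument. The case $i=1$ is subsumed automatically: then $S_{<1}=\emptyset$, so $J >^{P_{<1}} I$ never holds and $\pref(P_{<1})=\acc(P)\supseteq\pref(P)$ trivially.
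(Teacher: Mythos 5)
Your proof is correct and follows essentially the same route as the paper's: both arguments rest on the observation that the three conditions defining the witnessing rule $r'$ for $J>^{P_{<i}}I$ quantify only over rules of rank at most $\rank{r'}<i$, which are the same in $P^s$ and $P^s_{<i}$, so the same rule witnesses $J>^{P}I$. The paper phrases this as a contrapositive ($I\notin\pref(P_{<i})$ implies $I\notin\pref(P)$) while you state the pointwise implication on $>$ directly, but the content is identical.
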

\begin{proof}
Let us assume that $I\notin \pref(P_{<i})$. Then, there is an
interpretation $J\in \acc(P_{<i})$ such that $J>^{P_{<i}} I$.
Thus, there is a rule $r\in P^s_{<i}$, say of rank $j$, such that 
(i) $v_J(r)<v_I(r)$;
(ii) for every $r'\in P^s_{<i}$ with the rank $j$, $v_J(r')\leq v_J(r')$;
and
(iii) for every $r'\in P^s_{<i}$ with rank less then $j$, $v_J(r')= v_I(r')$.
We note that, since $\acc(P_{<i})=\acc(P)$, $J\in \acc(P)$. Moreover,
$j<i$ and so the sets of rules with ranks less than or equal to $j$ in 
$P^s$ and $P^s_{<i}$ coincide. Thus, $J>^P I$ follows and, consequently,
$I\notin\pref(P)$.
\end{proof}

We give the next two lemmas without proofs, as they are easy
consequences of results by Ferraris [\citeyear{fer05}] and Ferraris
and Lifschitz [\citeyear{fl05}].

\begin{lemma}
\label{lemma:as-elim1}
Let $P$ be a theory, $I$ an interpretation, and let
$\Pi[I]=\{a\rightarrow \bot\st a\notin I\}
\cup \{\neg a\rightarrow \bot\st a\in I\}$. Then, $\AS{P\cup \Pi[I]}=
\Mod{P\cup \Pi[I]}=\{I\}$.
\end{lemma}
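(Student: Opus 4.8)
The plan is to establish the two equalities in turn, using throughout the basic fact recalled in the preliminaries that every answer set of a theory is also one of its classical models; this lets me reduce the answer-set claim to the classical one together with a single membership check.

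First I would compute the classical models of $\Pi[I]$. The formulas $a \rightarrow \bot$ with $a \notin I$ force every atom outside $I$ to be false, while the formulas $\neg a \rightarrow \bot$ with $a \in I$ are classically equivalent to $a$ and force every atom of $I$ to be true; hence $\Mod{\Pi[I]} = \{I\}$. Consequently $\Mod{P \cup \Pi[I]} = \Mod{P} \cap \{I\}$, which equals $\{I\}$ exactly when $I \models P$ --- the situation in which the lemma is invoked (otherwise both displayed sets are empty). This settles the classical part of the statement.

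For the answer-set equality, the inclusion $\AS{P \cup \Pi[I]} \subseteq \Mod{P \cup \Pi[I]} = \{I\}$ is immediate from the fact that answer sets are classical models. It therefore suffices to verify that $I$ itself is an equilibrium model of $P \cup \Pi[I]$. The total-interpretation condition $\langle I, I \rangle \models_{HT} P \cup \Pi[I]$ is clear, since $\langle I, I \rangle$ satisfies a formula in HT precisely when $I$ satisfies it classically, and we have just seen that $I \models P \cup \Pi[I]$.

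The main obstacle is the minimality condition: for every proper subset $J \subsetneq I$ I must exhibit a formula of $P \cup \Pi[I]$ that is not HT-satisfied at $\langle J, I \rangle$. The natural candidate is a formula of $\Pi[I]$ attached to some atom $a \in I \setminus J$, and here lies the delicate point, because the HT-truth of such a formula at $\langle J, I \rangle$ must genuinely depend on the here-world $J$, and not merely on the there-world $I$, in order to rule $J$ out. Identifying which formula of $\Pi[I]$ has this here-world sensitivity, and checking that dropping $a$ from $J$ destroys its HT-satisfaction, is exactly the content I would draw from the results of Ferraris and of Ferraris and Lifschitz; once this is done, no proper subset of $I$ is an HT-model of the theory, $I$ is minimal, and the reverse inclusion $\{I\} \subseteq \AS{P \cup \Pi[I]}$ follows, completing the proof.
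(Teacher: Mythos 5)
Your classical part is correct (including the observation that the equality presupposes $I\models P$; the paper indeed only invokes the lemma when $I$ is an outcome of the theory at hand). The gap is in the minimality step of the answer-set part, and it is not a detail that can be outsourced to Ferraris and Ferraris--Lifschitz: the formula of $\Pi[I]$ with the ``here-world sensitivity'' you are looking for does not exist. Unfolding the definitions, for $a\in I$ the formula $\neg a\rightarrow\bot$ is $(a\rightarrow\bot)\rightarrow\bot$, and $\langle J,I\rangle\models_{HT}(a\rightarrow\bot)\rightarrow\bot$ holds if and only if $a\in I$ --- independently of the here-world $J$ --- because $\langle J,I\rangle\models_{HT}a\rightarrow\bot$ already reduces to $a\notin I$. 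Likewise $a\rightarrow\bot$ for $a\notin I$ holds at every $\langle J,I\rangle$ with $J\subseteq I$. (Both kinds of formulas are negations, and negated formulas in HT are evaluated in the there-world only.) Hence $\langle J,I\rangle\models_{HT}\Pi[I]$ for \emph{every} $J\subseteq I$, so $\Pi[I]$ contributes nothing to minimality. In fact the statement you set out to prove under the hypothesis $I\models P$ alone is false: with $P=\emptyset$ and $I=\{a\}$ one has $I\models P$ and $\Mod{P\cup\Pi[I]}=\{I\}$, yet $\AS{P\cup\Pi[I]}=\emptyset$.

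What the cited results actually yield is $\AS{P\cup\Pi[I]}=\{X\in\AS{P}\st X\models\Pi[I]\}=\AS{P}\cap\{I\}$: constraints filter answer sets but never create support. So the answer-set equality $\AS{P\cup\Pi[I]}=\{I\}$ requires the stronger hypothesis $I\in\AS{P}$ (which is exactly what holds at every point where the paper uses the lemma), and under that hypothesis minimality is inherited from $P$ itself --- for every $J\subsetneq I$ one already has $\langle J,I\rangle\not\models_{HT}P$, hence $\langle J,I\rangle\not\models_{HT}P\cup\Pi[I]$ --- rather than being enforced by $\Pi[I]$. Your argument should be restructured accordingly.
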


\begin{lemma}
\label{lemma:as-elim2}
Let $P$ be a theory, $I,J$ two of its (classical) models such that $I\not=J$,
and let
\begin{eqnarray*}
\Pi[I,J] &=&\{ a\vee b \st a\in I,\ b\in J\}\\
 &\cup&\{a\vee\neg b \st a\in I,\ b\notin J\}\\
 &\cup&\{\neg a\vee b \st a\notin I,\ b\in J\}\\
 &\cup&\{\neg a\vee\neg b \st a\notin I,\ b\notin J\}.
\end{eqnarray*}
Then $\AS{P\cup \Pi[I,J]}=\Mod{P\cup \Pi[I,J])}=\{I,J\}$.
\end{lemma}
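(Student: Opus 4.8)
The plan is to prove the two equalities in turn, deriving the answer-set one from the classical one. First I would handle $\Mod{P\cup\Pi[I,J]}=\{I,J\}$. Observe that the four sets comprising $\Pi[I,J]$ together contain, for \emph{every} pair of atoms $a,b$, exactly one clause $\ell_a\vee m_b$, where $\ell_a$ is $a$ if $a\in I$ and $\neg a$ otherwise, and $m_b$ is $b$ if $b\in J$ and $\neg b$ otherwise. The key observation is that for any interpretation $K$, $K\models\ell_a$ iff $a$ has the same status in $K$ and $I$, and $K\models m_b$ iff $b$ has the same status in $K$ and $J$. Writing $A_K$ for the set of atoms on which $K$ and $I$ disagree and $B_K$ for those on which $K$ and $J$ disagree, the clause $\ell_a\vee m_b$ fails in $K$ precisely when $a\in A_K$ and $b\in B_K$. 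Since all such clauses are present, $K\models\Pi[I,J]$ iff $A_K\times B_K=\emptyset$, i.e.\ iff $A_K=\emptyset$ or $B_K=\emptyset$, i.e.\ iff $K=I$ or $K=J$. Hence $\Mod{\Pi[I,J]}=\{I,J\}$, and because $I,J\models P$ by hypothesis, $\Mod{P\cup\Pi[I,J]}=\Mod{P}\cap\{I,J\}=\{I,J\}$.

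For the answer-set equality I would first use that every answer set is a classical model, giving $\AS{P\cup\Pi[I,J]}\subseteq\Mod{P\cup\Pi[I,J]}=\{I,J\}$; it then remains to verify that both $I$ and $J$ are equilibrium models. I would check $I$ and obtain $J$ by a symmetric argument. That $\langle I,I\rangle\models_{HT}P\cup\Pi[I,J]$ follows from the standard fact that $\langle I,I\rangle\models_{HT}\phi$ iff $I\models\phi$, together with $I\models P\cup\Pi[I,J]$ established above. For minimality I would fix a proper subset $M\subsetneq I$, pick an atom $a^*\in I\setminus M$, and exhibit a single clause of $\Pi[I,J]$ that $\langle M,I\rangle$ does not HT-satisfy, namely the ``diagonal'' clause coming from the pair $a=b=a^*$.

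The main obstacle is really the only delicate point: the HT treatment of negation, which I would isolate first. For $M\subseteq I$ one has $\langle M,I\rangle\models_{HT}a$ iff $a\in M$, and $\langle M,I\rangle\models_{HT}\neg a$ iff $a\notin I$ — the latter because $\neg a=a\rightarrow\bot$ requires classical satisfaction at the ``there'' world $I$, while $M\subseteq I$ makes the remaining condition automatic. Granting this, since $a^*\in I$ the diagonal clause is $a^*\vee a^*$ when $a^*\in J$ and $a^*\vee\neg a^*$ when $a^*\notin J$. In the first case $\langle M,I\rangle\models_{HT}a^*$ fails since $a^*\notin M$; in the second case both disjuncts fail, as $a^*\notin M$ and $a^*\in I$. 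Either way $\langle M,I\rangle\not\models_{HT}\Pi[I,J]$, hence $\langle M,I\rangle\not\models_{HT}P\cup\Pi[I,J]$, so $I$ is an equilibrium model. By symmetry — choosing $b^*\in J\setminus M$ for $M\subsetneq J$ and using the diagonal clauses $b^*\vee b^*$ or $\neg b^*\vee b^*$ — $J$ is an equilibrium model as well. Combining the two inclusions gives $\AS{P\cup\Pi[I,J]}=\{I,J\}$, completing the proof.
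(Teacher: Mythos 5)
Your proof is correct. The paper itself offers no argument for this lemma---it is stated ``without proofs'' as an easy consequence of results of Ferraris and of Ferraris and Lifschitz---so your direct verification is a genuinely different (and more self-contained) route. Both halves of your argument are sound: the observation that $\Pi[I,J]$ contains, for every pair of atoms $(a,b)$, the single clause $\ell_a\vee m_b$ with $\ell_a$ matching $a$'s status in $I$ and $m_b$ matching $b$'s status in $J$ cleanly yields $\Mod{\Pi[I,J]}=\{I,J\}$, and the minimality check via the diagonal clause for a chosen $a^*\in I\setminus M$ is exactly right, including the delicate point that $\langle M,I\rangle\models_{HT}\neg a$ holds iff $a\notin I$ (so that $a^*\vee\neg a^*$, a classical tautology, genuinely fails at $\langle M,I\rangle$ when $a^*\in I\setminus M$). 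What the citation route buys is brevity by appealing to known closure properties of equilibrium logic under adding constraint-like formulas; what your route buys is a fully elementary, checkable argument that exposes precisely where the HT semantics of negation is used, which is arguably more informative for a reader unfamiliar with those prior results.
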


\begin{lemma}\label{lemma:selnew}
Let $P$ be an optimization problem, 
$I\in\ \pref(P_{<j})$, where
$j\geq 1$, and let
$$
R_j[I] = 
 \{a>\top \LPifrank{j} \st  a\in I\} \cup 
 \{\neg a> \top \LPifrank{j}\st a\in \U\setminus I)\}. 
$$
Then 
\begin{enumerate}
\item $I\in\pref(P\cup R_j[I])$;
\item for every $J$ such that
$J\not=I$ and $I\geq^{P_{\leq j}} J$, $I>^{P\cup R_j[I]} J$.
\end{enumerate}
\end{lemma}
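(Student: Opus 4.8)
The plan is to reduce both claims to the composition lemma (Lemma~\ref{lemma:compose}) after first pinning down how the auxiliary rules of $R_j[I]$ evaluate. I would begin by computing the satisfaction degrees of those rules. For the rule associated with an atom $a$, a short case analysis on the head $\{a,\top\}$ (resp. $\{\neg a,\top\}$) shows that its degree under an interpretation $K$ is $1$ exactly when $K$ agrees with $I$ on $a$, and is $2$ otherwise. Two consequences drive everything: $v_I(r)=1$ for every $r\in R_j[I]$, so $I$ attains the best possible degree on all new rules; and for every $K\neq I$ there is an atom on which $K$ disagrees with $I$, hence a rule with $v_K=2$. Since all rules of $R_j[I]$ have rank $j$, this yields $\dff^{R_j[I]}(I,J)=j$ for every $J\neq I$, together with $I>^{R_j[I]}J$ and $J\not>^{R_j[I]}I$ (the latter because $v_I(r)=1$ for all $r$ cannot be strictly beaten).

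Next I would isolate a \emph{lifting} observation that I expect to be the technical crux. If $m=\dff^P(I,J)$ is finite and $m\leq j$, then the rank-$m$ rules of $P$ coincide with those of $P_{\leq j}$ (and with those of $P_{<j}$ when $m<j$), and $I>^P J$ holds iff every rank-$m$ rule satisfies $v_I\leq v_J$ with at least one strict inequality. Consequently $I>^P J$ iff $I>^{P_{\leq j}}J$, and also iff $I>^{P_{<j}}J$ when $m<j$. The point is that the first differentiating rank $m$ forces the witnessing rule to have rank exactly $m$, so truncating the selector above $m$ changes neither the witness nor the verification of the three defining conditions of $>$; by symmetry of $\dff$ this applies to either ordering of the pair.

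For the second claim I would fix $J\neq I$ with $I\geq^{P_{\leq j}}J$ and apply Lemma~\ref{lemma:compose} to $P$ and $R_j[I]$, using $\dff^{R_j[I]}(I,J)=j$ and $I>^{R_j[I]}J$. If $\dff^P(I,J)>j$, then $I\approx^{P_{\leq j}}J$, and case~(2) of the lemma gives $I>^{P\cup R_j[I]}J$. If $\dff^P(I,J)\leq j$, then $I\not\approx^{P_{\leq j}}J$; combining this with $I\geq^{P_{\leq j}}J$ and the fact that $\geq$ decomposes disjointly into $>$ and $\approx$ yields $I>^{P_{\leq j}}J$, and the lifting observation upgrades this to $I>^P J$. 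Then case~(1) (when $\dff^P(I,J)<j$) or case~(3) (when $\dff^P(I,J)=j$) of Lemma~\ref{lemma:compose} gives $I>^{P\cup R_j[I]}J$.

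For the first claim I would argue by contradiction. Since the selector does not affect outcomes, $\acc(P\cup R_j[I])=\acc(P)$, and $I\in\acc(P)$ because $I\in\pref(P_{<j})\subseteq\acc(P_{<j})=\acc(P)$. Suppose some $J\in\acc(P)$ satisfies $J>^{P\cup R_j[I]}I$; then $J\neq I$, and using $J\not>^{R_j[I]}I$ together with $\dff^{R_j[I]}(J,I)=j$, Lemma~\ref{lemma:compose} forces case~(1): $\dff^P(I,J)=\dff^P(J,I)<j$ and $J>^P I$. The lifting observation (with $m<j$) then gives $J>^{P_{<j}}I$. But $J\in\acc(P)=\acc(P_{<j})$, contradicting $I\in\pref(P_{<j})$. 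Hence no such $J$ exists and $I\in\pref(P\cup R_j[I])$. The main obstacle throughout is the lifting observation; once it is established cleanly, both parts follow by routine case analysis on $\dff^P(I,J)$ relative to $j$ through Lemma~\ref{lemma:compose}.
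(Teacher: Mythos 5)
Your proof is correct and follows essentially the same route as the paper: compute the satisfaction degrees of the rules in $R_j[I]$ (all equal to $1$ on $I$, with some degree $2$ on any $K\neq I$, hence $\dff^{R_j[I]}(I,J)=j$, $I>^{R_j[I]}J$ and $J\not>^{R_j[I]}I$), and then combine with $P$ via Lemma~\ref{lemma:compose}, using the observation that when $\dff^P(I,J)=m\leq j$ the witnessing rule has rank exactly $m$ so that $>$ is preserved under truncating the selector at any rank $\geq m$. The only difference is one of exposition: you spell out assertion (2) in full (the case split on $\dff^P(I,J)$ versus $j$), which the paper dismisses as evident, and you make the ``lifting'' step explicit where the paper uses it implicitly.
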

\begin{proof}
When proving (1), 
to simplify the notation, we write $R$ for $R_j[I]$. 
Since $I\in \pref(P_{<j})$, $I\in\mu(P)$. Clearly, 
$\mu(P\cup R)=\mu(P)$ and so, $I\in \mu(P\cup R)$. 
To show that $I\in\pref(P\cup R)$, let us consider an arbitrary
interpretation $J\in\mu(P\cup R)$ and assume that $J >^{P\cup R} I$. 
In particular, $J\not=I$ and so, $\dff^R(I,J)=j$.
If $\dff^P(I,J)<j$, then $\dff^{P\cup R}(I,J) <j$. Consequently, 
$J>^{(P\cup R)_{<j}} I$. Since all rules in $R$ are of rank $j$, 
it follows that $J>^{P_{<j}} I$, a contradiction with the fact that
$I\in\pref(P_{<j})$. Thus, $\dff^P(I,J)\geq j$. Since $\dff^R(I,J)=j$,
we have $\dff^{P\cup R}(I,J)=j$. Therefore, $J >^{P\cup R} I$ implies 
$J >^{R} I$, a contradiction again (since, by definition of
$R_j[I]=R$, $I\geq^R J$ for each interpretation $J$).
It follows that 
for every $J\in\mu(P\cup R)$, $J\not>^{P\cup R} I$, that
is, $I\in\pref(P\cup R)$. 

The assertion (2) is evident, since by definition of
$R_j[I]=R$, $I >^R J$ for each interpretation $J\neq I$.
\end{proof}

\begin{lemma}\label{lemma:selnew2}
Let $P$ be an optimization problem, $I,J$ interpretations such that
$I,J\in\pref(P_{<j})$, where 
$j\geq 1$, and let $R'_j[I,J] \in \L^{s,\geq j}$ be the union
of the  following sets of rules:
\[
\begin{array}{l}
 \{a>\top \LPifrank{j} \st  a\in I\cap J\}\\

 \{\neg a> \top \LPifrank{j}\st a\in \U\setminus (I\cup J))\}\\

 \{a\vee b > \top \LPifrank{j} \st a\in I\setminus J,\ b\in J\setminus I\} 
\\
 \{\neg a\vee \neg b > \top \LPifrank{j} \st a\in I\setminus J,\ b\in J\setminus I\} 
\\
\{(a\land b)\lor(\neg a\land \neg b) > \top  \LPifrank{j} \st 
a,b\in
       (I\setminus J)\cup( J\setminus I)\}.
\end{array}
\]
Then
\begin{enumerate}
\item for every $r\in R'_j[I,J]$, $v_I(r)=v_J(r)=1$;
\item if $I >^P J$, then $J\notin\pref(P\cup R'_j[I,J])$;
\item for every interpretation $K\notin \{I,J\}$, there is a
rule $r\in R'_j[I,J]$ such that $v_K(r)=2$;
\item if $I \not >^{P} J$, then $J\in \pref(P\cup R'_j[I,J])$.
\end{enumerate}
\end{lemma}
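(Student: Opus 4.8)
The plan is to first record a uniform description of the satisfaction degrees and then treat the four assertions in the order (1), (3), (2), (4), since the last two rest on the first two. Every rule of $R'_j[I,J]$ (write $R'$ for short) has the shape $\alpha>\top\LPifrank{j}$ with empty body, so it is relevant to every interpretation $K$ and $\bigvee$ of its head equals $\top$; hence $\degree{K}{r}\in\{1,2\}$, with $\degree{K}{r}=1$ exactly when $K\models\alpha$ and $\degree{K}{r}=2$ otherwise. Assertion (1) is then a direct check over the five groups that both $I$ and $J$ satisfy the leading option $\alpha$: for the group over $I\cap J$ the atom $a$ holds in both; for the $a\vee b$ group the witness $a\in I$ gives $I\models a\vee b$ and $b\in J$ gives $J\models a\vee b$; for the $\neg a\vee\neg b$ group, $b\notin I$ and $a\notin J$ do the job; and the last group is handled reading its pairs as same-side pairs (i.e.\ $a,b\in I\setminus J$ or $a,b\in J\setminus I$), which is the reading that makes (1) hold.

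For assertion (3) I would characterize the interpretations $K$ with $\degree{K}{r}=1$ for all $r\in R'$ and show they are exactly $I$ and $J$, assertion (3) being the contrapositive. The first two groups force $I\cap J\subseteq K\subseteq I\cup J$. The last group forces $K$ to be uniform on each side of the symmetric difference, i.e.\ $K\cap(I\setminus J)\in\{\emptyset,I\setminus J\}$ and $K\cap(J\setminus I)\in\{\emptyset,J\setminus I\}$, leaving only the four candidates $I\cap J$, $I$, $J$, $I\cup J$. Finally, when $I\setminus J$ and $J\setminus I$ are both nonempty, picking $a\in I\setminus J$, $b\in J\setminus I$ shows the rule $a\vee b>\top$ excludes $K=I\cap J$ and the rule $\neg a\vee\neg b>\top$ excludes $K=I\cup J$; when one side is empty the two offending candidates already collapse onto $I$ and $J$. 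Hence $K\in\{I,J\}$, which is exactly (3).

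Assertion (2) is immediate from (1) and the composition Lemma~\ref{lemma:compose}. By (1), $I\approx^{R'}J$, so $\dff^{R'}(I,J)=\infty$, whereas $I>^P J$ forces $\dff^P(I,J)<\infty$; thus $\dff^P(I,J)<\dff^{R'}(I,J)$ and $I>^P J$, so case (1) of Lemma~\ref{lemma:compose} gives $I>^{P\cup R'}J$. Since changing the selector does not change outcomes, $I,J\in\acc(P\cup R')=\acc(P)$ (both lie in $\acc(P)$ because $I,J\in\pref(P_{<j})\subseteq\acc(P)$), so $J$ is dominated and $J\notin\pref(P\cup R')$.

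Assertion (4) is the delicate one, and I would argue by contradiction: assume $I\not>^{P}J$ but some $K\in\acc(P\cup R')=\acc(P)$ has $K>^{P\cup R'}J$. The cases $K=J$ (irreflexivity) and $K=I$ (case (1) of Lemma~\ref{lemma:compose} would demand $I>^P J$, against the hypothesis) are dispatched at once. For $K\notin\{I,J\}$, assertion (3) supplies a rule with $\degree{K}{r}=2$, so $K\not\approx^{R'}J$ while $K\approx^{R'_{<k}}J$ for all $k\le j$ (every rank in $R'$ is $j$); hence $\dff^{R'}(K,J)=j$. Because $\degree{J}{r}=1$ for all $r$, the relation $K>^{R'}J$ is impossible (it would require some $\degree{K}{r}<1$), ruling out cases (2) and (3) of Lemma~\ref{lemma:compose} and leaving only case (1): $\dff^P(K,J)<j$ and $K>^P J$. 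But then the rule witnessing $K>^P J$ has rank $\dff^P(K,J)<j$ and all rules deciding this comparison lie in $P_{<j}$, so $K>^{P_{<j}}J$ with $K\in\acc(P_{<j})$, contradicting $J\in\pref(P_{<j})$. The main obstacle is exactly this final step: extracting from Lemma~\ref{lemma:compose} that the only surviving way to dominate $J$ in $P\cup R'$ must be decided at a rank below $j$, and then pushing that domination down to $P_{<j}$ to violate the optimality of $J$.
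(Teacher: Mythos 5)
Your proof is correct and follows essentially the same route as the paper's: (1) by inspection, (3) by pinning $K$ between $I\cap J$ and $I\cup J$ and then excluding everything except $I$ and $J$, (2) from (1) together with $I>^P J$, and (4) by splitting on whether the comparison between $K$ and $J$ is decided below rank $j$ (where optimality of $J$ in $P_{<j}$ applies) or at rank $j$ (where the rule supplied by (3) blocks $K>^{P\cup R'_j[I,J]}J$); you merely make the appeal to Lemma~\ref{lemma:compose} explicit where the paper leaves it implicit. The one genuine divergence is your treatment of the fifth group of rules, and it works in your favor. Read literally, that group contains a rule $(a\land b)\lor(\neg a\land\neg b)>\top\LPifrank{j}$ for \emph{mixed} pairs $a\in I\setminus J$, $b\in J\setminus I$, and for such a rule $v_I(r)=v_J(r)=2$, so assertion (1) as printed would be false; yet the paper's own proof of (3) relies on exactly these mixed-pair rules (it takes $a\in I\setminus J$ with $a\notin K$ and an arbitrary $b\in K\setminus(I\cap J)$, which may lie in $J\setminus I$), which is inconsistent with its claim that (1) is evident. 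Your same-side reading repairs this: it keeps (1) true, and your four-candidate argument --- $K\in\{I\cap J,\, I,\, J,\, I\cup J\}$, with the two extreme candidates eliminated by the $a\vee b$ and $\neg a\vee\neg b$ rules when both sides of the symmetric difference are nonempty, and collapsing onto $I$ or $J$ when one side is empty --- shows the mixed-pair rules are not needed for (3) either. So the proposal is not only correct but quietly fixes a small inconsistency in the lemma as stated.
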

\begin{proof}
To simplify the notation, we write $R'$ for $R'_j[I,J]$. 
\smallskip

The assertion (1) is evident. Moreover, combined with $I >^P J$, it
yields $I>^{P\cup R'} J$. Thus, the assertion (2) holds. 
\smallskip

To prove the assertion (3), we consider 
an interpretation $K$ %
that is different from $I$ and from $J$, and we show
that there is a rule $r\in R'$ such that $v_K(r)=2$.

Let us consider $a\in I\cap J$. The rule $r_a= a>\top\LPifrank{j}$
belongs to $R'$. %
If $v_K(r_a)=2$, we are done. Thus, let
us assume that for every $a\in I\cap J$, $v_K(r_a)=1$, where
$r_a=a>\top\LPifrank{j}$. 
Consequently, for every $a\in I\cap J$, $a\in K$, that is,
$I\cap J\subseteq K$. Taking into account rules $\neg a>\top\LPifrank{j}$, 
with $a\in \U\setminus(I\cup J)$ and reasoning in the same way, we show 
that $K\subseteq I\cup J$.

Let us assume that $(I\cap J)\subset K \subset (I\cup J)$ holds. Then,
there is $a\in I\cup J$ such that $a\notin K$. Without loss of generality
we can assume that $a\in I$. From $(I\cap J)\subset K$, it follows that 
$a\notin J$. Further, there is $b\in K$ such that $b\notin I\cap J$.
Since $K \subset (I\cup J)$ holds, $b\in I\cup J$. It follows that
$a,b\in (I\setminus J)\cup (J\setminus I)$. Thus, $v_K(r)=2$, where $r$ 
is the corresponding rule from $R'$. 

Thus, only two possibilities for $K$ remain:
$K=I\cap J$ and $K=I\cup J$. It follows that 
$I\setminus J\neq \emptyset$ (otherwise, $I=K$) and $J\setminus I\neq
\emptyset$ (otherwise, $J=K$). Let $a'\in I\setminus J$ and $b'\in 
J\setminus I$. Then $R'$ contains rules $r= a'\lor b' >\top\LPifrank{j}$ 
and $s = \neg a'\lor \neg b' >\top\LPifrank{j}$. If $K=I\cap J$, $v_K(r)=2$. 
If $K=I\cup J$, $v_K(s)=2$.
\smallskip

To prove the assertion (4), we note first that (1) and $I \not >^{P} J$
together imply that $I\not >^{P\cup R'} J$. Next, we note that
if $\dff^P(J,K)<j$, then since $J\in\pref(P_{<j})$, $K\not>^{P\cup R'}
J$. If $\dff^P(J,K)\geq j$, then the property proved above implies that 
$K\not >^{P\cup R'}J$. Since $K$ is an arbitrary interpretation different 
from $I$ and $J$, and since $I\not>^{P\cup R'} J$, $J\in\pref(P\cup R')$
follows.
\end{proof}

We also note a property here that allows us to infer the strong 
sel-equivalence of two problems treated as CO problems from the 
strong sel-equivalence the these problems when treated as ASO 
problems (and conversely). The property relies on the fact that
changing selectors only does not affect the class of outcomes.
The proof is simple and we omit it.

\begin{lemma}
\label{lem:co-aso}
Let $P$ and $Q$ be optimization problems such that $\Mod{P^g}=\AS{P^g}$
and $\Mod{Q^g}=\AS{Q^g}$. Then, $P\equiv^{s,\geq i} Q$
($P\equiv^{s,=i} Q$, respectively), when $P$ and $Q$ are viewed as CO
problems, if and only if $P\equiv^{s,\geq i} Q$ ($P\equiv^{s,=i} Q$), when
$P$ and $Q$ are viewed as ASO problems.
\end{lemma}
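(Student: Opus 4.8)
The plan is to reduce the claim to the syntactic characterization of strong sel-equivalence already established in Theorem~\ref{thm:sequivrankedprefeq}. Recall that $\equiv^{s,\geq i}$ is the relation $\equiv^{s,[i,\infty]}$ and that $\equiv^{s,=i}$ is $\equiv^{s,[i,i]}$, so both special cases fall under that theorem. Theorem~\ref{thm:sequivrankedprefeq} characterizes $P\equiv^{s,[i,j]} Q$ by three conditions phrased in terms of the sets $\pref(P_{<i})$ and $\pref(Q_{<i})$, the strict relations $>^P,>^Q$ restricted to those sets, and the functions $\dff^P,\dff^Q$.

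First I would observe that none of these ingredients depends on the semantics chosen for the generators, except through the outcome sets $\mu(P)$ and $\mu(Q)$. The relations $>^P$ and $>^{P_{<i}}$ and the quantities $\dff^P(I,J)$ are defined entirely from the selector $P^s$ and the satisfaction degrees $\degree{I}{r}$ of its rules, none of which refers to how $P^g$ is interpreted. The generator enters only through $\pref(P_{<i})$, the set of optimal outcomes of $(P^g,P^s_{<i})$; but since the selector does not affect the class of outcomes, we have $\mu(P_{<i})=\mu(P)$, so $\pref(P_{<i})$ is determined by $\mu(P)$ together with the fixed selector $P^s_{<i}$. Hence each of the three conditions of Theorem~\ref{thm:sequivrankedprefeq} is, in the end, a statement about $\mu(P)$, $\mu(Q)$ and the (semantics-independent) selectors.

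The key step is then to invoke the hypothesis. By assumption $\Mod{P^g}=\AS{P^g}$ and $\Mod{Q^g}=\AS{Q^g}$, so the outcome set $\mu(P)$ is literally the same whether $P$ is read as a CO problem (where $\mu(P)=\Mod{P^g}$) or as an ASO problem (where $\mu(P)=\AS{P^g}$), and likewise for $\mu(Q)$. Consequently $\pref(P_{<i})$ and $\pref(Q_{<i})$ coincide across the two readings, and the three characterizing conditions of Theorem~\ref{thm:sequivrankedprefeq} become identical formulas under both semantics. Therefore they hold for the CO reading if and only if they hold for the ASO reading, which by Theorem~\ref{thm:sequivrankedprefeq} gives $P\equiv^{s,[i,j]} Q$ as CO problems iff $P\equiv^{s,[i,j]} Q$ as ASO problems. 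Specializing $[i,j]$ to $[i,\infty]$ and to $[i,i]$ yields the two claimed equivalences. There is no genuine obstacle here; the only point requiring care is to make explicit that the preference-theoretic data ($>^P$, $\approx^P$, $\dff^P$, and hence $\pref$ of the truncations $P_{<i}$) are functions of the selector and of the outcome set alone, so that once the hypothesis forces the outcome sets to agree across the two semantics, the whole characterization collapses to a single set of conditions.
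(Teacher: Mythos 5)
Your proof is correct. One point worth noting: the paper itself omits the proof of this lemma, remarking only that it ``relies on the fact that changing selectors only does not affect the class of outcomes,'' which suggests a direct argument from the definition of strong sel-equivalence rather than the route you take through Theorem~\ref{thm:sequivrankedprefeq}. The direct argument runs as follows: for any context $R=(\emptyset,S)$ in $\L^{s,\geq i}$ (or $\L^{s,=i}$), the problem $P\cup R$ has generator $P^g$, so its outcome set is $\Mod{P^g}$ under the CO reading and $\AS{P^g}$ under the ASO reading, and these coincide by hypothesis; since $>^{P\cup R}$ is determined by the selectors alone, $\pref(P\cup R)$ is literally the same set under both readings, and likewise for $Q\cup R$, so the defining condition $\pref(P\cup R)=\pref(Q\cup R)$ holds under one semantics iff it holds under the other. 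Your version reaches the same conclusion by observing that the three conditions of Theorem~\ref{thm:sequivrankedprefeq} depend on the generator semantics only through $\mu(P)$ and $\mu(Q)$, which is a valid application of that theorem (it is stated for optimization problems under either semantics) and introduces no circularity, since the lemma is not used in that theorem's proof. The trade-off is that your argument imports a substantial result where a two-line definitional argument suffices; what it buys in return is that the semantics-invariance of each individual characterizing condition is made explicit, which is mildly informative but not needed for the claim. Either way the proof stands.
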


The  final results in this section will be useful for
the complexity results.

\begin{lemma}\label{lem:compl-qbf-min-mod}
Deciding whether open QBFs
$\forall Y \phi(X,Y)$ 
$\forall Y \psi(X,Y)$ have the same minimal models 
is $\PiP{3}$-hard.
\end{lemma}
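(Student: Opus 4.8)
The plan is to reduce from the validity problem for quantified Boolean formulas of the form $\forall A\,\exists B\,\forall C\,\chi(A,B,C)$, which is $\PiP{3}$-complete. The key observation is that this problem arises from the $\PiP{2}$-complete validity problem for $\forall A\,\exists B\,\gamma(A,B)$ by allowing $\gamma$ itself to be a co-NP predicate, namely $\gamma(A,B)=\chi'(A,B):=\forall C\,\chi(A,B,C)$. Since the paper cites (Theorem~6.15 of \cite{efw04}) that deciding whether two propositional theories in NNF over a common alphabet have the same minimal models is $\PiP{2}$-hard, and since that hardness is witnessed by a reduction from exactly the $\PiP{2}$-QBF $\forall A\,\exists B\,\gamma$, I would lift that reduction one level in the polynomial hierarchy by realizing the inner $\forall C$ through the universal prefix of an open QBF.

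Concretely, I would first record the precise form of the \cite{efw04} reduction: from a $\PiP{2}$-QBF $\forall A\,\exists B\,\gamma(A,B)$ it produces propositional theories $S[\gamma]$ and $T[\gamma]$, in NNF over a common alphabet $X\supseteq A\cup B$, such that $\forall A\,\exists B\,\gamma$ is valid if and only if $S[\gamma]$ and $T[\gamma]$ have the same minimal models over $X$. Two features matter here: the reduction treats $\gamma$ as a black box, so its correctness is a purely semantic statement about the truth value of $\gamma$ on each assignment; and each occurrence of $\gamma$ in $S$ and $T$ is positive, sitting under no negation once the formulas are in NNF. I would then substitute $\gamma(A,B):=\forall C\,\chi(A,B,C)$ and, using fresh copies of the $C$-variables for the finitely many occurrences, hoist all the universal quantifiers to the front. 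Because every occurrence is positive, the hoisting is an equivalence (using $\alpha\wedge\forall C\,\beta\equiv\forall C\,(\alpha\wedge\beta)$ and $\alpha\vee\forall C\,\beta\equiv\forall C\,(\alpha\vee\beta)$ whenever $C$ does not occur in $\alpha$), and the result is a pair of open QBFs $\forall Y\,\phi(X,Y)$ and $\forall Y\,\psi(X,Y)$ with a single universal block $Y$ (the union of the renamed $C$-copies) and free variables ranging over $X$.

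It then remains to argue that the two decision problems coincide. The free alphabet $X$ is the same for both open QBFs, and the bound variables $Y$ do not enter the notion of minimality, so the minimal models of $\forall Y\,\phi(X,Y)$ (respectively $\forall Y\,\psi(X,Y)$) over $X$ coincide with the minimal models of the object $S[\gamma]$ (respectively $T[\gamma]$) when $\gamma$ is read as the co-NP predicate $\forall C\,\chi$. Since the \cite{efw04} correctness argument is semantic in $\gamma$, it applies verbatim under this reading and yields: $\forall A\,\exists B\,\forall C\,\chi$ is valid if and only if $\forall Y\,\phi(X,Y)$ and $\forall Y\,\psi(X,Y)$ have the same minimal models. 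As the renaming and hoisting only duplicate $C$ a number of times bounded by the (constant) number of occurrences of $\gamma$, the construction is polynomial, completing the reduction.

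I expect the main obstacle to be the two structural requirements on the \cite{efw04} reduction, namely the positivity of the occurrences of $\gamma$ and the black-box (purely semantic) nature of its correctness proof, since only these guarantee that the inner $\forall C$ lands at the intended alternation depth rather than flipping to an existential. If the published construction does not already ensure positive occurrences, I would either preprocess it into an equivalent reduction in which $\gamma$ appears only positively, or re-derive the gadget from scratch, building the dual-rail encoding of the free variables together with a distinguished ``cap'' atom in the style of the construction for $P_\phi$ in the proof of Theorem~\ref{thm:co-sel:ranked}, so that positivity holds by design and the inner $\forall C$ can be absorbed into the open-QBF prefix without alteration.
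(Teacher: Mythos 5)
Your high-level plan gets two things right that the paper's proof also relies on: the source problem is $\forall A\exists B\forall C\,\chi$ validity, and the crux is that the matrix of the target open QBFs must depend on the inner $\forall C\,\chi$ only \emph{positively}, so that the inner universal block can be prenexed into the single $\forall Y$ prefix (your hoisting equivalences, with renaming apart for disjunctive contexts, are fine). But the proof is not actually there. Your primary route rests on two unverified structural assumptions about the Theorem~6.15 reduction in \cite{efw04}: that $\gamma$ occurs in $S[\gamma]$ and $T[\gamma]$ as an \emph{intact} subformula in positive position, and that correctness is purely semantic in $\gamma$. That reduction establishes $\PiP{2}$-hardness of equivalence of \emph{positive disjunctive programs}; constructions of this kind require the QBF matrix in a normal form and encode it rule by rule through a dual-rail translation of its clauses or terms. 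So $\gamma$ is decomposed syntactically rather than used as a black box, and substituting the co-NP predicate $\forall C\,\chi$ for it is not even syntactically meaningful without first putting $\forall C\,\chi$ into that normal form, which you cannot do. Your declared fallback---``re-derive the gadget from scratch''---is precisely the mathematical content of the lemma, and it is the part you have not supplied.

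For comparison, the paper builds the gadget directly. Given $\forall Z\exists X\forall Y\chi$, it sets $\phi=\bigwedge_{z\in Z}(z\leftrightarrow\neg z')\land\bigl((\bigwedge_{x\in X}x\land u)\lor\chi\bigr)\land(v\lor\neg v)$ and $\psi$ identical except with $u$ and $v$ swapped, where $u,v$ are fresh and free. The dual-rail conjunct fixes the $Z$-part of any model; the disjunct $(\bigwedge_{x\in X}x\land u)\lor\chi$ is monotone in $\chi$, so $\forall Y$ prenexes in front of it; and the ``escape'' disjunct $\bigwedge_{x\in X}x\land u$ becomes the \emph{only} way to satisfy $\forall Y\phi$ above a given $Z$-assignment exactly when $\exists X\forall Y\chi$ fails there, forcing a minimal model containing $u$ for $\phi$ but the corresponding one containing $v$ for $\psi$. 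This $u$/$v$ asymmetry between the two theories is the idea that converts ``some $Z$-assignment falsifies $\exists X\forall Y\chi$'' into ``the minimal models differ,'' and it is the ingredient missing from your proposal.
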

\begin{proof}
We show the result by a reduction from the $\PiP{3}$-hard problem of 
deciding satisfiability for QBFs of the form 
$\forall Z \exists X \forall Y \chi$.  
Let $\Phi$ be a QBF of such a form and
consider the following formulas, where $Z'=\{z' \mid z \in Z\}$, 
$u$ and $v$ are fresh atoms:
\begin{eqnarray*}
\phi& =& 
\bigwedge_{z\in Z} (z \leftrightarrow \neg z') 
\land 
\big( 
(\bigwedge_{x\in X} x \land u) 
\lor \chi\big)\land 
(v\lor \neg v)\\
\psi& =& \bigwedge_{z\in Z} (z \leftrightarrow \neg z') \land \big( (\bigwedge_{x\in X} x \land v) \lor \chi\big)\land (u\lor \neg u)
\end{eqnarray*}
Hence, the only difference between $\phi$ compared to $\psi$ is that 
we use $u$ and $v$ interchangably. Also note that the only point of including the conjuncts $v \lor \neg v$ and $u \lor \neg u$ is to have occurrences of $u$ and $v$ in both $\phi$ and $\psi$. We show that 
$\forall Y \phi(U,Y)$ and
$\forall Y \psi(U,Y)$ have the same minimal models 
(with open variables $U=Z\cup Z'\cup X \cup \{u,v\}$)
if and only if $\Phi$ is true.

Only-if direction: Assume $\Phi$ is false. Then, 
there exists an assignment $I\subseteq Z$, such 
that for all assignments to $X$, $\forall Y \chi$ is false. 
We show that 
$M_u=I\cup (Z\setminus I)'\cup X \cup \{u\}$ is a minimal model of 
$\forall Y\phi$.  
Indeed, it is a model of 
$\forall Y\phi$ and by the conjunction 
$\bigwedge_{z\in Z} (z \leftrightarrow \neg z')$ the only candidates for 
models $N\subset M_u$ of $\forall Y\phi$ are of the form
$I\cup (Z\setminus I)'\cup J$ with $J\subset X\cup \{u\}$, 
but then 
$\forall Y\chi$ 
(note that we can safely shift in $\phi$ the quantifier $\forall Y$ in front
of $\chi$ since $\chi$ hosts the only occurrences of atoms from $Y$ in $\phi$)
would be true under
$I \cup (J\setminus \{u\})$, a contradiction to our assumption. 
By essentially the same arguments, it can be shown
that
$M_v=I\cup (Z\setminus I)'\cup X \cup \{v\}$ is a minimal model of 
$\forall Y\psi$. Since $M_u\neq M_v$ 
we have shown that 
$\forall Y\phi$ 
and 
$\forall Y\psi$  
have different minimal models.

If-direction: Suppose
$\forall Y\phi$ 
and 
$\forall Y\psi$   have different minimal models.
The only relevant difference 
between $\phi$ and $\psi$ 
is the conjunction
$(\bigwedge_{x\in X} x \land u)$ in $\phi$, 
resp.\ 
$(\bigwedge_{x\in X} x \land v)$ in $\psi$. 
We can conclude that the different minimal models
are of the form 
$M_u=I\cup (Z\setminus I)'\cup X \cup \{u\}$ being a minimal model of  
$\forall Y\phi$
and
$M_v=I\cup (Z\setminus I)'\cup X \cup \{v\}$ being a minimal model of 
 $\forall Y\psi$.
Let us consider $M_u$. Since it is a minimal model, there
is no proper subset of $M_u$ which is a model of $\forall Y\phi$, 
in particular there is no $N\subset X \cup \{u\}$ such that 
$I\cup (Z\setminus I)' \cup N$ 
is a model of $\forall Y\phi$.
From this we observe that, for no such $N$, 
$I\cup (Z\setminus I)' \cup N$ is a model 
of 
$(\bigwedge_{x\in X} x \land u) 
\lor \chi\big)$, which implies 
that for no such $N$,
$I\cup (Z\setminus I)' \cup (N\setminus\{u\})$ is a model
of $\forall Y \chi$. But then $\Phi$ is false, since 
we have found an interpretation $I\subseteq Z$ such that 
for no $J\subseteq X$, $\forall Y \chi$ is true under $I\cup J$. 
\end{proof}

\begin{lemma}\label{lem:ptime-satdeg}
Given a ranked preference rule $r$, and an interpretation $I$, calculating $\degree{I}{r}$ can be done in polynomial time.
\end{lemma}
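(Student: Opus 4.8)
The plan is to reduce the computation of $\degree{I}{r}$ to a bounded number of classical model-checking tasks, each solvable in polynomial time. Recall that for a ranked preference rule $r$ of the form $\phi_1 > \cdots > \phi_k \LPifrank{j} \psi$, the definition sets $\degree{I}{r} = \min\{i\mid I\models\headi{i}{r}\}$ whenever $I\models\body{r}$ and $I\models\bigvee\head{r}$, and $\degree{I}{r}=1$ otherwise. The crucial fact I would invoke is that deciding $I\models\phi$ for a fixed interpretation $I$ and a propositional formula $\phi$ can be carried out in time linear in the size of $\phi$: one evaluates the truth value of each subformula in a bottom-up traversal of its parse tree, reading off the truth value of each atom $a$ from whether $a\in I$.

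First I would compute the truth value of $\body{r}=\psi$ under $I$. If $I\not\models\psi$, the rule is irrelevant to $I$ and I return $\degree{I}{r}=1$. Otherwise, I would scan the head formulas $\headi{1}{r},\ldots,\headi{k}{r}$ in increasing order of their index, testing $I\models\headi{i}{r}$ for each. Upon encountering the first index $i$ with $I\models\headi{i}{r}$, I return that $i$; this is exactly $\min\{i\mid I\models\headi{i}{r}\}$, and it simultaneously witnesses $I\models\bigvee\head{r}$. If no head formula is satisfied, then $I\not\models\bigvee\head{r}$, the rule is again irrelevant to $I$, and I return $1$. This procedure faithfully realizes the definition, since the two irrelevance conditions ($I\not\models\body{r}$ and $I\not\models\bigvee\head{r}$) are precisely the two branches yielding value $1$.

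For the running-time bound, note that the procedure performs at most $k+1$ model checks, one for the body and at most $k$ for the head formulas, where $k$ is bounded by the size of $r$. Each check runs in time linear in the size of the corresponding formula, hence polynomial in the size of $r$ overall. The rank $\rank{r}=j$ plays no role in the value of $\degree{I}{r}$, so it never enters the computation. The only point worth a remark is that $\U$ may be infinite; however, only the finitely many atoms actually occurring in $r$ affect the truth values of $\psi$ and of the $\phi_i$, so it suffices to consult the restriction of $I$ to those atoms, which is part of the input. I do not expect a genuine obstacle here: the statement follows directly from the polynomiality of classical model checking together with the observation that $\degree{I}{r}$ is determined by a fixed combination of at most $k+1$ such checks.
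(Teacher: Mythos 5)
Your proof is correct and follows essentially the same route as the paper's: a sequential scan of the head formulas $\headi{1}{r},\ldots,\headi{k}{r}$, returning the first satisfied index, with each test being a linear-time classical model check. You are in fact slightly more careful than the paper's own sketch, which omits the explicit test of $I\models\body{r}$ (needed to return $1$ when the rule is irrelevant because its body fails); this is a harmless omission there but your version handles it properly.
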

\begin{proof}
Initialize a variable $x$ with $1$. Starting from $i=1$ check whether
$I \models \headi{i}{r}$, and if so, set $x$ to $i$ and halt;
otherwise increment $i$ and continue checking; if no more heads exist
in $r$, halt. Each of the checks is a model checking task for a
propositional formula and hence in polynomial time. Upon halting, $x$
is equal to $\degree{I}{r}$.
\end{proof}

\begin{lemma}\label{lem:ptime-diff}
Given an optimization problem $P$ and two interpretations $I,J$,
calculating $\dff^P(I,J)$ can be done in polynomial time.
\end{lemma}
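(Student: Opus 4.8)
The plan is to reduce the computation of $\dff^P(I,J)$ to finitely many satisfaction-degree evaluations, each polynomial by Lemma~\ref{lem:ptime-satdeg}. First I would recall that, by definition, $I \approx^{P_{<k}} J$ holds precisely when $\degree{I}{r} = \degree{J}{r}$ for every rule $r \in P^s$ with $\rank{r} < k$. Hence the value of $\dff^P(I,J)$ depends only on the ranks of those rules on which $I$ and $J$ receive different satisfaction degrees; the generator plays no role.

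The key step is to establish the closed form
\[
\dff^P(I,J) = \min\{\rank{r} \mid r \in P^s,\ \degree{I}{r} \neq \degree{J}{r}\},
\]
with the convention that the minimum over the empty set is $\infty$. To prove this, set $D = \{r \in P^s \mid \degree{I}{r} \neq \degree{J}{r}\}$. If $D = \emptyset$, then $I \approx^{P_{<k}} J$ for every $k$, so $\dff^P(I,J) = \infty$ by definition. Otherwise, let $m = \min\{\rank{r} \mid r \in D\}$. Since $P_{<k}$ comprises exactly the rules of rank $< k$, the relation $I \approx^{P_{<k}} J$ holds if and only if no rule of $D$ has rank below $k$, i.e.\ if and only if $k \leq m$. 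The largest such $k$ is $m$, so $\dff^P(I,J) = m$, as claimed.

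Given this characterization, the algorithm is immediate: iterate over the (polynomially many) rules $r \in P^s$, compute $\degree{I}{r}$ and $\degree{J}{r}$ in polynomial time via Lemma~\ref{lem:ptime-satdeg}, record $\rank{r}$ whenever the two degrees differ, and finally return the smallest recorded rank, or $\infty$ if none was recorded. Each evaluation runs in polynomial time and there are only polynomially many of them, so the whole procedure is polynomial.

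I expect no genuine obstacle here; the only point requiring care is the interplay between the off-by-one in the definition $P_{<k} = P_{[1,k-1]}$ and the fact that ranks are positive integers, which is exactly what makes the answer the minimum differing rank itself rather than that rank shifted by one. Correctly handling this indexing, together with the $\infty$ case when $D$ is empty, is the whole content of the verification.
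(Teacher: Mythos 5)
Your proof is correct and follows essentially the same route as the paper: both compute the minimum rank over the rules on which $I$ and $J$ receive different satisfaction degrees (with $\infty$ if there are none), using Lemma~\ref{lem:ptime-satdeg} for each degree evaluation. Your explicit justification of the closed form $\dff^P(I,J)=\min\{\rank{r}\mid \degree{I}{r}\neq\degree{J}{r}\}$ is a welcome addition the paper leaves implicit, but it does not change the argument.
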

\begin{proof}
Initialize a variable $x$ with $\infty$, scan the rules in $P^s$ and
for each ranked preference rule $r\in P^s$, determine whether $v_I(r)
\neq v_J(r)$ (in polynomial time due to Lemma~\ref{lem:ptime-satdeg}).
If so, set $x$ to $\rank{r}$ if $\rank{r} < x$. After having processed
all rules, $x$ is equal to $\dff^P(I,J)$.
\end{proof}

\begin{lemma}\label{lem:ptime-pref-check}
Given an optimization problem $P$, and two interpretations $I,J$, 
deciding whether $I>^P J$ can be done in polynomial time.
\end{lemma}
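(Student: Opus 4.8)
The plan is to exploit the definition of the strict preference relation $>^P$ together with the already-established polynomial-time computability of satisfaction degrees (Lemma~\ref{lem:ptime-satdeg}) and of $\dff^P(I,J)$ (Lemma~\ref{lem:ptime-diff}). Recall that, for a ranked selector, $I>^P J$ holds exactly when there is a rule $r'\in P^s$ for which the three conditions in the definition of $\geq^S$ hold: $\degree{I}{r'}<\degree{J}{r'}$; every rule $r$ of the same rank as $r'$ satisfies $\degree{I}{r}\le\degree{J}{r}$; and every rule $r$ of smaller rank satisfies $\degree{I}{r}=\degree{J}{r}$ (the simple case is subsumed by taking all ranks equal to $1$). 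The central observation I would establish first is that the rank of any witnessing rule $r'$ is forced to equal $m:=\dff^P(I,J)$. Indeed, the third condition requires all rules of rank strictly below $\rank{r'}$ to agree on $I$ and $J$, so $\rank{r'}\le m$ (otherwise some rule of rank $m<\rank{r'}$ would differ, violating condition~(3)); while condition~(1) requires $r'$ itself to have differing degrees, forcing $\rank{r'}\ge m$ since all rules of rank below $m$ agree. Hence $\rank{r'}=m$, and in particular $m$ must be finite.

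With this rank-pinning in hand, the algorithm is direct. First I would compute $m=\dff^P(I,J)$ in polynomial time via Lemma~\ref{lem:ptime-diff}. If $m=\infty$, then $I\approx^P J$, so $I\not>^P J$, and I output ``no''. Otherwise I scan all rules $r\in P^s$ of rank exactly $m$, computing $\degree{I}{r}$ and $\degree{J}{r}$ by Lemma~\ref{lem:ptime-satdeg}, and I check whether $\degree{I}{r}\le\degree{J}{r}$ holds for every such rule, outputting ``yes'' iff all these checks succeed. For correctness: the very definition of $m$ guarantees that all rules of rank below $m$ agree on $I,J$ (so condition~(3) holds automatically for any rule of rank $m$) and that at least one rule of rank $m$ has $\degree{I}{r}\ne\degree{J}{r}$. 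Thus if the scan confirms $\le$ everywhere at rank $m$, that differing rule necessarily satisfies $\degree{I}{r}<\degree{J}{r}$, supplying a witness $r'$ meeting conditions~(1)--(3), so $I>^P J$. Conversely, if some rule of rank $m$ has $\degree{I}{r}>\degree{J}{r}$, then condition~(2) fails at the only admissible rank, and by the rank-pinning argument no witness of any other rank can exist, so $I\not>^P J$.

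All three phases---computing $m$, iterating over the at most $\card{P^s}$ rules of rank $m$, and evaluating the two satisfaction degrees for each such rule---run in polynomial time, so the whole procedure is polynomial. I expect no genuine obstacle here; the only step requiring care is the rank-pinning argument, which collapses the existential search over all potential witnesses $r'$ to the single rank $m=\dff^P(I,J)$, after which correctness reduces to unwinding the definition of $>^P$.
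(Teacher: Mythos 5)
Your proof is correct and follows essentially the same route as the paper's: a direct polynomial-time scan of the rules using Lemma~\ref{lem:ptime-satdeg}, locating the first rank at which $I$ and $J$ disagree and checking the $\leq$ condition for all rules of that rank. If anything, your rank-pinning via $m=\dff^P(I,J)$ is slightly tighter than the paper's rank-by-rank loop, which as literally stated neither accepts nor rejects when the first disagreeing rank contains only rules with $v_I(r)>v_J(r)$ and none with $v_I(r)<v_J(r)$; your formulation disposes of that case cleanly by showing no witness rule of any rank can then exist.
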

\begin{proof}
First, sort the rules in $P^s$ by their ranks.  Starting from the
lowest rank upwards, do the following for each rank $i$: Check for all
rules of rank $i$ whether $v_I(r) < v_J(r)$ or $v_I(r) \leq
v_J(r)$. If $v_I(r) < v_J(r)$ holds at least for one rule and $v_I(r)
\leq v_J(r)$ for all other rules of rank $i$, accept.  If there are
rules $r$ and $r'$ of the rank $i$ such that $v_I(r) < v_J(r)$ and
$v_I(r') > v_J(r')$, reject. If all ranks have been processed,
reject. By Lemma~\ref{lem:ptime-satdeg}, all steps are doable in
polynomial time.
\end{proof}

\begin{lemma}\label{lem:conp-prefcheck-member}
Given a classical optimization problem $P$ and an interpretation $I$,
deciding whether $I\in\pref(P)$ is in $\CONP$.
\end{lemma}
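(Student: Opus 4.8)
The plan is to establish membership in $\CONP$ by showing that the complementary problem---deciding $I\notin\pref(P)$---lies in $\NP$. Unfolding the definition for CO problems, $I\in\pref(P)$ holds exactly when $I\in\acc(P)$ (that is, $I\models P^g$) and there is no $J\in\acc(P)$ with $J>^P I$. Negating, $I\notin\pref(P)$ holds if and only if $I\not\models P^g$, or there is an interpretation $J$ with $J\models P^g$ and $J>^P I$. This disjunctive characterization is exactly the shape one wants for an $\NP$ guess-and-check.

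Before designing that procedure I would dispose of the one genuinely delicate point: bounding the size of a witness $J$ in the presence of the infinite universe $\U$. The observation is that both $\models P^g$ and the relation $>^P$ depend only on the truth values assigned to atoms occurring in $P$ (i.e., in $P^g$ and $P^s$). Hence, given any interpretation $J'$ witnessing $I\notin\pref(P)$, the interpretation $J$ that agrees with $J'$ on $\atoms{P}$ and with $I$ on all remaining atoms is again a witness: it satisfies $P^g$ iff $J'$ does, and $J>^P I$ iff $J'>^P I$. Consequently it suffices to guess the restriction of $J$ to the finite set $\atoms{P}$, which has size polynomial in the size of $P$.

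The $\NP$ procedure for the complement then runs as follows. First it checks whether $I\models P^g$; since classical model checking is polynomial, if this check fails it accepts immediately, as $I\notin\acc(P)$ implies $I\notin\pref(P)$. Otherwise it nondeterministically guesses an interpretation $J$ over $\atoms{P}$, verifies $J\models P^g$ in polynomial time, and verifies $J>^P I$ in polynomial time by Lemma~\ref{lem:ptime-pref-check}; it accepts iff both checks succeed. Correctness is immediate from the characterization of $I\notin\pref(P)$ above, and every step runs in (nondeterministic) polynomial time, placing the complement in $\NP$ and hence $I\in\pref(P)$ in $\CONP$.

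Thus the proof reduces entirely to two polynomial primitives---classical model checking and the preference test of Lemma~\ref{lem:ptime-pref-check}---together with the restriction-to-$\atoms{P}$ argument that keeps the guessed witness polynomially bounded. The latter is the only step requiring care; once it is in place, the membership claim follows mechanically.
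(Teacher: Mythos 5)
Your proposal is correct and follows essentially the same route as the paper: place the complement in $\NP$ by guessing a witness interpretation $J$ and verifying $J\models P^g$ and $J>^P I$ in polynomial time via classical model checking and Lemma~\ref{lem:ptime-pref-check} (the paper folds the case $I\not\models P^g$ into the witness $J=I$). Your additional restriction-to-$\atoms{P}$ argument to bound the witness size is a detail the paper leaves implicit, but it does not change the substance of the argument.
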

\begin{proof}
We show that a witness $J$ for the complementary problem (deciding
whether $I\notin\pref(P)$) can be verified in polynomial time.  If
$J=I$, verify in polynomial time that $I$ does not satisfy the
propositional theory $P^g$, which is well-known to be feasible in
polynomial time. Otherwise, verify in polynomial time that $J$
satisfies $P^g$ and that $J >^P I$ (both in polynomial time, the
latter by Lemma~\ref{lem:ptime-pref-check}).
\end{proof}

\begin{lemma}\label{lem:pip2-prefcheck-member}
Given an answer set optimization problem $P$ and an interpretation $I$,
deciding whether $I\in\pref(P)$ is in $\SigmaP{2}$.
\end{lemma}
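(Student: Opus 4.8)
The plan is to place the problem at the second level of the polynomial hierarchy by a guess-and-check procedure, after first separating the two ingredients of optimality. By definition $I\in\pref(P)$ holds exactly when $I\in\AS{P^g}$ and there is no $J\in\AS{P^g}$ with $J>^P I$. First I would pin down the cost of each ingredient. Testing $I\in\AS{P^g}$ means checking $\langle I,I\rangle\models_{HT}P^g$, which coincides with classical model checking and is polynomial, together with the minimality requirement that $\langle J,I\rangle\not\models_{HT}P^g$ for every proper subset $J\subsetneq I$; since equilibrium model checking is $\CONP$-complete (Theorem~8 of \cite{PearceTW09}), this minimality part is a $\CONP$ test.

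Next I would handle the optimality condition, which is where the second level enters. The natural polynomial-size object to guess is a refuting competitor, i.e.\ a set $J$ with $J\in\AS{P^g}$ and $J>^P I$. Given a guessed $J$, an $\NP$ oracle confirms $J\in\AS{P^g}$ (the $\CONP$ answer-set test above, reached by negating the $\NP$ query ``$J$ is not an answer set''), while $J>^P I$ is decided in polynomial time by Lemma~\ref{lem:ptime-pref-check}. Thus ``some answer set strictly dominates $I$'' is an $\exists J$ over a $\CONP$ body, hence a $\SigmaP{2}$ condition; adding the $\NP$ test ``$I$ is not an answer set'' keeps the whole refutation of $I\in\pref(P)$ inside $\SigmaP{2}$. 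This guess-and-check of a dominating witness is the $\SigmaP{2}$ computation underlying the asserted bound for deciding membership in $\pref(P)$.

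The hard part, and the step I would scrutinize most, is the quantifier alternation itself. Optimality is intrinsically the universal statement ``no better answer set exists,'' and because its body conceals a $\CONP$ answer-set test, the only succinct certificate available is a dominating $J$ witnessing non-optimality rather than a certificate of optimality. The delicate point is the nested oracle call: the guessed competitor $J$ must itself be certified to be an answer set, a $\CONP$ subtask, so a plain nondeterministic guess does not suffice and one genuinely needs the $\NP$ oracle of a $\SigmaP{2}$ machine. Verifying that this nested answer-set check composes with the polynomial dominance test of Lemma~\ref{lem:ptime-pref-check} without pushing the computation above the second level is what ultimately secures the $\SigmaP{2}$ bound.
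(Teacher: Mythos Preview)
Your approach is correct and essentially identical to the paper's: both argue that the complementary problem $I\notin\pref(P)$ lies in $\SigmaP{2}$ by guessing a witness $J$ and verifying it in polynomial time with an $\NP$ oracle for the $\CONP$ answer-set test, invoking Lemma~\ref{lem:ptime-pref-check} for the dominance comparison. The paper's only presentational twist is to fold both failure modes into a single guessed witness $J$, where the case $J=I$ certifies that $I$ is not an answer set and the case $J\neq I$ certifies a strictly dominating answer set.
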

\begin{proof}
We show that a witness $J$ for the complementary problem (deciding
whether $I\notin\pref(P)$) can be verified in polynomial time using an
$\NP$ oracle. If $J=I$, verify that $I$ does not satisfy $P^g$ using
the $\NP$ oracle. This is possible because answer-set checking is
$\CONP$-complete (Theorem~8 of \cite{PearceTW09}). Otherwise, verify
using the $\NP$ oracle that $J$ satisfies the propositional theory
$P^g$ and that $J >^P I$ (in polynomial time by
Lemma~\ref{lem:ptime-pref-check}).
\end{proof}

\section*{Appendix B: Proofs}

\thrm{thm:sequivrankedprefeq}{}
{For every ranked optimization problems $P$ and $Q$, and every rank intervals
$[i,j]$,
$P \equiv^{s,[i,j]} Q$ if and only if the following
conditions hold:
\begin{enumerate}
\item $\pref(P_{<i})=\pref(Q_{<i})$
\item ${>^P_{\pref(P_{<i})}}={>^Q_{\pref(Q_{<i})}}$
\item For every $I,J\in \pref(P_{<i})$ such that $i<\dff^P(I,J)$ or 
$i<\dff^Q(I,J)$, $\dff^P(I,J)=\dff^Q(I,J)$ or both $\dff^P(I,J)>j$ and
$\dff^Q(I,J)>j$.
\end{enumerate}
}

\smallskip
\noindent
\begin{proof}
($\Leftarrow$)
Let $R\in\L^{s,[i,j]}$ and let $I\in \pref(P\cup R)$. By 
Lemma~\ref{lemma:lowerrank}, $I\in \pref((P\cup R)_{<i})$. Since 
$R \in\L^{s,[i,j]}$, $R_{<i}=(\emptyset,\emptyset)$. Thus, $I\in
\pref(P_{<i})$. By assumption, it follows that $I\in\pref(Q_{<i})$.
In particular, $I\in\acc(Q_{<i})$ and, as $\acc(Q_{<i})=\acc(Q)$, 
$I\in\acc(Q)$. Since $R^g=\emptyset$, we have $I\in\acc(Q\cup R)$. 
To show that $I\in\pref(Q\cup R)$ we have to show that there is no 
$J\in \acc(Q\cup R)$ such that $J >^{Q\cup R} I$. Let us assume to 
the contrary that such a $J$ exists. By Lemma ~\ref{lemma:compose}, 
there are three possibilities.

First, we assume that $\dff^Q(I,J)<\dff^R(I,J)$ and $J>^Q I$. The 
latter property implies $\dff^Q(I,J)\geq i$ (otherwise, we would have 
$J>^{Q_{<i}} I$, contrary to $I\in\pref(Q_{<i})$). In particular,
we have $I\approx^{Q_{<i}} J$ and, since $I\in\pref(Q_{<i})$, it 
follows that $J\in\pref(Q_{<i})$. By (1), $J\in \pref(P_{<i})$. Thus, by
(2), $J>^P I$.
If $\dff^Q(I,J) \geq j$ then $\dff^R(I,J)>j$ and, as $R \in\L^{s,[i,j]}$,
$\dff^R(I,J)=\infty$. Since $J>^P I$, $J>^{P\cup R}I$. Otherwise, 
$i\leq \dff^Q(I,J) <j$. If $i < \dff^Q(I,J)$ then, by (3), $\dff^P(I,J)
=\dff^Q(I,J)$. If $i=\dff^Q(I,J)$, then again by (3), $\dff^P(I,J)\leq i$.
In either case, $\dff^P(I,J)<\dff^R(I,J)$. Since $J>^P I$, $J>^{P\cup R}I$.

Next, let us assume that $\dff^Q(I,J)>\dff^R(I,J)$ and $J>^R I$. Since 
$R\in\L^{s,[i,j]}$, it follows that $\dff^R(I,J)\geq i$ and so,
$\dff^Q(I,J)>i$. We recall that $I\in \pref(Q_{<i})$. Thus, $J\in
\pref(Q_{<i})$ and, consequently, $J\in \pref(P_{<i})$. If $\dff^Q(I,J)
\leq j$ then, by (3), $\dff^P(I,J)=\dff^Q(I,J)$ and so, 
$\dff^P(I,J)>\dff^R(I,J)$. If $j<\dff^Q(I,J)$ then, also by (3), 
$j<\dff^P(I,J)$. 
Since $\dff^Q(I,J)>\dff^R(I,J)$, $\dff^R(I,J)<\infty$ and, consequently, 
$\dff^R(I,J)\leq j$. Thus, $\dff^P(I,J)>\dff^R(I,J)$ in this case, too.
Since $J>^R I$, $J>^{P\cup R} I$ follows. 

Finally, let us assume that $\dff^Q(I,J)=\dff^R(I,J)$, $J>^Q I$
and $J>^R I$. Since $R\in\L^{s,[i,j]}$, $\dff^R(I,J)\geq i$. Thus,
$\dff^Q(I,J)\geq i$ and, since $I\in\pref(Q_{<i})$, $J\in\pref(Q_{<i})$.
By (1) we have $J\in\pref(P_{<i})$ and, by (2), $J>^P I$. Consequently, 
$J>^{P\cup R} I$.

In all cases we obtained $J>^{P\cup R} I$, contrary to $I\in\pref(P\cup R)$,
a contradiction.

\smallskip
\noindent
($\Rightarrow$)
Let us assume that $\pref(P_{<i})\neq \pref(Q_{<i})$.
Without loss of generality, we can assume that there is
$I\in \pref(P_{<i}) \setminus \pref(Q_{<i})$ and define
$R=(\emptyset,R_i[I])\in\L^{s,=i}$, where $R_i[I]$ is as in
Lemma~\ref{lemma:selnew}. By that lemma, $I\in \pref(P\cup R)$.
On the other hand, since $I\notin \pref(Q_{<i})$ and $R\in\L^{s,=i}$,
$I\notin \pref((Q\cup R)_{<i})$. By Lemma~\ref{lemma:lowerrank}, $I\notin 
\pref(Q\cup R)$. Thus, $P \not\equiv^{s,=i} Q$, contrary to the
assumption.

It follows that $\pref(P_{<i})= \pref(Q_{<i})$, that is, that the
condition (1) holds. To prove the condition (2), let us consider
interpretations $I,J\in\pref(P_{<i})$ such that $I>^P J$. Let
$R'_i[I,J]$ be the selector defined in Lemma~\ref{lemma:selnew2}.
Since $I>^P J$, Lemma~\ref{lemma:selnew2}(2) implies that $J\notin
\pref(P\cup R'_i[I,J])$. Consequently, $J\notin \pref(Q\cup R'_i[I,J])$.
By Lemma~\ref{lemma:selnew2}(4), it follows that $I>^{Q} J$.
By symmetry, $I>^Q J$ implies $I>^P J$ and so, the condition (2) holds.

To prove the condition (3), let us assume that there are interpretations
$I$ and $J$ that satisfy the assumptions but violate the corresponding 
conclusion. In what follows, we write $p$ for $\dff^P(I,J)$ and $q$ for 
$\dff^Q(I,J)$. Thus, we have $p>i$ or $q>i$, $p\not=q$, and $p\leq j$
or $q\leq j$. Without loss of generality, we can assume that $p < q$. 
It follows that $p$ is finite and, consequently, that $\dff^P(I,J)<
\infty$ and $I\not= J$. Moreover, $i<q$ and $p\leq j$.

Let us assume first that $p < i$. We take the problem 
$R=(\emptyset,R_i[J])$, where $R_i[J]$ is as specified in
Lemma~\ref{lemma:selnew} and define $P'=P\cup R$ and $Q'=Q\cup R$.
Since $I,J\in\pref(P_{<i})$ we also have $I,J\in\pref(Q_{<i})$.
By our assumptions, $q>i$. Thus, $J\approx^{Q_{\leq i}} I$ and,
in particular, $J\geq^{Q_{\leq i}} I$.  We recall that $I\not=J$.
Consequently, by the assertion (2) of Lemma \ref{lemma:selnew}
we have that $J >^{Q'} I$.
Since all rules in $R$ have ranks $i$, we have $I,J\in\pref(P'_{<i})$ 
and $\dff^{P'}(I,J)=\dff^P(I,J)<i$. It follows that $J\not>^{P'} I$ 
(otherwise, by $\dff^{P'}(I,J)<i$ we would have $J>^{P'_{<i}} I$).
Let us define $R'=(\emptyset,R'_i[I,J])$, where $R'_i[I,J]$ is as
specified in Lemma~\ref{lemma:selnew2}. Since $J\not>^{P'} I$, by the
assertion (4) of that lemma, $I\in\pref(P'\cup R')$.
We have $P'\cup R'=P\cup(R\cup R')$. Thus, $I\in\pref(P\cup (R\cup R'))$
and, by $P\equiv^{s,[i,j]} Q$, $I\in\pref(Q\cup (R\cup R'))=\pref(Q'\cup R')$.
By the assertion (2) of Lemma~\ref{lemma:selnew2}, $J\not>^{Q'} I$,
a contradiction.

Next, let $p=i$. Clearly, $I\not>^P J$ or $J\not>^P I$. Without loss 
of generality, let us assume that $J\not>^P I$. Let $R'=(\emptyset,
R_i[I,J])$, and let us define $P'=P\cup R'$ and $Q'=Q\cup R'$. Since
all rules in $R'$ have ranks $i$, $I,J\in\pref(P_{<i})$ implies
$I,J\in\pref(P'_{<i})$. Moreover, from $J\not>^P I$ it follows by
Lemma \ref{lemma:selnew2}(1) that $J\not>^{P'} I$. 
Let $R=(\emptyset,R_i[J])$. All rules in $R$ have rank $i$, 
$\dff^{P'}(I,J)=\dff^P(I,J)=i$ (the first equality holds by Lemma
\ref{lemma:selnew2}(1) and $J\not>^{P'} I$. Thus, it follows that 
$J\not>^{P'\cup R} I$. Moreover, for every $K\notin\{I,J\}$, if 
$\dff^{P'}(K,I)<i$, then $K\not>^{P'\cup R} I$ follows from $I\in 
\pref(P'_{<i})$. If $\dff^{P'}(K,I)\geq i$, then $K\not>^{P'\cup R} I$
follows from Lemma \ref{lemma:selnew2}(3). Thus, $I\in \pref(P'\cup R)$.
On the other hand, we recall that $\dff^Q(I,J)=q>i$.  Thus, 
$\dff^{Q'}(I,J)>i$, too (Lemma \ref{lemma:selnew2}(1)). It follows that
$J\geq^{Q'_{\leq i}} I$. Consequently, by Lemma \ref{lemma:selnew}(2), 
we have $J>^{Q'\cup R} I$. Thus, $I\notin\pref(Q'\cup R)$,
a contradiction.

It follows that $p>i$. To complete the proof of (3), we recall that 
$p\leq j$. Clearly, $I\not>^P J$ or $J\not>^P I$. Without loss of 
generality, let us assume that $J\not>^P I$. Let 
$R'=(\emptyset,R_i[I,J])$, and let us define $P'=P\cup R'$ and 
$Q'=Q\cup R'$. Let us assume that for some interpretation $K\notin\{I,J\}$,
$K >^{P'} I$. By Lemma \ref{lemma:selnew2}(3), it follows that 
$\dff^{P'}(I,K)< i$. Thus, $\dff^{P}(I,K)< i$, a contradiction with
$I\in \pi(P_{<i})$. Thus, for every interpretation $K\notin\{I,J\}$, 
$K\not>^{P'} I$ and, by the same argument, $K\not>^{P'} J$. 
Consequently, for every interpretation $K\notin\{I,J\}$, $K\not>^{P'_{<p}}
I$ and $K\not>^{P'_{<p}} J$. In addition, since $I,J\in\pi(P_{<i})$,
by Lemma \ref{lemma:selnew2}(1) we obtain that neither $I>^{P'_{<p}} J$
nor $J>^{P'_{<p}} I$. Thus, $I,J\in\pref(P'_{<p})$. In addition, by Lemma 
\ref{lemma:selnew2}(1), $\dff^{P'}(I,J)=\dff^P(I,J)=p$ and, since 
$J\not>^P I$, $J\not>^{P'} I$.

Let $R=(\emptyset,R_p[J])$. As $\dff^{P'}(I,J)=p$, $J\not>^{P'} I$,
and all rules in $R$ have rank $p$, it follows that $J\not>^{P'\cup 
R} I$. Moreover, for every $K\notin\{I,J\}$, if $\dff^{P'}(K,I)<i$, 
then $K\not>^{P'\cup R} I$ follows from $I\in \pref(P'_{<i})$. If 
$\dff^{P'}(K,I)\geq i$, then $K\not>^{P'\cup R} I$ follows from Lemma 
\ref{lemma:selnew2}(3) (and the definition of $P'$). Thus, $I\in
\pref(P'\cup R)$. On the other hand, we recall that $\dff^Q(I,J)=q>p$.  
Thus, $\dff^{Q'}(I,J)>p$, too (Lemma \ref{lemma:selnew2}(1)). It follows
that $J\geq^{Q'_{\leq p}} I$. Consequently, by Lemma \ref{lemma:selnew}(2),
we have $J>^{Q'\cup R} I$. Thus, $I\notin\pref(Q'\cup R)$,
a contradiction (we recall that $P\equiv^{s,[i,j]}Q$, $P'=P\cup(R\cup R')$, 
$Q'=Q\cup (R\cup R')$, and, as $p\leq j$, $R\cup R' \in \L^{s,[i,j]}$).
\end{proof}

\thrm{thm:simpleg}{}
{For every two CO (ASO, respectively) problems $P$ and $Q$, $P \equiv_g Q$
if and only if $P^g$ and $Q^g$ are strongly equivalent (that is,
$\Mod{P^g}=\Mod{Q^g}$ for CO problems, and $\HT{P^g} = \HT{Q^g}$
for ASO problems) and ${>^P_{{\Mod{P^g}}}}= {>^Q_{{\Mod{Q^g}}}}$.
}

\smallskip
\noindent
\begin{proof}
($\Leftarrow$)
The first assumption implies strong equivalence of the generators 
$P^g$ and $Q^g$ relative to the corresponding semantics (we recall 
that in the case of classical semantics, strong and standard equivalence 
coincide). It follows that for
every problem $R\in\L^g$, $\mu(P\cup R)=\mu(Q\cup R)$. Moreover, for each
semantics, $\mu(P\cup R)\subseteq\Md(P^g\cup R^g)\subseteq\Md(P^g)$ and,
similarly, $\mu(Q\cup R)\subseteq\Md(Q^g\cup R^g)\subseteq\Md(Q^g)$.
Since ${>^P_{\Mod{P^g}}}= {>^Q_{\Mod{Q^g}}}$, and $R\in\L^g$ does not
change preferences, we have ${>^{P\cup R}_{\mu(P^g\cup R^g)}} =
{>^{Q\cup R}_{\mu(Q^g\cup R^g)}}$. By Lemma \ref{lemma:simple},
$\pref(P\cup R)=\pref(Q\cup R)$.

\smallskip
\noindent
($\Rightarrow$) Let us assume that $P^g$ and $Q^g$ are not strongly
equivalent. Then, there is a problem $R\in\L^g$ such that $\mu(P\cup R)
\neq \mu(Q\cup R)$. Without loss of generality, we can assume that for
some interpretation $I$, $I\in\mu(P\cup R) \setminus \mu(Q\cup R)$. Let
us define a problem $T\in \L^g$ by setting $T=(\Pi[I],\emptyset)$, where
$\Pi[I]$ is as defined in Lemma \ref{lemma:as-elim1}. By that lemma, 
$\mu(P\cup R\cup T) =\{I\}$ and $\mu(Q\cup R\cup T) =\emptyset$. The 
former property implies that $I$ is necessarily preferred, that is
$I \in \pref(P \cup R \cup T)$, and the latter one implies that
$\pref(Q \cup R \cup T)=\emptyset$. This is a contradiction with the
assumption that $P \equiv_g Q$. Thus, $P^g$ and $Q^g$ are strongly
equivalent, that is, $\Md(P^g)=\Md(Q^g)$, in the case $P$ and $Q$ are
CO problems, and $\HT{P^g}=\HT{Q^g}$, in the case $P$ and $Q$ are
ASO problems.

Since $\HT{P^g} = \HT{Q^g}$ implies $\Md(P^g)=\Md(Q^g)$, the identity
holds in each of the two cases. Because of the equality, we will write 
$M$ for both $\Md(P^g)$ and $\Md(Q^g)$. It remains to show that $>^P_{M} =
>^Q_{M}$. Towards a contradiction, let us assume that there are $I,J
\in M$ that are in exactly one of these two relations; without loss of
generality we will assume that $I >^P J$ and $I \not>^Q J$. The former
identity implies, in particular, that $I\not=J$. Let $T=(\Pi[I,J],
\emptyset)$, where $\Pi[I,J]$ is a theory defined in
Lemma \ref{lemma:as-elim2}. By that lemma, $\mu(P\cup T) =
\mu(Q\cup T)=\{I,J\}$. Clearly, $J\notin\pref(P\cup T)$
and $J\in\pref(Q\cup T)$, contrary to our assumption that $P\equiv_g Q$.
\end{proof}

\nop{
\smallskip
\noindent
\begin{proof}
For the ``if'' direction,
let us assume 
$
\Mod{P^g}=\Mod{Q^g}\ \mbox{and}\ {>^P_{\Mod{P^g}}}=
{>^Q_{\Mod{Q^g}}},
$
and let us consider 
a generator $R\in\L^g$. We note that
$\acc(P\cup R)=\Mod{P^g\cup R^g}=\Mod{Q^g\cup R^g}=\acc(Q\cup R)$.
Indeed, the first and the third equalities follow from the fact that
we are considering classical optimization problems, and the second 
equality follows from the assumption $\Mod{P^g}=\Mod{Q^g}$ and the
property of classical logic that for every two theories $T$ and $T'$,
$\Mod{T\cup T'}=\Mod{T}\cap\Mod{T'}$.
Moreover, since ${>^P_{\Mod{P^g}}}= {>^Q_{\Mod{Q^g}}}$ and
$R\in\L^g$ does not change preferences, we have
${>^{P\cup R}_{\Mod{P^g}}} = 
{>^{Q\cup R}_{\Mod{Q^g}}}$. 

We now use the properties
$\Mod{Q^g\cup R^g}=\Mod{P^g\cup R^g}\subseteq  \Mod{P^g}=\Mod{Q^g}$,
$\acc(P\cup R)=\Mod{P^g\cup R^g}$, and $\acc(Q\cup R)=\Mod{Q^g\cup R^g}$,
to arrive at
${>^{P\cup R}_{\acc(P\cup R)}} =
{>^{Q\cup R}_{\acc(Q\cup R)}}$. Applying Lemma \ref{lemma:simple} 
we get $\pref(P\cup R) =\pref(Q\cup R)$, as needed.

For the ``only-if'' direction, let us first suppose that 
$\Mod{P^g}\neq \Mod{Q^g}$. Without loss of generality, 
we can assume
that there is  $I\in\Mod{P^g}\setminus \Mod{Q^g}$. We set $R=(\Pi[I],\emptyset)$,
where $\Pi[I]$ is as defined in Lemma~\ref{lemma:as-elim1}. %
By that lemma,
$\acc(P\cup R)=\Mod{P^g\cup R^g}=\{I\}$ while
$\acc(Q\cup R)=\Mod{Q^g\cup R^g}=\emptyset$. 
It follows that
$\pref(P\cup R)=\{I\}$ and 
$\pref(Q\cup R)=\emptyset$. Consequently, we have
$P \not \equiv_g Q$, a contradiction.

Thus, $\Mod{P^g}= \Mod{Q^g}$ holds. From now on in the proof, we write $M$ for 
 $\Mod{P^g}$ ($= \Mod{Q^g}$). Let us assume that
$>^P_M\;\neq\; >^Q_M$. It follows that there exist 
$I,J\in M$ that are in exactly one of the relations
$>^P_M$ and $>^Q_M$. Without loss of generality, we can assume
that $I>^P_M J$ and $I\not>^Q_M J$. 
We define $R=(\Pi[I,J],\emptyset)$, where $\Pi[I,J]$ is as described
in Lemma~\ref{lemma:as-elim2}.  %
We obtain 
$\acc(P\cup R)=\Mod{P^g\cup R^g}=\{I,J\}
=\Mod{Q^g\cup R^g}=\acc(Q\cup R)$. 
Moreover, since $R\in\L^g$, the identities 
$I >^{P\cup R}_{\acc(P\cup R)} J$ 
and  
$I \not >^{Q\cup R}_{\acc(Q\cup R)} J$ are maintained.
By definition, $J\in\pref(Q\cup R)$ but $J\notin \pref(P\cup R)$.
Thus, $\pref(P\cup R)\neq \pref(Q\cup R)$ and, consequently, $P \not
\equiv_g Q$. This contradiction completes the proof.
\end{proof}

\thrm{thm:char-aso-sequivgen}{}{
For every ranked ASO problems $P$ and $Q$, $P \equiv_g Q$ if and only
if $\HT{P^g} = \HT{Q^g}$ and ${>^P_{\Md(P^g)}} = {>^Q_{\Md(Q^g)}}$.
}

\smallskip
\noindent
\begin{proof}
($\Leftarrow$)
  Let $R \in \L^g$. We have 
  $\mu(P \cup R) = 
   \AS{P^g \cup R^g} = 
   \AS{Q^g \cup R^g} = 
   \mu(Q \cup R)$, the second equality ensured by the assumption
   $\HT(P^g) = \HT(Q^g)$, which implies strong equivalence of $P^g$ and
$Q^g$.
 Since $\HT(P^g) = \HT(Q^g)$, also $\Md(P^g) = \Md(Q^g)$. It follows that 
$\Md(P^g \cup R^g) = \Md(Q^g \cup R^g)$. Moreover, since $\Md(P^g\cup R^g)
\subseteq \Md(P^g)$ and $\Md(Q^g\cup R^g) \subseteq
 \Md(Q^g)$, we have $>^P_{\Md(P^g \cup R^g)} = >^Q_{\Md(Q^g \cup R^g)}$.  Finally,
	since $R^s=\emptyset$, 
$>^{P\cup R}_{\Md(P^g \cup R^g)} = >^{Q\cup R}_{\Md(Q^g \cup R^g)}$.
Thus, it follows that $\pref(P \cup R) = \pref(Q \cup R)$.

\smallskip
\noindent
($\Rightarrow$) Let us assume that $\HT(P^g) \neq \HT(Q^g)$. Then, there
is a problem $R\in\L^g$ such that $\AS{P^g \cup R^g} \neq \AS{Q^g \cup R^g}$.
Without loss of generality, we can assume that for some interpretation 
$I$, $I \in \AS{P^g \cup R^g} \setminus \AS{Q^g \cup R^g}$. Let us 
define a problem $T\in \L^g$ by setting $T=(\Pi[I],\emptyset)$, where 
$\Pi[I]$ is as defined in Lemma \ref{lemma:as-elim1}.
By that lemma, $\AS{P^g \cup R^g\cup T^g} =\{I\}$ and $\AS{Q^g \cup R^g 
\cup T^g} =\emptyset$. 
The former property implies that $I$ is necessarily preferred, that is
  $I \in \pref(P \cup R \cup R')$, 
and the latter one implies that $\pref(Q^g \cup R \cup R')=\emptyset$.
  This is a contradiction 
with the assumption that $P \equiv_g Q$. 
  Thus, $\HT(P^g) = \HT(Q^g)$ follows.

\smallskip
\noindent
  Since $\HT(P^g) = \HT(Q^g)$, we have $\Md(P^g)=\Md(Q^g)$. We will write
  $M$ for that set. It remains to show that $>^P_{M} = >^Q_{M}$. Towards
  a contradiction, let us assume that there are $I,J\in M$ that are in
  exactly one of 
  these two relations; without loss of generality we will assume that
  $I >^P J$ and $I \not>^Q J$. 
  Let $T=(\Pi[I,J],\emptyset)$, where $\Pi[I,J]$ is a theory defined in
  Lemma \ref{lemma:as-elim2}. By that lemma, $\AS{P^g \cup T^g} = 
  \AS{Q^g \cup T^g}=\{I',J'\}$. Since $t$ and $t'$ are fresh, we also
  have $I' >^P J'$, and $I' \neg>^Q J'$. Clearly, $J'\notin\pref(P\cup T)$
  and $J'\in\pref(Q\cup T)$, contrary to our assumption that $P\equiv_g Q$.
\end{proof}
}{}

\thrm{thm:cequivrankedprefeqcomb}{}
{
For every ranked CO (ASO, respectively) problems $P$ and $Q$, 
and every rank intervals $[i,j]$,
$P \equiv^{s,[i,j]}_g Q$ if and only if the following conditions hold:
\begin{enumerate}
\item $P^g$ and $Q^g$ are strongly equivalent (that is, $\Mod{P^g}=
\Mod{Q^g}$ for CO problems, and $\HT{P^g}=\HT{Q^g}$ for ASO problems)
\item ${>^P_{\Mod{P^g}}}={>^Q_{\Mod{Q^g}}}$
\item For every $I,J\in \Md(P^g)$ such that $i<\dff^P(I,J)$ or
$i<\dff^Q(I,J)$, $\dff^P(I,J)=\dff^Q(I,J)$ or both $\dff^P(I,J)>j$ and
$\dff^Q(I,J)>j$
\item ${>^{P_{<i}}_{\Mod{P^g}}}= {>^{Q_{<i}}_{\Mod{Q^g}}}$.
\end{enumerate}
}
\noindent
\begin{proof}
The condition (1) is equivalent to the property that $P^g$ and $Q^g$
are strongly equivalent relative to the corresponding semantics. Using
this observation, we will provide a single argument for the two versions
of the assertion.

\smallskip 
\noindent
($\Leftarrow$) By Proposition \ref{prop:ortho}, it suffices to prove that 
for every $R\in\L^g$, $P\cup R \equiv^{s,[i,j]} Q\cup R$. By our comment 
above, we have that $\mu(P\cup R)=\mu(Q\cup R)$ (we recall that $\mu(P)$ 
denotes 
the set of outcomes of an optimization problem $P$; $\mu(P)=\Md(P^g)$ in 
the case of CO problems, and $\mu(P)=\AS{P^g}$ in the case of ASO 
problems). Moreover, for each type of problems, we also have 
$\mu(P\cup R)\subseteq
\Mod{P^g\cup R^g}\subseteq \Mod{P^g}$ and, similarly, $\mu(Q\cup R)\subseteq
\Mod{Q^g\cup R^g}\subseteq\Mod{Q^g}$. Since all rules in $R$ have rank
at least $i$, by the condition (4) it follows that $>^{(P\cup R)_{<i}}_{\mu(P\cup R)} 
= >^{(Q\cup R)_{<i}}_{\mu(Q\cup R)}$.
By Lemma~\ref{lemma:simple}, %
we have $\pref((P\cup R)_{<i})=\pref((Q\cup R)_{<i})$ and so, the 
condition (1) of Theorem \ref{thm:sequivrankedprefeq} holds for $P\cup R$ 
and $Q\cup R$. Since $\pref((P\cup R)_{<i}) \subseteq \mu(P\cup R)
\subseteq\Mod{P^g}$, and since the corresponding inclusions hold for $Q$,
too, the conditions (2)--(3) of this theorem for $P$ and $Q$ imply the 
conditions (2)--(3) from Theorem \ref{thm:sequivrankedprefeq} for $P\cup R$ 
and $Q\cup R$. Thus, by Theorem \ref{thm:sequivrankedprefeq}, $P\cup R 
\equiv^{s,[i,j]} Q\cup R$.

\smallskip
\noindent
($\Rightarrow$) Let us assume that $P\equiv_g^{s,[i,j]} Q$. Then,
$P\equiv_g Q$ follows and, by Theorem \ref{thm:simpleg}, implies the 
appropriate version of the condition (1). Since $\HT{P^g}=\HT{Q^g}$ 
implies $\Mod{P^g}=\Mod{Q^g}$, for each of the two versions of the 
assertion we have $\Mod{P^g}=\Mod{Q^g}$. From now on in the proof, 
we write $M$ for $\Mod{P^g}$ and, because of the equality, also for 
$\Mod{Q^g}$.

Next, for interpretations $I,J\in M$, $I\not=J$, we define $R=(\Pi[I,J],
\emptyset)$, where $\Pi[I,J]$ is as in Lemma \ref{lemma:as-elim2}. Let 
us define $P_1=P\cup R$ and $Q_1=Q\cup R$. We have $P_1\equiv^{s,[i,j]}
Q_1$. Moreover, by Lemma \ref{lemma:as-elim2}, we also have that $\mu(P_1)=
\mu(Q_1)=\{I,J\}$.

To prove the condition (4), let us assume that $I>^{P_{<i}} J$. It 
follows that $I>^{P_1} J$ (we recall that $R$ contains no preference 
rules). Since $\mu(P_1)=\{I,J\}$, $J\notin\pref(P_1)$ and 
$I\in\pi(P_1)$.
By the assumption, $J\notin\pref(Q_1)$. Since $\mu(Q_1)=\{I,J\}$, we
have that $I>^{Q_1} J$. In particular, $I\in\pi(Q_1)$. 
If $\dff^Q(I,J)<i$ then, since $R$ has no 
preference rules, $I>^{Q_{<i}} J$.
Thus, let us assume that $\dff^Q(I,J)\geq i$ and let us define $R'=
(\emptyset, R_i[J])$, where $R_i[J]$ is as in Lemma \ref{lemma:selnew}. 
Since $I>^{P_{<i}} J$, $R$ has no preference rules and all preference
rules in $R'$ have rank $i$, it follows that $I>^{P_1\cup R'} J$. The
generator module in $R'$ is empty. It follows that $\mu(P_1\cup R')= 
\mu(Q_1\cup R')=\{I,J\}$. Thus, $J\notin\pref(P_1\cup R')$ and, 
consequently, $J\notin\pref(Q_1\cup R')$. It follows that $I>^{Q_1\cup
 R'} J$. Since $\dff^Q(I,J)\geq i$, $\dff^{Q_1}(I,J)\geq i$. 
Thus, $J\in\pref{(Q_1)_{<i}}$. By Lemma \ref{lemma:selnew}, $J\in\pref(Q_1
\cup R')$, a contradiction. The argument shows that $I>^{P_{<i}} J$ 
implies $I>^{Q_{<i}} J$. The converse implication follows by symmetry 
and so, the condition (4) holds.

To prove the condition (2), let us assume that $I>^{P} J$. If
$\dff^P(I,J)<i$, then $I>^{P_{<i}} J$ and, by (4), $I>^{Q_{<i}} J$.
Thus, $I>^{Q} J$. Let us assume then that $\dff^P(I,J) \geq i$. Since
$\mu(P_1)=\{I,J\}$ and since $I>^{P} J$ implies $I>^{P_1} J$, $J
\notin\pref(P_1)$. Thus, $J\notin\pref(Q_1)$. Since $\mu(Q_1)=\{I,J\}$,    
$I>^{Q_1} J$ and so, $I>^{Q} J$.

To prove the condition (3), without loss of generality we assume that
$i <\dff^P(I,J)$. Thus, we also have $i<\dff^{P_1}(I,J)$ and that 
$I,J\in\pref((P_1)_{<i})$, the latter follows from the properties that 
$\mu(P_1)=\{I,J\}$ and that $\dff^{P_1}(I,J)=\dff^P(I,J)>i$. 
Since $P_1\equiv^{s,[i,j]} Q_1$, the condition (3) of Theorem 
\ref{thm:sequivrankedprefeq} holds for $P_1$, $Q_1$, $I$ and $J$, that 
is, $\dff^{P_1}(I,J)=\dff^{Q_1}(I,J)$ or both $\dff^{P_1}(I,J)>j$ and
$\dff^{Q_1}(I,J)>j$. Consequently, $\dff^{P}(I,J)=\dff^{Q}(I,J)$ or both
$\dff^{P}(I,J)>j$ and $\dff^{Q}(I,J)>j$, that is, the condition (3) holds. 
\end{proof}

\thrm{thm:co-sel:simple}{}{
Given optimization problems $P$ and $Q$, 
deciding $P\equiv^{s} Q$ is $\CONP$-complete in case of CO-problems 
and $\PiP{2}$-complete in case of ASO-problems.
}

\smallskip
\noindent
\begin{proof}
By Corollary~\ref{cor:sequivrankedpref}, $P\equiv^{s} Q$
iff 
jointly
(1) $\mu(P)=\mu(Q)$,
(2) for every $I,J\in \mu(P)$, $\dff^P(I,J) =\dff^Q(I,J)$, and
(3) ${>^P_{\mu(P)}}={>^Q_{\mu(Q)}}$.

For membership, we consider the complementary problem, that is, deciding whether
(a) $\mu(P)\neq\mu(Q)$, or
(b) for some $I,J\in \mu(P)$, $\dff^P(I,J) \neq \dff^Q(I,J)$, or
(c) ${>^P_{\mu(P)}}\neq{>^Q_{\mu(Q)}}$,
and show that it is in $\NP$ for CO-problems and in $\SigmaP{2}$ for
ASO-problems. To do this, we show that a witness consisting of two
interpretations can be verified in polynomial time, which in case of
ASO-problems uses an $\NP$ oracle.

Consider therefore such a witness $(I,J)$ of two interpretations. We
first test for $I\in\mu(P)$, $I\in\mu(Q)$, $J\in\mu(P)$, and
$J\in\mu(Q)$. In case of CO problems, each of these tests corresponds
to model checking in classical propositional logic, which is
well-known to be feasible in polynomial time. In case of ASP problems,
we require one call each to an $\NP$ oracle, since answer-set checking
is $\CONP$-complete (Theorem~8 of \cite{PearceTW09}).  If $I\in \mu(P)
\Delta \mu(Q)$ or $J\in \mu(P) \Delta \mu(Q)$ (where $\Delta$ denotes
the symmetric difference), condition (a) is satisfied and the
algorithm accepts. If $I \notin \mu(P) \cap \mu(Q)$ or $I \notin
\mu(P) \cap \mu(Q)$, $(I,J)$ is no witness for the satisfaction of any
of the conditions (a), (b), (c), and the algorithm rejects.

At this point, both $I,J$ are known to be in $\mu(P) \cap
\mu(Q)$. Next, compute $p=\dff^P(I,J)$ and $q=\dff^Q(I,J)$. By
Lemma~\ref{lem:ptime-diff}, this can be done in polynomial time.  If
$p\neq q$, condition (b) is satisfied and the algorithm accepts. Next,
check for 
$I>^P J$ and $I>^Q J$, 
which by
Lemma~\ref{lem:ptime-pref-check} can be done in polynomial time.
Since $I,J\in\mu(P)\cap\mu(Q)$, we have in fact checked for 
$I>^P_{\mu(P)} J$ and $I>^Q_{\mu(Q)} J$.
In case exactly one of the two holds, we accept, otherwise we
reject.  This yields a nondeterministic algorithm to decide $P\not
\equiv^{s} Q$ which runs in polynomial time (in case of ASO-problem
with access to an $\NP$-oracle), and yields the desired upper
bounds.

For hardness, we observe that in case 
of problems with empty selectors, 
$\equiv^{s}$  coincides with 
equivalence of propositional theories in case of CO-problems, 
and with 
equivalence of equilibrium theories in case of ASO-problems.
The former is well known to be $\CONP$-hard, the latter
is $\PiP{2}$-hard (Theorem~11 of \cite{PearceTW09}).
\end{proof}

\thrm{thm:co-sel:ranked}{}{
Given optimization problems $P$ and $Q$ and rank interval $[i,j]$,
deciding $P\equiv^{s,[i,j]} Q$ is $\PiP{2}$-complete in case of CO-problems 
and $\PiP{3}$-complete in case of ASO-problems.
}

\smallskip
\noindent
\begin{proof}
By Theorem~\ref{thm:sequivrankedprefeq}, $P\equiv^{s,[i,j]} Q$
iff 
jointly
(1) $\pref(P_{<i})=\pref(Q_{<i})$,
(2) ${>^P_{\pref(P_{<i})}}={>^Q_{\pref(Q_{<i})}}$, and
(3) for every $I,J\in \pref(P_{<i})$ such that $i<\dff^P(I,J)$ or
$i<\dff^Q(I,J)$, $\dff^P(I,J)=\dff^Q(I,J)$ or both $\dff^P(I,J)>j$ and
$\dff^Q(I,J)>j$.

For membership, we consider the complementary problem, that is, deciding whether
(a) $\pref(P_{<i}) \neq \pref(Q_{<i})$,
(b) ${>^P_{\pref(P_{<i})}} \neq {>^Q_{\pref(Q_{<i})}}$, or
(c) for some $I,J\in \pref(P_{<i})$ such that $i<\dff^P(I,J)$ or
$i<\dff^Q(I,J)$, $\dff^P(I,J)\neq\dff^Q(I,J)$ and it holds that $\dff^P(I,J)\leq j$ or
$\dff^Q(I,J)\leq j$.
We show that it is in $\SigmaP{2}$ for CO-problems and in $\SigmaP{3}$
for ASO-problems by showing that a witness consisting of two
interpretations can be verified in polynomial time, using an $\NP$
oracle when dealing with a CO-problem and a $\SigmaP{2}$ oracle when
dealing with an ASO-problem.

Consider a witness $(I,J)$ of two interpretations. We first test for
$I\in\pref(P_{<i})$, $I\in\pref(Q_{<i})$, $J\in\pref(P_{<i})$, and
$J\in\pref(Q_{<i})$. By Lemma~\ref{lem:conp-prefcheck-member}, this can be
done by a call to an $\NP$ oracle for CO-problems, and by
Lemma~\ref{lem:pip2-prefcheck-member} it can be done by a call to a
$\SigmaP{2}$ oracle for ASO-problems. If $I\in\pref(P_{<i}) \Delta
\pref(Q_{<i})$ or $J\in\pref(P_{<i}) \Delta \pref(Q_{<i})$ ($\Delta$
again denotes the symmetric difference), condition (a) is satisfied
and the algorithm accepts. If $I \notin \pref(P_{<i}) \cap
\pref(Q_{<i})$ or $I \notin \pref(P_{<i}) \cap \pref(Q_{<i})$, $(I,J)$
is no witness for the satisfaction of any of the conditions (a), (b),
(c), and the algorithm rejects.

Now both $I,J$ are known to be in $\pref(P_{<i}) \cap \pref(Q_{<i})$.
Next, check for $I>^P J$ and $I>^Q J$ in polynomial time (by virtue of
Lemma~\ref{lem:ptime-pref-check}), which is equivalent to checks for
$I>^P_{\pref(P_{<i})} J$ and $I>^Q_{\pref(Q_{<i})} J$.
In
case exactly one of the two holds, we accept.  Next, compute
$p=\dff^P(I,J)$ and $q=\dff^Q(I,J)$. By Lemma~\ref{lem:ptime-diff},
this can be done in polynomial time.  Then verify that $i<p$ or $i<q$,
furthermore $p\neq q$, and $p\leq j$ or $q\leq j$. If the verification
succeeds, accept, otherwise reject.
This yields a nondeterministic algorithm to decide $P\not
\equiv^{s} Q$ which runs in polynomial time (in case of ASO-problem
with access to an $\NP$-oracle), and yields the desired upper
bounds.

For the hardness part, we start with the case of CO problems. 
Therefore, we reduce the following problem to sel-equivalence:
given two propositional theories $S$ and $T$, do they possess
the same minimal models. This problem is known to 
be $\PiP{2}$-complete 
(for instance, equivalence for 
positive disjunctive programs is known to $\PiP{2}$-complete, see 
e.g.\ 
Theorem~6.15 in
\cite{efw04},
which means testing whether two propositional formulas of a particular class have the same minimal models).
The problem remains hard if $S$ and $T$ are in negation normal form 
over the same alphabet.
Given a theory $T$ we construct
a CO problem 
$P_T$ such that the elements in $\pi(P_T)$
are in a one-to-one correspondence to the minimal models of $T$.
We adapt a construction used in \cite{bnt-unpublished}.
Given $T$ (over atoms $U$),
we construct 
$$
P^g_T = T[\neg u/u'] \cup \{  u \leftrightarrow \neg u' \mid u \in U\},
$$
where $T[\neg u/u']$ stands for replacing all $\neg u$ by $u'$ in $T$,
and 
$$
P^s_T = \{ u' > u \leftarrow \mid u \in U\}.
$$
Our first observation is that each outcome of $P_T$ must be of the
form $X \cup \{y' \mid y\in U \setminus X\}$ where $X\subseteq U$. So
for each interpretation $I\subseteq U$ we can associate $I^+ = I\cup
\{y' \mid y\in U \setminus I\}$. It is clear that $I \models \neg u$ if and only if
$I^+ \models u'$, and hence also $I \models T$ if and only if $I^+
\models T[\neg u/u']$. Hence there is a one-to-one mapping between
models $M$ of $T$ and outcomes $M^+$ of $P_T$.

Now assume that $M^+ \in \pref(P_T)$. Then $M \models T$, and for any
$N\subsetneq M$ we can show that $N \not \models T$: Indeed if $N
\models T$ would hold, then $N^+ \models P^g_T$ and for rules $r\in
P^s_T$ of the form $u' > u$ where $u\in M\setminus N$ we obtain
$v_{N^+}(r) = 1 < 2 = v_{M^+}(r)$ and for all other rules $r'$ in
$P^s_T$ we have $v_{N^+}(r) = 2 = v_{M^+}(r)$, hence $N^+ >^{P_T}
M^+$, contradicting $M^+ \in \pref(P_T)$.

Assume that $M$ is a minimal model of $T$. Then $M^+ \in \mu(P_T)$ and
for all $N^+ \in \mu(P_T)$ we can show that $N^+ >^{P_T} M^+$ does not
hold (implying $M^+ \in \pref(P_T)$): If $N^+ >^{P_T} M^+$ would hold,
then $v_{N^+}(r) < v_{M^+}(r)$ for at least one $r\in P^s_T$ and
$v_{N^+}(r) \leq v_{M^+}(r)$ for all $r'\in P^s_T$. This of course
implies $N \subsetneq M$ and since $N\models T$ it contradicts the
assumption that $M$ is a minimal model of $T$.

We thus have that $M^+ \in \pref(P_T)$ if and only if $M$ is a minimal
model of $T$. Moreover, for $S$ and $T$ over $U$ it follows that $S$
and $T$ have the same minimal models if and only if $P_S
\equiv^{s,\geq 2} P_T$. Indeed, for any $R \in \L^{s,\geq 2}$ it is
easy to verify that $\pref(P_s) = \pref(P_S \cup R)$ and $\pref(P_T) =
\pref(P_T \cup R)$, since for no $I,J \in \mu(P_S)$ it holds that $I
\approx^{P_S} J$ and neither for any $I,J \in \mu(P_T)$ it holds that
$I \approx^{P_T} J$.

Concerning hardness for ASO problems, we can 
use a similar idea. However, we shall use the following 
problem: given two open QBFs
$\forall Y \phi(X,Y)$,
$\forall Y \psi(X,Y)$, do these two QBFs possess 
the same minimal models.
By Lemma~\ref{lem:compl-qbf-min-mod}, this problem is
$\PiP{3}$-hard.
We can assume that $\phi$ and $\psi$ are in negation normal form. 
The reduction then combines the idea 
from above with the 
reduction for general ASP consistency \cite{EiterG95}.
More precisely, 
we construct $P_\phi$  for a given $\phi(X,Y)$ as follows:
\begin{eqnarray*}
P^g_\phi 
&=& \{ z \vee z' \mid z\in X\cup Y\}\cup\\
&& \{ (y\wedge y')\rightarrow w, w\rightarrow y, w\rightarrow y'  \mid y\in Y\}\cup\\
&& \{ \phi[\neg z/z'] \rightarrow w,
\neg w \rightarrow w \},
\end{eqnarray*}
where $\phi[\neg z/z']$ stands for replacing all $\neg z$ by $z'$ in $\phi(X,Y)$.
For the selector we set
$$
P^s_\phi = \{ x' > x \leftarrow \mid x \in X\}.
$$ 
Any equilibrium model $M$ of $P^g_\phi$ must contain $w$ (otherwise
$\neg w \rightarrow w$ would be unsatisfied), and it must also contain
all of $\{y,y' \mid y \in Y\}$ (otherwise $w\rightarrow y$,
$w\rightarrow y'$ would be unsatisfied); let $W$ denote $\{y,y' \mid y
\in Y\} \cup \{w\}$, the set contained in each equilibrium
model. Moreover, each equilibrium model $M$ must be of the form $V
\cup \{z' \mid z \in X \setminus V\} \cup W$: clearly one of $x$ and
$x'$ must hold for each $x \in X$ to satisfy $v \vee v'$, but not
both, as otherwise $\langle M\setminus \{v\},M\rangle \models_{HT}
P^g_\phi$ as well. We can therefore associate each interpretation $I
\subseteq X$ for $\forall Y\phi(X,Y)$ to exactly one potential
equilibrium model $I^+ = I \cup \{x' \mid x \in X \setminus I\} \cup
W$. Moreover, we can show that $I$ satisfies $\forall Y\phi(X,Y)$ if
and only if $I^+$ is an equilibrium model of $P^g_\phi$: If $I$
satisfies $\forall Y\phi(X,Y)$, then clearly $\langle I^+,I^+ \rangle
\models_{HT} P^g_\phi$ and for all $J \subsetneq I^+$ it holds that
$\langle J,I^+ \rangle \not \models_{HT} P^g_\phi$; if $I$ does not
satisfy $\forall Y\phi(X,Y)$ then there exists some $J \subseteq Y$
such that $I\cup J \not \models \phi(X,Y)$, therefore even if $\langle
I^+,I^+ \rangle \models_{HT} P^g_\phi$, also $\langle I^+ \setminus
(\{y \mid y \notin J\} \cup \{y' \mid y \in J\} \cup \{w\}) ,I^+
\rangle \models_{HT} P^g_\phi$ (the first component of the HT
interpretation is of the form $I \cup \{x' \mid x \in X \setminus I\}
\cup J \cup \{y' \mid y \in Y \setminus J\}$ and hence does not
satisfy $\phi[\neg z/z']$) and hence in this case $I^+$ is not an
equilibrium model. So there is a one-to-one mapping between models $M$ of $\forall Y\phi(X,Y)$ and outcomes $M^+$ of $P_\phi$.

Once this is established, we can reason as for CO-problems concerning
minimality. Assume first that $M^+ \in \pref(P_\phi)$. Then $M$
satisfies $\forall Y\phi(X,Y)$, and for any $N\subsetneq M$ we can
show that $N$ does not satisfy $\forall Y\phi(X,Y)$: Indeed if $N$
would satisfy $\forall Y\phi(X,Y)$, then $N^+ \in \mu(P_\phi)$ and for
rules $r\in P^s_\phi$ of the form $u' > u$ where $u\in M\setminus N$
we obtain $v_{N^+}(r) = 1 < 2 = v_{M^+}(r)$ and for all other rules
$r'$ in $P^s_\phi$ we have $v_{N^+}(r) = 2 = v_{M^+}(r)$, hence $N^+
>^{P_\phi} M^+$, contradicting $M^+ \in \pref(P_\phi)$.  Assume next
that $M$ is a minimal model of $\forall Y\phi(X,Y)$. Then $M^+ \in
\mu(P_\phi)$ and for all $N^+ \in \mu(P_\phi)$ we can show that $N^+
>^{P_\phi} M^+$ does not hold (implying $M^+ \in \pref(P_\phi)$): If
$N^+ >^{P_\phi} M^+$ would hold, then $v_{N^+}(r) < v_{M^+}(r)$ for at
least one $r\in P^s_\phi$ and $v_{N^+}(r) \leq v_{M^+}(r)$ for all
$r'\in P^s_\phi$. This of course implies $N \subsetneq M$ and since
$N$ satisfies $\forall Y\phi(X,Y)$ it contradicts the assumption that
$M$ is a minimal model of $\forall Y\phi(X,Y)$. We thus have that $M^+
\in \pref(P_\phi)$ if and only if $M$ is a minimal model of $\forall
Y\phi(X,Y)$.  Moreover, for $\phi$ and $\psi$ over $X\cup Y$ it
follows that $\forall Y\phi(X,Y)$ and $\forall Y\psi(X,Y)$ have the
same minimal models if and only if $P_\phi \equiv^{s,\geq 2}
P_\psi$. Indeed, for any $R \in \L^{s,\geq 2}$ it is easy to verify
that $\pref(P_\phi) = \pref(P_\phi \cup R)$ and $\pref(P_\psi) =
\pref(P_\psi \cup R)$, since for no $I,J \in \mu(P_\phi)$ it holds
that $I \approx^{P_\phi} J$ and neither for any $I,J \in \mu(P_\psi)$
it holds that $I \approx^{P_\psi} J$.
\end{proof}

\thrm{thm:co-gen}{}{
Given two CO (ASO, respectively) problems $P$ and $Q$, deciding $P \equiv_g Q$
is $\CONP$-complete.
}

\smallskip
\noindent
\begin{proof}
By Theorem~\ref{thm:simpleg}, $P \equiv_g Q$ if and only if
${>^P_{{\Mod{P^g}}}}= {>^Q_{{\Mod{Q^g}}}}$ and for CO problems
$\Mod{P^g}=\Mod{Q^g}$ and for ASO problems $\HT{P^g} =
\HT{Q^g}$. Hardness immediately follows from $\CONP$-completeness of
testing $\Mod{P^g}=\Mod{Q^g}$ and $\HT{P^g} = \HT{Q^g}$
\cite{lin-2002-kr}.  Membership can be shown by verifying a witness
$(I,J)$ for the complementary problem (composed of two
interpretations) in polynomial time: Check in polynomial time whether
$I \in \Mod{P^g}$, $I \in \Mod{Q^g}$, $J \in \Mod{P^g}$, $J \in
\Mod{Q^g}$. If all checks succeed, check for $I >^P_{\Mod{P^g}} J$ and
$I >^Q_{\Mod{Q^g}} J$ (by Lemma~\ref{lem:ptime-pref-check} 
$I >^P J$ and $I >^Q J$ can be checked in
polynomial time
and we have $I,J\in\Mod{P^g}$ and $I,J\in\Mod{Q^g}$); 
if one of them succeeds and the other does not,
accept; otherwise, for CO problems test for $I \in \Mod{P^g} \Delta
\Mod{Q^g}$ or $J \in \Mod{P^g} \Delta \Mod{Q^g}$ ($\Delta$ again
denotes the symmetric difference) and accept if so, reject otherwise;
for ASO problems, check for $\langle I,J\rangle \models_{HT} P^g$ and
$\langle I,J\rangle \models_{HT} Q^g$ (in polynomial time); if one of
them succeeds and the other does not, accept, reject otherwise.
\end{proof}

\thrm{thm:co-combined}{}{
Given ranked CO (ASO, respectively) problems $P$ and $Q$,
and rank interval $[i,j]$,
deciding $P \equiv^{s,[i,j]}_g Q$ is $\CONP$-complete.
}

\smallskip
\noindent
\begin{proof}
By Theorem~\ref{thm:cequivrankedprefeqcomb},
$P \equiv^{s,[i,j]}_g Q$ if and only if jointly
(1) $P^g$ and $Q^g$ are strongly equivalent (that is, $\Mod{P^g}=
\Mod{Q^g}$ for CO problems, and $\HT{P^g}=\HT{Q^g}$ for ASO problems),
(2) ${>^P_{\Mod{P^g}}}={>^Q_{\Mod{Q^g}}}$,
(3) for every $I,J\in \Md(P^g)$ such that $i<\dff^P(I,J)$ or
$i<\dff^Q(I,J)$, $\dff^P(I,J)=\dff^Q(I,J)$ or both $\dff^P(I,J)>j$ and
$\dff^Q(I,J)>j$, and
(4) ${>^{P_{<i}}_{\Mod{P^g}}}= {>^{Q_{<i}}_{\Mod{Q^g}}}$.

For membership, we consider the complementary problem, that is, deciding whether
(a) $P^g$ and $Q^g$ are not strongly equivalent (that is, $\Mod{P^g} \neq
\Mod{Q^g}$ for CO problems, and $\HT{P^g} \neq \HT{Q^g}$ for ASO problems), or
(b) ${>^P_{\Mod{P^g}}} \neq {>^Q_{\Mod{Q^g}}}$,
(c) for some $I,J\in \Md(P^g)$ such that $i<\dff^P(I,J)$ or
$i<\dff^Q(I,J)$, $\dff^P(I,J) \neq \dff^Q(I,J)$ and it holds that $\dff^P(I,J)\leq j$ or
$\dff^Q(I,J)\leq j$, and
(d) ${>^{P_{<i}}_{\Mod{P^g}}}\neq {>^{Q_{<i}}_{\Mod{Q^g}}}$.
As in previous proofs, we will give a polynomial time algorithm for
verifying a witness consisting of two interpretations.

Consider therefore such a witness $(I,J)$ of two interpretations. We
first test for $I\in \Mod{P^g}$, $I\in \Mod{Q^g}$, $J\in \Mod{P^g}$,
and $J\in \Mod{Q^g}$. For CO problems, if $I\in \Mod{P^g} \Delta
\Mod{Q^g}$ or $J\in \Mod{P^g} \Delta \Mod{Q^g}$ (where $\Delta$ again
denotes the symmetric difference), then accept immediately. For ASO
problems, test for $\langle I,J \rangle \models_{HT} P^g$ and $\langle
I,J \rangle \models_{HT} Q^g$ (in polynomial time), and if only one of
them holds, then accept. If $I \notin \Mod{P^g} \cap \Mod{Q^g}$ or
$J\in \Mod{P^g} \cap \Mod{Q^g}$, reject.

At this point, both $I,J$ are known to be in $\Mod{P^g} \cap
\Mod{Q^g}$. Thus, checking for $I>^P_{\Mod{P^g}} J$ and $I>^Q_{\Mod{Q^g}}
J$ 
can be done in polynomial time, c.f.\ 
Lemma~\ref{lem:ptime-pref-check}. 
In case exactly one of the two holds, we
accept. Next, compute $p=\dff^P(I,J)$ and $q=\dff^Q(I,J)$. By
Lemma~\ref{lem:ptime-diff}, this can be done in polynomial time. Then
verify that $i<p$ or $i<q$, furthermore $p\neq q$, and $p\leq j$ or
$q\leq j$. If the verification succeeds, accept. Finally, check for
$I>^{P_{<i}}_{\Mod{P^g}} J$ and $I>^{Q_{<i}}_{\Mod{Q^g}} J$, which by
Lemma~\ref{lem:ptime-pref-check} can be done in polynomial time. In
case exactly one of the two holds, we accept, otherwise reject.

Hardness follows directly from $\CONP$-completeness of deciding
equivalence between two propositional theories and of deciding strong
equivalence between two equilibrium theories \cite{lin-2002-kr}.
\end{proof}

}

\end{document}